\newsavebox{\@brx}
\newcommand{\llangle}[1][]{\savebox{\@brx}{\(\m@th{#1\langle}\)}%
  \mathopen{\copy\@brx\kern-0.5\wd\@brx\usebox{\@brx}}}
\newcommand{\rrangle}[1][]{\savebox{\@brx}{\(\m@th{#1\rangle}\)}%
  \mathclose{\copy\@brx\kern-0.5\wd\@brx\usebox{\@brx}}}
\def\parsept#1#2#3{%
    \def\nospace##1{\zap@space##1 \@empty}%
    \def\rawparsept(##1,##2){%
        \edef#1{\nospace{##1}}%
        \edef#2{\nospace{##2}}%
    }%
    \expandafter\rawparsept#3%
}
\DeclareMathAlphabet{\mymathbb}{U}{BOONDOX-ds}{m}{n}
\newcommand{\listingcaption}[1]%
{%
\refstepcounter{lstlisting}\hfill%
Listing \thelstlisting: #1\hfill
}%
\newcolumntype{b}{X}
\newcolumntype{s}{>{\hsize=.7\hsize}X}
\def\journal #1, #2, #3, 1#4#5#6{{\sl #1~}{\bf #2}, #3 (1#4#5#6) }
\newcommand{\eqname}[1]{\stepcounter{equation}\tag{\theequation : #1}}
\renewcommand{\v}[1]{{\bf #1}}
\newcommand{\ra}[1]{\renewcommand{\arraystretch}{#1}}
\newcommand{\Tensors}{{\mathcal{T}}}
\newcommand{\cc}{cc({\mathcal{N}_{\rm IS}(G)})}
\newcommand{\material}[1]{\iffalse[{\bf  \color{cyan}{Material: #1}}]\fi}
\newcounter{example}
\newenvironment{example}[1][]{\refstepcounter{example}\par\medskip
   \noindent \textbf{Example~\theexample. #1} \rmfamily}{\medskip}
\author{
Jin-Guo Liu\thanks{Department of Physics, Harvard University, Cambridge, Massachusetts 02138, USA; QuEra Computing Inc., 1284 Soldiers Field Road, Boston, MA, 02135, USA  
    (\email{jinguoliu@g.harvard.edu}).}
\and {Xun Gao}\thanks{Department of Physics, Harvard University, Cambridge, Massachusetts 02138, USA (\email{xungao@g.harvard.edu}, contributed equally with Jin-Guo Liu to this work.).}
\and Madelyn Cain\footnotemark[2]
\and Mikhail D. Lukin\footnotemark[2]
\and Sheng-Tao Wang\thanks{QuEra Computing Inc., 1284 Soldiers Field Road, Boston, MA, 02135, USA}
}
\title{Computing solution space properties of combinatorial optimization problems via generic tensor networks}
\begin{document}

\maketitle

\begin{abstract}
We introduce a unified framework to compute the solution space properties of a broad class of combinatorial optimization problems. These properties include finding one of the optimum solutions, counting the number of solutions of a given size, and enumeration and sampling of solutions of a given size. Using the independent set problem as an example, we show how all these solution space properties can be computed in the unified approach of generic tensor networks. We demonstrate the versatility of this computational tool by applying it to several  examples, including computing the entropy constant for hardcore lattice gases, studying the overlap gap properties, and analyzing the performance of quantum and classical algorithms for finding maximum independent sets.
\end{abstract}

\begin{keywords}
solution space property, tensor networks, maximum independent set, independence polynomial, generic programming, combinatorial optimization 
\end{keywords}

\begin{AMS}
  15A69, 05C31, 14N10
\end{AMS}

\section{Introduction}

An important class of problems in graph theory and combinatorial optimization  can be formulated as satisfiability problems involving constraints specified over a vertex and its neighborhood.
These problems include, for example, the independent set problem, the cutting problem, dominating set, set packing, set covering, vertex coloring, K-SAT,  the clique problem, and the vertex cover problem~\cite{Moore2011}.
These problems have a wide range of applications in scheduling, logistics,
wireless networks and telecommunication, and computer vision, among others~\cite{Butenko2003, Wu2015}.
Finding an optimum solution for these problems is typically NP-hard in the worst case~\cite{Hastad1996}.

In this Article, we introduce a unified framework to compute a broad class of properties 
associated with the solutions of these problems, beyond just finding an optimum solution.
We call them \textit{solution space properties}. In practice, these can be much harder to compute (corresponding e.g.\ to \#P-complete class~\cite{Moore2011}). However, these properties can be crucial for understanding detailed properties of hard combinatorial optimization problems.
For example, for the independent set problem, these \textit{solution space properties} can include not only the maximum or minimum independent set size but also the number of sets at a given size, enumeration of all sets at a given size, and direct sampling of such sets when they are too large to be fit into memory.
They can be used to understand the hardness of finding an optimum solution for a given problem instance and the performance of a specific solver.  
For example, the number of configurations at different sizes can inform how likely a simulated annealing algorithm will be trapped in local minima at certain sizes~\cite{Xu2018}.
The pair-wise Hamming distance distribution of configurations at a given size can indicate the presence or absence of the overlap gap property~\cite{Gamarnik2013, Gamarnik2019}, which can be used to bound the performance of local optimization algorithms.
In a recent experiment based on a Rydberg atom array quantum computer, 
the counting and the configuration space connectivity information was used to find maximum independent set (MIS) problem instances that are hard for simulated annealing and to evaluate the corresponding quantum algorithm performance~\cite{Ebadi2022}.
The need to understand these important aspects of combinatorial optimization motivates us to find methodologies to compute these solution space properties.

To this end, we show how to obtain all of these seemingly unrelated properties in a unified approach using
\textit{generic tensor networks}. Tensor networks are a computational model widely used in condensed matter physics~\cite{Orus2014}, quantum computing~\cite{Markov2008}, big data~\cite{Cichocki2014}, mathematics~\cite{Oseledets2011} and combinatorial optimization~\cite{Biamonte2015, Biamonte2017,Kourtis2019}.
They are also known as the sum-product networks in probabilistic modeling~\cite{Bishop2006} or \texttt{einsum} in linear algebra libraries such as NumPy~\cite{Harris2020}.
Recent progress in simulating quantum circuits with tensor networks~\cite{Gray2021, Pan2021, Kalachev2021} makes it possible to contract a randomly structured sparse tensor network with up to thousands of tensors in a reasonable time.
In previous studies, the data types of the tensor elements are typically restricted to standard number types such as real numbers and complex numbers.
Here, we extend to \textit{generic tensor networks} by generalizing the tensor element data types to any type that has the algebraic structure of a commutative semiring.
In what follows, for clarity of presentation, we focus on the independent set problem in the main text, and show how to compute the solution space properties for other combinatorial optimization problems in \Cref{sec:otherproblems}. The latter includes cutting, matching, vertex coloring, satisfiability,  dominating set,  set packing,  set covering, and the clique problem.

The paper is organized as follows.
We first introduce the basic concepts of tensor networks and generic programming in \Cref{sec:tn} and \Cref{sec:generic}.
Then we show how to reduce the independent set problem to a tensor network contraction problem in \Cref{sec:tnmap}. 
Subsequently, we explain how to engineer the element types to compute various solution space properties in \Cref{sec:counting}, \Cref{sec:enumeration}, and \Cref{sec:weighted}.
Lastly, we provide three example applications in \Cref{sec:examples} to demonstrate the versatility of our tool.
A benchmark to demonstrate the performance of our algorithms can be found in both the \Cref{sec:benchmark} and the code repository~\cite{GenericTensorNetworks}.

\section{Tensor networks}\label{sec:tn}
A tensor network is a multi-linear map from a collection of labelled tensors $\mathcal{T}$ to an output tensor.
It is formally defined as follows.
\begin{definition}[Tensor Network~\cite{Cirac2021, Orus2014}]
    A tensor network is a multi-linear map specified by a triple of $\mathcal{N} = (\Lambda, \Tensors, \boldsymbol{\sigma}_o)$,
    where $\Lambda$ is a set of symbols (or labels),
    $\Tensors = \{T^{(1)}_{\boldsymbol{\sigma}_1}, T^{(2)}_{\boldsymbol{\sigma}_2}, \ldots, T^{(M)}_{\boldsymbol{\sigma}_M}\}$ is a set of tensors as the inputs,
    and $\boldsymbol{\sigma}_o$ is a string of symbols labelling the output tensor.
    Each $T^{(k)}_{\boldsymbol{\sigma_k}} \in \mathcal{T}$ is labelled by a string $\boldsymbol{\sigma}_k \in \Lambda^{r \left(T^{(k)} \right)}$, where $r \left(T^{(k)} \right)$ is the rank of $T^{(k)}$.
    The multi-linear map or the \textbf{contraction} on this triple is
    \begin{equation}
        O_{\boldsymbol{\sigma}_o} = \sum_{\Lambda \setminus \sigma_o} \prod_{k=1}^{M} T^{(k)}_{\boldsymbol{\sigma_k}},
    \end{equation}
    where the summation runs over all possible configurations over the set of symbols absent in the output tensor.
\end{definition}
For example, the matrix multiplication can be specified as a tensor network
\begin{equation}
\mathcal{N}_{\rm matmul} = \left(\{i,j,k\}, \{A_{ij}, B_{jk}\}, ik\right),
\end{equation}
where $A_{ij}$ and $B_{jk}$ are input matrices (two-dimensional tensors), and $(i,k)$ are labels associated to the output.
The contraction is defined as $O_{ik} = \sum_j A_{ij}B_{jk}$, where the subscripts are for tensor indexing, and the tensor dimensions with the same label must have the same size.
The graphical representation of a tensor network is an open hypergraph that having open hyperedges, where an input tensor is mapped to a vertex and a label is mapped to a hyperedge that can connect an arbitrary number of vertices, while the labels appearing in the output tensor are mapped to open hyperedges.
Our notation is a minor generalization of the standard tensor network notation used in physics as we do not restrict the number of times a label can appear in the tensors to two. 
While this generalized form is equivalent in representation power, it can have smaller contraction complexity as will be illustrated in \Cref{sec:tensorbad}.

\begin{example}
\begin{equation}
\begin{split}
    \Lambda &= \{i,j,k,l,m\},\\
    \mathcal{T} &= \{A_{jkm}, B_{mil}, V_{jm}\},\\
    \boldsymbol{\sigma}_o &= ijk,\\
    \end{split}
\end{equation}
is a tensor network that can be evaluated as $O_{ijk} = \sum_{ml}A_{jkm} B_{mil} V_{jm}$.
Its hypergraph representation is shown below, where we use different colors to represent different hyperedges.

\vspace{1em}
\centerline{\begin{tikzpicture}[
    dot/.style = {circle, fill, minimum size=#1,
                inner sep=0pt, outer sep=0pt},
    dot/.default = 6pt  
                    ]  
    \def\dx{0};
    \def\r{0.5cm}
    \def\sr{0.15cm}
    \def\ax{0}
    \def\ay{0}
    \def\bx{1}
    \def\by{1}
    \def\cx{1}
    \def\cy{-1}
    \node[color=white,fill=black,dot=\r] at (\ax+\dx,\ax) (a) {A};
    \node[color=white,fill=black,dot=\r] at (\bx+\dx,\by) (b) {B};
    \node[color=white,fill=black,dot=\r] at (\cx+\dx,\cy) (v) {V};
    \node[color=transparent,draw=transparent,dot=0] at (\ax-1,\by) (o1) {};
    \node[color=transparent,draw=transparent,dot=0] at (\ax-1.5,\ay) (o2) {};
    \node[color=transparent,draw=transparent,dot=0] at (\ax-1,\cy) (o3) {};
    \node at (\ax-0.8,\ay) (k) {k};
    \node at (\bx+0.4,\ay) (m) {m};
    \node at (\ax,\cy) (j) {j};
    \node at (\bx+1,\by) (l) {l};
    \node at (\bx-1,\by) (i) {i};
    \draw[color=blue,thick] (i) -- (b);
    \draw[color=blue,thick] (i) -- (o1);
    \draw[color=cyan,thick] (l) -- (b);
    \draw[color=violet,thick] (k) -- (a);
    \draw[color=violet,thick] (k) -- (o2);
    \draw[color=black,thick] (b) -- (m);
    \draw[color=black,thick] (m) -- (a);
    \draw[color=black,thick] (m) -- (v);
    \draw[color=red,thick] (a) -- (j);
    \draw[color=red,thick] (v) -- (j);
    \draw[color=red,thick] (o3) -- (j);
\end{tikzpicture}}
\end{example}

\section{Generic programming tensor contractions}\label{sec:generic}
In previous works relating tensor networks and combinatorial optimization problems~\cite{Kourtis2019, Biamonte2017}, the element types in the tensor networks are limited to standard number types such as floating-point numbers and integers.
We propose to use more general element types with a certain algebraic property.
With different data types, we can solve different problems within the same unified framework.
This idea of using the same program for different purposes is also called generic programming in computer science:

\begin{definition}[Generic programming ~\cite{Stepanov2014}]
   Generic programming is an approach to programming that focuses on designing algorithms and data structures so that they work in the most general setting without loss of efficiency.
\end{definition}
This definition of generic programming covers two major aspects: ``work in the most general setting'' and ``without loss of efficiency''.
By the most general setting, we mean that a single program should work correctly for the most general input data types. 
For example, suppose we want to write a function that raises an element to a power, $f(x, n) := x^n$.
One can easily write a function for standard number types that computes the power of $x$ in $O \left( \log(n) \right)$ steps using the multiply and square trick.
Generic programming does not require $x$ to be a standard number type;
instead, it treats $x$ as an element with an associative multiplication operation $\odot$ and a multiplicative identity $\mymathbb{1}$.
Then, when the program takes a matrix as an input instead of a standard number type, it computes the matrix power correctly without rewriting the program.
The second aspect is about efficiency. For dynamically typed languages such as Python, the type information is not available for type-specific optimizations at the compilation stage.
Therefore, one can easily write code that works for general input types, but the efficiency is not guaranteed; for example, the speed of computing the matrix multiplication between two NumPy arrays with Python objects as elements is much slower than statically typed (i.e. the type information can be accessed at the compilation stage) languages such as C++ and Julia~\cite{Bezanson2012}.
C++ uses templates for generic programming while Julia takes advantage of just-in-time compilation and multiple dispatches.
When these languages ``see'' a new input type, the compiler recompiles the generic program for the new type to generate an efficient binary.
A myriad of optimizations can be done during the compilation. For example, the compiler can optimize the memory layout of immutable elements with fixed sizes in an array to speed up array indexing.
In Julia, if a type is immutable and contains no references to other values, an array of that type can even be compiled to graphics processing units (GPU) for faster computation~\cite{Besard2018}.

This motivates us to identify the most general tensor element type allowed in a tensor network contraction program.
We find that as long as the tensor elements are members of a commutative semiring, the tensor network contraction will be well defined and the result will be independent of the contraction order.
In contrast with a field, a commutative semiring does not need not to have an additive inverse and a multiplicative inverse.
Giving up these nice properties of fields has significant implications for tensor computation: tensor network compression algorithms might not be applicable because matrix factorization is NP-hard for commutative semirings~\cite{Shitov2014} and matrix multiplication faster than $O(n^3)$ does not exist for an algebra without an additive inverse~\cite{Kerr1970}.
Here, we only use the commutative properties of an algebra for optimizing the tensor network contraction order.
To define a commutative semiring with the addition operation $\oplus$\footnote{We use the $\oplus$ operator throughout this paper to denote the generic addition, which is not the logical \texttt{XOR} operation, in which the symbol typically represents in computer science.} and the multiplication operation $\odot$ on a set $S$, the following relations must hold for any arbitrary three elements $a, b, c \in S$.
\begin{align*}
(a \oplus b) \oplus c = a \oplus (b \oplus c) & \hspace{5em}\text{$\triangleright$ commutative monoid $\oplus$ with identity $\mymathbb{0}$}\\
a \oplus \mymathbb{0} = \mymathbb{0} \oplus a = a &\\
a \oplus b = b \oplus a &\\
&\\
(a \odot b) \odot c = a \odot (b \odot c)  &   \hspace{5em}\text{$\triangleright$ commutative monoid $\odot$ with identity $\mymathbb{1}$}\\
a \odot  \mymathbb{1} =  \mymathbb{1} \odot a = a &\\
a \odot b = b \odot a &\\
&\\
a \odot (b\oplus c) = a\odot b \oplus a\odot c  &  \hspace{5em}\text{$\triangleright$ left and right distributive}\\
(a\oplus b) \odot c = a\odot c \oplus b\odot c &\\
&\\
a \odot \mymathbb{0} = \mymathbb{0} \odot a = \mymathbb{0}
\end{align*}
In the following sections, we will show how to compute solution space properties of independent sets using the same tensor network contraction algorithm by engineering tensor element algebra.
The Venn diagram in \Cref{fig:venn-diagram} shows the different types of algebra we will introduce in the main text and their relation, and \Cref{tbl:generictypes} summarizes which solution space properties can be computed by which tensor element types.

\begin{figure}[th]
   \centering
\centerline{\begin{tikzpicture}[]
    \scriptsize
    \node[draw,fill=lime!80,fill opacity=1, text opacity=1.0,ellipse,minimum width=2cm, minimum height=1cm,inner sep=0pt] at (0, 2.5) (R) {$\mathbbm{R}$};
    \def\dx{-3};
    \node[draw,fill=teal!50,fill opacity=1, text opacity=1.0,ellipse,minimum width=5cm, minimum height=3cm,inner sep=0pt] at (\dx, 0.15) (PN) {\hspace{-3.7cm}Polynomial};
    \node[draw,fill=brown!75,fill opacity=1, text opacity=1.0,ellipse,minimum width=3.5cm, minimum height=1.0cm,inner sep=0pt] at (\dx, 0.7) (P1) {\hspace{-1.0cm}Largest order};
    \node[draw,fill=brown!40,fill opacity=1, text opacity=1.0,ellipse,minimum width=3.5cm, minimum height=1.0cm,inner sep=0pt] at (\dx, -0.4) (P2) {\hspace{-0.8cm}Largest 2 orders};
    \node[draw,fill=brown,fill opacity=1, text opacity=1.0,ellipse,minimum width=0.8cm, minimum height=0.3cm,inner sep=0pt] at (\dx, 1.0) (T) {};
    \node[draw,fill=brown!50,fill opacity=1, text opacity=1.0,ellipse,minimum width=0.8cm, minimum height=0.3cm,inner sep=0pt] at (\dx, -0.1) (T2) {};
    \node at (\dx, -1.2) {$\ldots$};
    \node[above = 1cm] at (T) (textT) {Tropical};
    \node[below = 2cm, left=4cm] (textT2) {Extended Tropical};
    \draw[black,-latex] (textT) -- (T);
    \draw[black,-latex] (textT2) -- (T2);

    \node[draw,fill=yellow,fill opacity=0.5, text opacity=1.0,ellipse,minimum width=5cm, minimum height=3cm,inner sep=0pt] at (0, 0) (SN) {};
    \node[below of=1] at (SN) {Set};
    \node[draw,fill=red!70,fill opacity=0.5, text opacity=1.0,ellipse,minimum width=4cm, minimum height=1.5cm,inner sep=0pt] at (-0.5, 0) (S1) {\hspace{1.5cm}Bit string};
\end{tikzpicture}}

    \caption{The tensor network element types used in this work and their relations.
    The overlap between two ellipses indicates that a new algebra can be created by combining those two types of algebra. ``Largest order'' and ``Largest 2 orders'' mean truncating the polynomial by only keeping its largest or largest two orders.
    The purposes of these element types can be found in \Cref{tbl:generictypes}.}
     \label{fig:venn-diagram}
\end{figure}
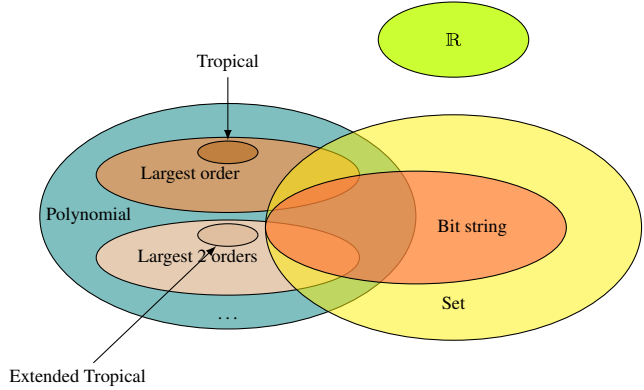

\begin{table}[t!]\centering
\begin{minipage}{\columnwidth}
\ra{1.3}
        \begin{tabularx}{\textwidth}{sb}\toprule
            \hline
   \textbf{Element type}     & \textbf{Solution space property} \\
   {$\mathbbm{R}$}     & {Counting of all independent sets} \\
   {Polynomial} (\Cref{eq:polynomial})     & {Independence polynomial} \\
   {Tropical (\Cref{eq:tropical})}    & {Maximum independent set size} \\
   {Extended tropical of order $k$ (\Cref{eq:ext-tropical})}    & {Largest $k$ independent set sizes} \\
   {Polynomial truncated to $k$-th order (\Cref{eq:countingtropical} and \Cref{eq:max2poly})}     & {$k$ largest independent sizes and their degeneracy} \\
   {Set} (\Cref{eq:set})     & {Enumeration of independent sets} \\
   {Sum-Product expression tree} (\Cref{eq:expr})     & {Sampling of independent sets} \\
   {Polynomial truncated to largest order combined with bit string} (\Cref{eq:singleconfig})     & {Maximum independent set size and one of such configurations} \\
   {Polynomial truncated to $k$-th order combined with set} (\Cref{eq:countingtropicalset})    & {$k$ largest independent set sizes and their enumeration} \\
            \bottomrule
        \end{tabularx}
    \caption{Tensor element types and the independent set properties that can be computed using them.}\label{tbl:generictypes}
\end{minipage}
\end{table}

\section{Tensor network representation of independent sets} \label{sec:tnmap}
This section describes the reduction of the independent set problem to a tensor network contraction problem.
An alternative interpretation, perhaps more accessible to physicists, can be found in \Cref{sec:energymodel}, where we introduce the reduction from the energy model of hardcore lattice gases~\cite{Dyre2016, Fernandes2007}.
Let $G=(V,E)$ be an undirected graph with each vertex $v\in V$ being associated with a integer weight $w_v$.
An independent set $I \subseteq V$ is a set of vertices that for any vertex pair $u,v \in I$, we have $(u, v) \not\in E$; we refer to this constraint as the independence constraint. 
The independent set problem on $G$ can be encoded as a tensor network $\mathcal{N}_{\rm IS}(G)$
\begin{equation}\label{eq:mistensornetwork}
\begin{split}
    \Lambda &= \{s_v \mid v \in V\},\\
    \mathcal{T} &= \{W^{(v)}_{s_v} \mid v\in V\} \cup \{B^{(u, v)}_{s_us_v} \mid (u, v) \in E\},\\
    \boldsymbol{\sigma}_o &= \varepsilon,
    \end{split}
\end{equation}
where for each vertex $v$, we define a parameterized rank-one tensor associated with it as
\begin{equation}
    W^{(v)} = \left(\begin{matrix}
        1 \\
        x_v^{w_v}
    \end{matrix}\right),\label{eq:vertextensor}
\end{equation}
and for each edge $(u, v) \in E$, we define a matrix $B$ as
\begin{equation}
    \qquad \quad 
       B^{(u,v)} = \left(\begin{matrix}
        1  & 1\\
        1 & 0
    \end{matrix}\right). \label{eq:edgetensor}
\end{equation}
We map each vertex $v\in V$ to a label $s_v \in \{0, 1\}$, where we use $0$ or $1$ to denote a vertex is absent or present in $I$, respectively.
These labels can be used as subscripts of tensors to index tensor elements, e.g.\ $W^{(v)}_0=1$ is the first element associated with $s_v=0$ and $W^{(v)}_1=x_v^{w_v}$ is the second element associated with $s_v=1$, where $x_v$ is an  element of some commutative semiring (e.g., listed in \Cref{tbl:generictypes}) associated with vertex $v$ and its power with an integer is defined by repeated multiplication.
The labels associated to the output tensor is an empty string $\varepsilon$, meaning this tensor has rank $0$, i.e.~the output is a scalar.
The independence constraint is encoded in the edge tensors, where we use $B^{(u, v)}_{11} = 0$ to denote two vertices connected by an edge $(u, v)$ cannot be both in the independent set. 
The contraction of this tensor network is
\begin{equation}\label{eq:idp}
    P(G) = \sum\limits_{s_1, s_2, \ldots, s_{|V|} = 0}^{1} \prod\limits_{v\in V} W^{(v)}_{s_v} \prod\limits_{(u,v) \in E} B^{(u,v)}_{s_u s_v},
\end{equation}
where the summation runs over all $2^{|V|}$ vertex configurations $\{s_1, s_{2}, \ldots,s_{|V|}\}$ and accumulates the product of tensor elements to the output $P$. 
A vertex tensor element $W^{(v)}_{s_v}$ contributes a multiplicative factor $x_v^{w_v}$ whenever $v$ is in the set.

\begin{example}\label{eg:twonode}
Here, we show a minimum example of mapping the independent problem of a 2-vertex complete graph K2 (left) to a tensor network (right).

\centerline{\begin{tikzpicture}[
dot/.style = {circle, fill, minimum size=#1,
            inner sep=0pt, outer sep=0pt},
dot/.default = 6pt  
                ]  
    \def\dx{0};
    \def\r{0.5cm}
    \def\wr{0.25cm}
    \node[dot=\r, fill=black] at (\dx,0) (a) {\color{white}{a}};
    \node[dot=\r, fill=black] at (2.5+\dx,0) (b) {\color{white}{b}};
    \draw [black,thick] (a) -- (b);

    \def\dx{5};
    \def\r{0.25cm}
    \foreach \x/\y/\e in {1.25/0/ab}
        \node[color=black,fill=black,dot=2.5*\r] at (\x+\dx,\y) (\e) {\scriptsize \color{white}{$B^{(a, b)}$}};
    \foreach \x/\y/\v in {0.3/0/a, 2.2/0/b}
        \node[color=black] at (\x+\dx,\y) (s\v) {$s_\v$};
    \foreach \x/\y/\v in {-0.5/0/a, 3.0/0.0/b}{
        \node[dot=\wr, color=black] at (\x+\dx,\y) (w\v) {};
        \node at (\x+\dx,\y+0.5) () {\scriptsize \color{black}{$W^{(v)}$}};
    }
    \draw [cyan,thick] (wa) -- (sa);
    \draw [cyan,thick] (sa) -- (ab);
    \draw [red,thick] (wb) -- (sb);
    \draw [red,thick] (sb) -- (ab);
\end{tikzpicture}}

In the graphical representation of the tensor network on the right panel,
we use a circle to represent a tensor, a cyan-colored hyperedge to represent the degree of freedom $s_a$,
and a hyperedge in red to represent the degree of freedom $s_b$.
Tensors sharing the same degree of freedom are connected by the same hyperedge.
The contraction of this tensor network has the following form:
\begin{equation}
    \begin{split}
    P({\rm K2}) &=
    \left(\begin{matrix} 1 & x_a^{w_a} \end{matrix}\right)
    \left(\begin{matrix} 1 & 1\\ 1  & 0 \end{matrix}\right)
    \left(\begin{matrix} 1 \\ x_b^{w_b} \end{matrix}\right)\\
    &= 1 + x_a^{w_a} + x_b^{w_b}.
    \end{split}
\end{equation}
The resulting polynomial represents 3 different independent sets $\{\}$, $\{a\}$, and $\{b\}$ with weights $0$, $w_a$, and $w_b$, respectively.
\end{example}
 
For a general graph, it is computationally inefficient to evaluate \Cref{eq:idp} by directly summing up the $2^{|V|}$ products.
A better approach to evaluate a tensor network is: to find a good pair-wise tensor contraction order as a binary tree and then contract two tensors at a time by this order.
\begin{theorem}\label{thm:complexreal}
    The tensor network in \Cref{eq:mistensornetwork} for the independent set problem on graph $G = (V,E)$ can be contracted in $\cc = O(|E|)2^{O({\rm tw}(G))}$ number of additions and multiplications.
\end{theorem}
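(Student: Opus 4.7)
The plan is to construct a contraction order for $\mathcal{N}_{\rm IS}(G)$ from a tree decomposition of $G$ and then bound the rank of every intermediate tensor to derive the arithmetic cost. The strategy is standard in the tensor network and graphical model literature (essentially the junction tree algorithm in the sense of Markov--Shi), but the concrete construction must be adapted to the vertex/edge tensor layout of \Cref{eq:mistensornetwork}.

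First, I would fix a tree decomposition $(T, \{X_t\}_{t \in V(T)})$ of $G$ of width $k = {\rm tw}(G)$; by converting to a \emph{nice} tree decomposition at only a linear blow-up, I may assume $|V(T)| = O(|V|)$ and each node of $T$ is of type leaf, introduce, forget, or join, with each bag $X_t \subseteq V$ of size at most $k+1$. I root $T$ arbitrarily, assign each vertex tensor $W^{(v)}$ to the introduce-$v$ node, and assign each edge tensor $B^{(u,v)}$ to some bag containing both $u$ and $v$, which exists by the edge-covering axiom of tree decompositions.

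Next I would traverse $T$ bottom-up, maintaining the invariant that after processing the subtree rooted at $t$, the accumulated tensor $M_t$ has exactly the open legs $\{s_v : v \in X_t\}$. At an introduce-$v$ node, I take the tensor product of the child's result with $W^{(v)}$ and with the edge tensors assigned to this node; at a forget-$v$ node, I sum out the label $s_v$; at a join node, I take the pointwise product of the two children's tensors over their common legs. Since $|X_t| \le k+1$, every $M_t$ has at most $2^{k+1}$ entries, so each elementary update costs $O(2^{k+1})$ semiring operations. With $O(|V|)$ structural nodes plus $|E|$ edge-tensor insertions, the total cost is $O(|V|+|E|) \cdot 2^{O({\rm tw}(G))}$. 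Stripping off isolated vertices (each contributing the trivial factor $1 + x_v^{w_v}$) and assuming the remaining graph satisfies $|E| = \Omega(|V|)$ yields $\cc = O(|E|) \cdot 2^{O({\rm tw}(G))}$.

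The main subtlety, and the main place where the proof must be done carefully, is verifying the leg-count invariant across the introduce and forget steps: one must show that when an edge tensor $B^{(u,v)}$ is multiplied into $M_t$, both labels $s_u,s_v$ are already among the legs of $M_t$ (so that incorporating it neither inflates the rank beyond $k+1$ nor violates the invariant). This is exactly the content of the edge-covering axiom together with the connected-subtree axiom of tree decompositions: every vertex $v$ lives in the bags of a connected subtree of $T$, and every edge is contained in some bag, so a suitable bag-assignment of the edge tensors keeps the invariant intact throughout the traversal. Everything else (join nodes, forget nodes, and the final contraction of the root bag down to a scalar) is then routine.
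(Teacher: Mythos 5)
Your proposal is correct, but it reaches the bound by a more self-contained route than the paper. The paper's proof invokes the Markov--Shi correspondence as a black box: a pairwise contraction order of a tensor network $\mathcal{N}$ corresponds to a tree decomposition of the line graph $L(\mathcal{N})$ of its hypergraph representation, with cost $O(|\mathcal{N}|)2^{O({\rm tw}(L(\mathcal{N})))}$, and then observes that for \Cref{eq:mistensornetwork} this line graph is isomorphic to $G$ up to isolated vertices. You instead unpack that citation into an explicit junction-tree construction on a nice tree decomposition of $G$ itself, which is essentially how the cited result is proved; what your version buys is a self-contained argument and an explicit contraction schedule, while the paper's version is shorter and makes the general principle (contraction order $\leftrightarrow$ tree decomposition of the line graph) reusable for the other problem reductions in its appendix. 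Two small points to tighten: (i) in a nice tree decomposition a vertex $v$ can be introduced at several nodes (one per leaf of the connected subtree of bags containing $v$), so "the introduce-$v$ node" is not well defined; you must attach $W^{(v)}$ to exactly one such node (or, more cleanly, to the unique forget-$v$ node just before $s_v$ is summed out), otherwise the factor $x_v^{w_v}$ is multiplied in twice on the two sides of a join. (ii) The step from $O(|V|+|E|)$ to $O(|E|)$ is fine once isolated vertices are removed, since a graph with minimum degree one has $|V|\le 2|E|$; you do not need any further assumption that $|E|=\Omega(|V|)$, as it is automatic. Neither issue affects the final bound $\cc = O(|E|)2^{O({\rm tw}(G))}$.
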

\begin{proof}
Let us denote the line graph, a graph obtained by mapping an edge in the original graph to a vertex and connect two vertices if and only if their associated edges in the original graph share a common vertex, of the hypergraph representation of a tensor network $\mathcal{N}$ as $L(\mathcal{N})$.
A contraction order of $\mathcal{N}$ corresponds to a tree decomposition of $L(\mathcal{N})$ and the largest intermediate tensor has a rank equal to the width of its tree decomposition~\cite{Markov2008}.
Therefore, an optimal (in terms of space complexity) contraction order corresponds to the tree decomposition of $L(\mathcal{N})$ with the smallest width (or the treewidth ${\rm tw}(L(\mathcal{N})$). The contraction complexity is $O(|\mathcal{N}|)2^{O({\rm tw}(L(\mathcal{N})))}$, where $|\mathcal{N}|$ is the number of tensors in $\mathcal{N}$~\cite{Markov2008}.
For the independent set problem on graph $G = (V, E)$, the line graph of the hypergraph representation of the tensor network in \Cref{eq:mistensornetwork} is isomorphic to the graph $G$ up to some isolated vertices, hence the contraction complexity of this tensor network with an optimal contraction order is $O(|E|)2^{O({\rm tw}(G))}$.
\end{proof}
 
In practice, it is difficult to find an optimal contraction order for large tensor networks because finding the treewidth is a well-known NP-hard problem.
However, it is easy to find a close-to-optimal contraction order within typically a few minutes using a heuristic algorithm~\cite{Kourtis2019, Kalachev2021}.
For large-scale applications, it is also possible to slice over certain degrees of freedom to reduce the space complexity, i.e.\
loop over possible combinations of certain degrees of freedom so that one can have a smaller tensor network inside the loop since these degrees of freedoms are fixed.

\begin{example}\label{eg:tensorcontraction}
In this example, we map a 5-vertex graph (left) to a tensor network (right) and show how optimizing the contraction order reduces the time and space complexities.
    
\centerline{\begin{tikzpicture}[
dot/.style = {circle, fill, minimum size=#1,
            inner sep=0pt, outer sep=0pt},
dot/.default = 6pt  
                ]  
    \def\dx{0};
    \def\r{0.25cm}
    \filldraw[fill=black] (\dx,0) circle [radius=\r];
    \filldraw[fill=black] (\dx,1.5) circle [radius=\r];
    \filldraw[fill=black] (1.5+\dx,0) circle [radius=\r];
    \filldraw[fill=black] (1.5+\dx,1.5) circle [radius=\r];
    \filldraw[fill=black] (2.5+\dx,2.5) circle [radius=\r];
    \draw [black,thick] (\dx,0) -- (\dx,1.5);
    \draw [black,thick] (\dx,0) -- (1.5+\dx,0);
    \draw [black,thick] (\dx,1.5) -- (1.5+\dx,1.5);
    \draw [black,thick] (1.5+\dx,0) -- (1.5+\dx,1.5);
    \draw [black,thick] (1.5+\dx,0) -- (\dx,1.5);
    \draw [black,thick] (2.5+\dx,2.5) -- (1.5+\dx,1.5);
    \node[color=white] at (\dx,0) {a};
    \node[color=white] at (\dx,1.5) {b};
    \node[color=white] at (1.5+\dx,0) {c};
    \node[color=white] at (1.5+\dx,1.5) {d};
    \node[color=white] at (2.5+\dx,2.5) {e};
    \def\dx{5};
    \def\r{0.25cm}
    \foreach \x/\y/\u/\v in {0.75/0/a/c, 0/0.75/a/b, 1.5/0.75/c/d, 0.75/1.5/b/d, 0.75/0.75/b/c, 2/2/d/e}
        \node[color=white,fill=black,dot=2.5*\r] at (\x+\dx,\y) (\u\v) {\scriptsize $B^{(\u,\v)}$};
    \foreach \x/\y/\v in {0/0/a, 0/1.5/b, 1.5/0/c, 1.5/1.5/d, 2.5/2.5/e}
        \node[color=black] at (\x+\dx,\y) (\v) {$s_\v$};
    \foreach \x/\y/\v in {-0.5/-0.5/a, -0.5/2.0/b, 2.0/-0.5/c, 2.0/1.0/d, 3.0/2.0/e}
        \node[color=white,fill=black,dot=\r] at (\x+\dx,\y) (\v\v) {};
    \foreach \x/\y/\v in {-0.5/-0.5/a, -0.5/2.0/b, 2.0/-0.5/c, 2.0/1.0/d, 3.0/2.0/e}
        \node[color=black] at (\x+\dx+0.6,\y) {\scriptsize $W^{(\v)}$};
    \draw [cyan,thick] (a) -- (aa);
    \draw [cyan,thick] (a) -- (ab);
    \draw [cyan,thick] (a) -- (ac);
    \draw [blue,thick] (b) -- (bb);
    \draw [blue,thick] (b) -- (ab);
    \draw [blue,thick] (b) -- (bc);
    \draw [blue,thick] (b) -- (bd);
    \draw [red,thick] (c) -- (cc);
    \draw [red,thick] (c) -- (ac);
    \draw [red,thick] (c) -- (bc);
    \draw [red,thick] (c) -- (cd);
    \draw [green,thick] (d) -- (dd);
    \draw [green,thick] (d) -- (bd);
    \draw [green,thick] (d) -- (de);
    \draw [green,thick] (d) -- (cd);
    \draw [orange,thick] (e) -- (ee);
    \draw [orange,thick] (e) -- (de);
\end{tikzpicture}}

One can represent a possible pair-wise contraction of tensors as a binary tree structure:

\centerline{\begin{tikzpicture}[]
    \def\dx{0};
    \def\dy{0};
    \def\d{1};
    \def\a{1.0};
    \def\b{1};
    \def\ya{-1.0};
    \def\yb{-0.8};
    \node[] at (\dx+5*\a, \dy+5*\b) (R1) {$R[n^2]$};
    \node[] at (\dx+3*\a, \dy+4*\b) (S1) {$S_{bc}[n^3]$};
    \node[] at (\dx+2*\a, \dy+3*\b) (Q1) {$Q_{bd}[n^2]$};
    \node[] at (\dx+8.5*\a, \dy+3*\b) (Q2) {$Q_{bc}[n^2]$};
    \node[] at (\dx+\a, \dy+2*\b) (P1) {$P_{d}[n]$};
    \node[] at (\dx+9*\a, \dy+2*\b) (P2) {$P_{bc}[n^3]$};
    \node[] at (\dx+0.5*\d, \dy+\b) (M1) {$M_{d}[n^2]$};
    \node[] at (\dx+3.5*\d, \dy+\b) (M2) {$M_{bd}[n^2]$};
    \node[] at (\dx+5.5*\d, \dy+\b) (M3) {$M_{cd}[n^2]$};
    \node[] at (\dx+9.5*\d, \dy+\b) (M4) {$M_{ac}[n^2]$};
    \node[] at (\dx, \dy) (Bde) {$B_{de}$};
    \node[] at (\dx+\d, \dy) (We) {$W_e$};
    \node[] at (\dx+2*\d, \dy) (Wd) {$W_d$};
    \node[] at (\dx+3*\d, \dy) (Bbd) {$B_{bd}$};
    \node[] at (\dx+4*\d, \dy) (Wb) {$W_b$};
    \node[] at (\dx+5*\d, \dy) (Bcd) {$B_{cd}$};
    \node[] at (\dx+6*\d, \dy) (Wc) {$W_c$};
    \node[] at (\dx+7*\d, \dy) (Bbc) {$B_{bc}$};
    \node[] at (\dx+8*\d, \dy) (Bab) {$B_{ab}$};
    \node[] at (\dx+9*\d, \dy) (Bac) {$B_{ac}$};
    \node[] at (\dx+10*\d, \dy) (Wa) {$W_a$};
    \draw[] (Bde) -- (M1);
    \draw[] (We) -- (M1);
    \draw[] (Bbd) -- (M2);
    \draw[] (Wb) -- (M2);
    \draw[] (Bcd) -- (M3);
    \draw[] (Wc) -- (M3);
    \draw[] (Bac) -- (M4);
    \draw[] (Wa) -- (M4);
    \draw[] (M1) -- (P1);
    \draw[] (Wd) -- (P1);
    \draw[] (Bab) -- (P2);
    \draw[] (M4) -- (P2);
    \draw[] (P1) -- (Q1);
    \draw[] (M2) -- (Q1);
    \draw[] (Bbc) -- (Q2);
    \draw[] (P2) -- (Q2);
    \draw[] (Bbc) -- (Q2);
    \draw[] (P2) -- (Q2);
    \draw[] (Q1) -- (S1);
    \draw[] (M3) -- (S1);
    \draw[] (S1) -- (R1);
    \draw[] (Q2) -- (R1);
\end{tikzpicture}}
The contraction process goes from bottom to top, where the root node stores the contraction result, the leaves are input tensors, and the rest of the nodes are all intermediate contraction results.
Tensor subscripts are indices so that the number of subscripts indicates the space complexity to store this tensor.
The contraction complexity to generate a tensor is annotated in the square brackets, where $n$ is the dimension of the degree of freedoms, which is $2$ in a tensor network mapped from an independent set problem.
One can easily check the largest tensor in contraction has a space complexity $O(n^2)$ and this is the smallest among all possible contraction trees, i.e.\ the treewidth of the original $5$-vertex graph is $2$. The time complexity is $O(n^3)$, which is much smaller than that of direct evaluation $O(n^5)$.
\end{example}

\section{Independence polynomial}\label{sec:indpoly}

Let $x_i = x$ and $w_i = 1$, \Cref{eq:idp} corresponds to the independence polynomial:
\begin{equation}\label{eq:idpdef}
I(G, x) = \sum_{k=0}^{\alpha(G)} a_k x^k,
\end{equation}
where $a_k$ is the number of independent sets of size $k$, $\alpha(G) \equiv \max_{I}\sum_{v\in I}w_v$ is the size of a maximum independent set and it is also called the independence number.
An independence polynomial is a useful graph characteristic related to, for example, the partition functions~\cite{Lee1952,Yang1952} and Euler characteristics of the independence complex~\cite{Bousquet2008, Levit2009}.
By assigning a real number to $x$, one can evaluate this independence polynomial for this specific value directly using tensor network contraction.
For example, the total number of independent sets can be evaluated as $I(G, 1)$.
However, instead of evaluating this polynomial for a certain value, we are more interested in knowing the coefficients of this polynomial, because this quantity tells us the counting of independent sets at different sizes.
To this end, let us create a polynomial number data type by representing a polynomial $a_0 + a_1 x + \ldots + a_k x^k$ as a coefficient vector $a = (a_0, a_1, \ldots, a_k) \in \mathbb{R}^k$, e.g. $x$ is represented as $(0, 1)$.
Then we can define an algebra among coefficient vectors, including a redefinition of additive identity and multiplicative identity.
To avoid potential confusion, let us denote the additive identity as $\mymathbb{0}$, and the multiplicative identity is $\mymathbb{1}$.
The algebra between the polynomials number $a$ of order $k_a$ and $b$ of order $k_b$ is specified as
\begin{equation}
    \eqname{PN}
    \begin{split}
    a \oplus b &= (a_0 + b_0, a_1 + b_1, \ldots, a_{\max(k_a, k_b)} + b_{\max(k_a, k_b)}),\\
    a \odot b &= (a_0 + b_0, a_1b_0 + a_0b_1, a_{2}b_{0} + a_{1}b_{1} + a_{0}b_{2},  \ldots, a_{k_a} b_{k_b}),\\
    \mymathbb{0} &= (),  \\
    \mymathbb{1} &= (1), \label{eq:polynomial}
    \end{split}
\end{equation}
where $\oplus$ and $\odot$ are the standard polynomial addition and multiplication operations.

\begin{theorem}\label{thm:complexpoly}
    The independence polynomial~\cite{Harvey2018,Ferrin2014} of a graph $G = (V, E)$ can be computed in time $O(|V|\log(|V|))\cc$.
\end{theorem}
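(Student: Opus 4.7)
The plan is to combine the generic tensor network contraction complexity from \Cref{thm:complexreal} with the arithmetic cost of the polynomial semiring defined in \Cref{eq:polynomial}. Since contraction only uses the semiring operations $\oplus$ and $\odot$, if each such operation on intermediate tensor elements can be performed in time $T$, the overall cost is $O(T) \cdot \cc$. So the problem reduces to bounding $T$ for polynomial elements that appear during contraction.

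First I would verify that every polynomial appearing at every stage of the contraction has degree at most $|V|$. The intuition is that each vertex tensor $W^{(v)}$ contributes the monomial $x^{w_v}$ with $w_v = 1$ exactly once for each chosen vertex, and the edge tensors $B^{(u,v)}$ contribute only constants $0$ or $1$; hence any intermediate tensor entry, obtained by summing products of at most $|V|$ vertex factors, is a polynomial whose nonzero coefficients live in degrees $0, 1, \ldots, |V|$. This can be made rigorous by induction on the contraction tree: a leaf tensor has entries of degree $\le 1$, and the product of two polynomials of degrees $d_1, d_2$ has degree $d_1 + d_2$, but the actual independent-set interpretation forces $d_1 + d_2 \le |V|$ (summing over disjoint subsets of vertices), and $\oplus$ does not increase the degree.

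Next I would bound $T$ by $O(|V|\log|V|)$. Addition of two length-$O(|V|)$ coefficient vectors is $O(|V|)$. Multiplication of two polynomials of degree $\le |V|$ can be done in $O(|V|\log|V|)$ using the fast Fourier transform (over $\mathbb{R}$ or $\mathbb{C}$, with standard rounding to recover integer coefficients), which is the dominant cost. Hence each semiring operation costs $O(|V|\log|V|)$.

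Finally I would combine: by \Cref{thm:complexreal} the optimal contraction order uses $\cc = O(|E|)\,2^{O({\rm tw}(G))}$ semiring operations, so the total running time is $O(|V|\log|V|)\,\cc$, producing the full coefficient vector $(a_0, a_1, \ldots, a_{\alpha(G)})$ of $I(G,x)$. The main obstacle, and the only nontrivial point, is the degree bound on intermediate polynomials — without it one would only get $\mathrm{poly}(|V|)$ per operation in general, or worse if FFT sizes grew multiplicatively along the contraction tree; the combinatorial meaning of the tensor entries is what keeps the degree uniformly $O(|V|)$.
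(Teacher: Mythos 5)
Your proposal is correct and follows essentially the same route as the paper's proof: replace tensor elements by the polynomial semiring, bound the degree of every intermediate polynomial by $\alpha(G)\le |V|$, charge $O(|V|\log|V|)$ per semiring operation via FFT-based (convolution-theorem) multiplication, and multiply by the contraction cost from \Cref{thm:complexreal}. The only difference is that you spell out the degree bound on intermediate tensor entries by induction on the contraction tree, which the paper asserts more briefly via $k\le\alpha(G)\le|V|$; this is a welcome elaboration, not a different argument.
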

\begin{proof}
By doing the following replacement of tensor elements from the standard number type to the polynomial number type
\begin{equation}
    \begin{cases}
    1 &\rightarrow \mymathbb{1},\\
    0 &\rightarrow \mymathbb{0},\\
    x_v^{w_v} &\rightarrow (0, 1),
    \end{cases}
\end{equation}
the tensors $W$ and $B$, which are introduced in the previous section (\Cref{sec:tnmap}), can thus be written as 
\begin{equation}
    \left(W^{\rm PN}\right)^{(v)} = \left(\begin{matrix}
        \mymathbb{1} \\
        (0,1)
    \end{matrix}\right),   
    \qquad \qquad
        \left(B^{\rm PN}\right)^{(u, v)} = \left(\begin{matrix}
        \mymathbb{1}  & \mymathbb{1} \\
        \mymathbb{1} & \mymathbb{0}
    \end{matrix}\right).
\end{equation}
By contracting the tensor network with this polynomial type, we have the exact representation of the independence polynomial.
In \Cref{eq:polynomial}, the addition can be computed in time $O(k)$, where $k=\max(k_a, k_b)$, and the multiplication can be evaluated in time $O(k\log(k))$ using the convolution theorem~\cite{Schonhage1971}.
Since $k$ is upper bounded by the maximum independent set size $\alpha(G) \leq |V|$, the time complexity of element-wise addition and multiplication operation is upper bounded by $|V|\log(|V|)$. Combining with \Cref{thm:complexreal}, the overall time complexity is $O(|V|\log(|V|))\cc$.
\end{proof}

In practice, using the polynomial type suffers a space overhead proportional to $\alpha(G)$ because each polynomial requires a vector of such size to store the coefficients. 
One may argue that one can first evaluate this polynomial at different $x$ being a real number,
and then apply the Gaussian elimination procedure to fit the coefficients of this polynomial.
However, in practice this seemingly more time and space-efficient approach suffers from precision issues.
The data ranges of standard integer types are too small to cover many practical use cases,
while the floating-point numbers may have round-off errors that are much larger than the value itself.
These are because the number of independent sets at different sizes may vary by tens or even hundreds of orders of magnitude.
For practical methods to evaluate these coefficients, we refer readers to \Cref{sec:finitefield}, where we provide an accurate and memory-efficient method to find the polynomial by contracting and fitting on finite field algebra.
For simplicity, we use this less efficient polynomial algebra for the discussion in the main text.

\section{Maximum independent sets and its counting}\label{sec:counting}
\subsection{The number of independent sets}
\begin{theorem}\label{thm:complexrealnum}
    Let $G = (V, E)$ be a graph. The total number independent sets of $G$ can be computed in time $\cc$.
\end{theorem}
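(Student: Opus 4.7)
The plan is to specialize the generic tensor network construction of \Cref{eq:mistensornetwork} to the simplest commutative semiring, namely $(\mathbbm{R}, +, \cdot)$ with the usual $\mymathbb{0} = 0$ and $\mymathbb{1} = 1$, and then invoke \Cref{thm:complexreal} directly. Concretely, I would set $x_v = 1$ for every vertex $v \in V$, so that the vertex tensor in \Cref{eq:vertextensor} collapses to
\begin{equation*}
W^{(v)} = \begin{pmatrix} 1 \\ 1 \end{pmatrix},
\end{equation*}
while the edge tensor $B^{(u,v)}$ from \Cref{eq:edgetensor} is unchanged. Note that with $x_v = 1$ the vertex weights $w_v$ become irrelevant, which is consistent with asking only for an unweighted count.

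Next, I would evaluate the scalar \Cref{eq:idp} in this specialization. For any configuration $(s_1,\ldots,s_{|V|}) \in \{0,1\}^{|V|}$, the product $\prod_{v} W^{(v)}_{s_v} \prod_{(u,v) \in E} B^{(u,v)}_{s_u s_v}$ equals $1$ if the set $\{v : s_v = 1\}$ is independent (no edge has both endpoints selected, so no factor $B^{(u,v)}_{11} = 0$ appears) and equals $0$ otherwise. Summing over all $2^{|V|}$ configurations therefore produces exactly the number of independent sets of $G$, which is also $I(G,1)$ in the notation of \Cref{eq:idpdef}.

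Finally, for the complexity bound I would simply note that the tensor elements are standard real (in fact integer) numbers, so every elementary addition and multiplication appearing in the contraction costs $O(1)$ rather than the $O(|V|\log|V|)$ factor incurred by the polynomial semiring in \Cref{thm:complexpoly}. Applying \Cref{thm:complexreal} verbatim then yields the total running time $\cc = O(|E|) 2^{O(\mathrm{tw}(G))}$, as claimed.

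There is no real obstacle here; the statement is essentially a corollary of \Cref{thm:complexreal} combined with the observation that setting $x_v = 1$ turns the generating function into a counting function. The only point to be careful about is to make sure the specialization $x_v = 1$ is presented as an instance of the generic construction rather than a new algorithm, so that the complexity bound is inherited without change.
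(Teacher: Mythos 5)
Your proposal is correct and follows essentially the same route as the paper: evaluating the independence polynomial at $x=1$ over the standard real semiring and invoking \Cref{thm:complexreal}, noting that real addition and multiplication cost $O(1)$. Your explicit verification that each configuration contributes $1$ exactly when it is independent is a harmless elaboration of what the paper leaves implicit.
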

\begin{proof}
Let $x = 1$ be a real constant number; the independence polynomial in the previous section becomes
\begin{equation}
I(G, 1) = \sum_{k=0}^{\alpha(G)}a_k,
\end{equation}
which corresponds to the total number of independent sets.
Since the real number addition and multiplication can be computed in $O(1)$ time, thus
proving the theorem by \Cref{thm:complexreal}.
\end{proof}

\subsection{Tropical algebra for finding the MIS size and counting MISs}
Let $x=\infty$, the independence polynomial in the previous section becomes
\begin{equation}
I(G, \infty) = \lim_{x\rightarrow \infty}\sum_{k=0}^{\alpha(G)}a_k x^k = a_{\alpha(G)} \infty^{\alpha(G)},
\end{equation}
where all terms except the one with the largest order vanish. We can thus replace the polynomial type $a = (a_0, a_1, \ldots, a_k)$ with a new type that has two fields: the largest exponent $k$ and its coefficient $a_k$.
From this, we can define a new algebra as
\begin{equation}
    \eqname{P1}
\begin{split}
    a_x\infty^x \oplus a_y\infty^y &= \begin{cases}
        (a_x + a_y)\infty^{\max(x,y)}, & x = y\\
        a_y\infty^{\max(x,y)}, & x < y\\
        a_x\infty^{\max(x,y)}, & x > y
    \end{cases}, \\
    a_x\infty^x \odot a_y\infty^y &= a_x a_y\infty^{x+y}\\
    \mymathbb{0} &= 0\infty^{-\infty}\\
    \mymathbb{1} &= 1\infty^{0}.
\end{split}
\label{eq:countingtropical}
\end{equation}
Here, we have generalized the previous polynomial to the Laurent polynomial to define the zero-element properly.
To implement this algebra programmatically, we create a data type with two fields $(x, a_x)$ to store the MIS size and its counting,
and define the above operations and constants correspondingly.
If one is only interested in finding the MIS size, one can drop the counting field.
The algebra of the exponents becomes the max-plus tropical algebra~\cite{Maclagan2015, Moore2011}:
\begin{equation}\eqname{T}
    \begin{split}
        x \oplus y &= \max(x,y)\\
        x \odot y &= x + y\\
        \mymathbb{0} &= -\infty\\
        \mymathbb{1} &= 0.
    \end{split}\label{eq:tropical}
\end{equation}
Algebra \Cref{eq:tropical} and \Cref{eq:countingtropical} are the same as those used in Liu et al.~\cite{Liu2021} to compute the spin glass ground state energy and its degeneracy.

\begin{theorem}\label{thm:complextropical}
    Let $G = (V, E)$ be a graph. Its maximum independent set size $\alpha(G)$ can be computed in time $\cc$.
\end{theorem}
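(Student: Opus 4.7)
The plan is to instantiate the generic contraction in \Cref{eq:idp} with the tropical semiring of \Cref{eq:tropical} as the tensor element type, and then invoke \Cref{thm:complexreal} to get the claimed complexity. First I would verify that $(\mathbb{R}\cup\{-\infty\}, \max, +)$ is a commutative semiring: $\max$ and $+$ are each commutative and associative with identities $-\infty$ and $0$, absorption $-\infty + a = -\infty$ holds under the standard convention, and distributivity $a + \max(b,c) = \max(a+b,\, a+c)$ is immediate. By the general framework of \Cref{sec:generic}, this guarantees that the contraction of \Cref{eq:mistensornetwork} is well defined and contraction-order independent when the elements are tropical numbers.

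Next I would perform the element substitution $0 \mapsto \mymathbb{0} = -\infty$, $1 \mapsto \mymathbb{1} = 0$, and $x_v \mapsto \mymathbb{1}$, so that $x_v^{w_v}$ collapses under tropical exponentiation to $w_v \odot \cdots \odot w_v \cdot \mymathbb{1} = w_v$. The vertex and edge tensors from \Cref{eq:vertextensor,eq:edgetensor} then become
\begin{equation*}
\left(W^{\rm T}\right)^{(v)} = \begin{pmatrix} 0 \\ w_v \end{pmatrix}, \qquad
\left(B^{\rm T}\right)^{(u,v)} = \begin{pmatrix} 0 & 0 \\ 0 & -\infty \end{pmatrix}.
\end{equation*}
Substituting into \Cref{eq:idp}, the contraction evaluates to
\begin{equation*}
P(G) = \max_{s_1,\ldots,s_{|V|}\in\{0,1\}}\Bigl(\sum_{v\in V} s_v w_v + \sum_{(u,v)\in E} \beta(s_u,s_v)\Bigr),
\end{equation*}
with $\beta(1,1)=-\infty$ and $\beta(s_u,s_v)=0$ otherwise. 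The $-\infty$ penalty rules out every configuration violating independence, so the maximum is attained on an independent set and equals $\alpha(G) = \max_I\sum_{v\in I}w_v$.

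Finally, since each tropical $\oplus$ and $\odot$ is a constant-time real arithmetic operation, the per-element cost does not increase over the standard-number case, and the overall contraction cost matches the bound from \Cref{thm:complexreal}, yielding $\cc$. The only mildly delicate step --- hardly an obstacle --- is checking that absorption and distributivity remain valid when $-\infty$ propagates through intermediate tensors; both are immediate from the convention $a + (-\infty) = -\infty$. No ingredients beyond \Cref{thm:complexreal} and the semiring axioms of \Cref{sec:generic} are needed.
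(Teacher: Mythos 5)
Your proof is correct and follows essentially the same route as the paper: replace the tensor elements of \Cref{eq:mistensornetwork} with max-plus tropical numbers, observe that the $-\infty$ entry enforces the independence constraint so the contraction evaluates to $\alpha(G)$, and invoke \Cref{thm:complexreal} with constant-cost semiring operations. One small notational slip: you should send $x_v$ to the tropical number $1$ (real value one) rather than to the multiplicative identity $\mymathbb{1}=0$, so that $x_v^{w_v}=1\odot\cdots\odot 1=w_v$; your displayed tensors are already the correct ones, so nothing downstream is affected.
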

\begin{proof}
By replacing the tensor elements from the standard number type to the tropical numbers in \Cref{eq:tropical}, the vertex and edge tensors transforms to
\begin{equation}\label{eq:tensor-topical}
    \left(W^{\rm T}\right)^{(v)} = \left(\begin{matrix}
        \mymathbb{1} \\
        \infty^{w_v}
    \end{matrix}\right),   
    \qquad \qquad
        \left(B^{\rm T}\right)^{(u, v)} = \left(\begin{matrix}
        \mymathbb{1}  & \mymathbb{1} \\
        \mymathbb{1} & \mymathbb{0}
    \end{matrix}\right).
\end{equation}
The maximum independent set can be obtained by contracting a tensor network with the above vertex tensors and edge tensors.
Since the element-wise addition and multiplication can be computed in $O(1)$ time, the complexity of contracting this tensor network is $\cc$ by \Cref{thm:complexreal}.
\end{proof}
\subsection{Truncated polynomial algebra for counting independent sets of large size}
Instead of counting just the MISs, one may also be interested in counting the independent sets with the largest several sizes.
For example, if one is interested in counting only $a_{\alpha(G)}$ and $a_{\alpha(G)-1}$, we can define a truncated polynomial algebra by keeping only the largest two coefficients in the polynomial in \Cref{eq:polynomial} as:
\begin{equation}
    \eqname{P2}
    \begin{split}
    a \oplus b &= (a_{\max(k_a, k_b)-1} + b_{\max(k_a, k_b)-1}, a_{\max(k_a, k_b)} + b_{\max(k_a, k_b)}),\\
    a \odot b &= (a_{k_a-1} b_{k_b}+a_{k_a} b_{k_b-1}, a_{k_a} b_{k_b}),\\
    \mymathbb{0} &= (), \\
    \mymathbb{1} &= (1).\label{eq:max2poly}
    \end{split}
\end{equation}
In the program, we thus need a data structure that contains three fields, the largest order $k$, and the coefficients for the two largest orders $a_k$ and $a_{k-1}$.
This approach can clearly be extended to calculate more independence polynomial coefficients and is more efficient than calculating the entire independence polynomial.
Similarly, one can also truncate the polynomial and keep only its smallest several orders.
It can be used, for example, to count the maximal independent sets with the smallest cardinality, where a maximal independent set is an independent set that cannot be made larger by adding a new vertex into it without violating the independence constraint.
As will be shown below, this algebra can also be extended to enumerate those large-size independent sets.

\begin{theorem}
    The number of the largest $K$ independent sets of a graph $G = (V, E)$ can be computed in time $O(K\log K)\cc$.
\end{theorem}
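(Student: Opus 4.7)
The plan is to generalize the truncated polynomial algebra of \Cref{eq:max2poly} from keeping only the two largest orders to keeping the $K$ largest orders, and then apply \Cref{thm:complexreal} to get the overall contraction cost. Concretely, I would represent such a truncated polynomial by a pair $(k, \mathbf{a})$, where $k$ is the current largest exponent and $\mathbf{a} = (a_{k-K+1}, a_{k-K+2}, \ldots, a_k)$ is the length-$K$ coefficient vector of the top $K$ orders (padding with zeros on the left when the true degree is smaller than $K-1$). The additive and multiplicative identities are defined analogously to \Cref{eq:max2poly}: $\mymathbb{0}$ corresponds to the empty polynomial and $\mymathbb{1}$ corresponds to the constant $1$.

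Next, I would define the operations. For $\oplus$, first align the two operands by their top exponents, shifting the one with smaller leading order so that the coefficient vectors correspond to the same range of powers, then add the two length-$K$ vectors entrywise, discarding any overflow on the low-order end. This is clearly an $O(K)$ operation. For $\odot$, observe that if two truncated polynomials have leading exponents $k_a$ and $k_b$, then the product has leading exponent $k_a + k_b$, and its top $K$ coefficients depend only on the top $K$ coefficients of each factor. Hence the multiplication reduces to a convolution of two length-$K$ sequences followed by extraction of the top $K$ entries; by the convolution theorem this takes $O(K \log K)$ time~\cite{Schonhage1971}. I would then verify the commutative semiring axioms (associativity, commutativity, distributivity, and the identity/absorption laws for $\mymathbb{0}$), which all follow from the corresponding laws of the untruncated polynomial ring together with the fact that truncating to the top $K$ orders commutes with $\oplus$ and $\odot$ in the sense required.

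With the algebra in place, I would replace the tensor elements $W$ and $B$ of \Cref{eq:vertextensor,eq:edgetensor} by their truncated-polynomial counterparts (with the entry $x_v^{w_v}$ replaced by a truncated polynomial whose leading term is $1 \cdot x^{w_v}$), analogous to the construction in \Cref{thm:complexpoly}. Contracting this tensor network yields exactly the top $K$ coefficients of $I(G, x)$, which are the numbers of independent sets of the $K$ largest attainable sizes. By \Cref{thm:complexreal} the contraction uses $\cc$ additions and multiplications over the semiring, and each such operation costs $O(K \log K)$, yielding total time $O(K \log K)\cc$.

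The main obstacle, and the point I would write out most carefully, is the justification that truncation respects $\odot$: one must verify that for any three polynomials, computing $(a \odot b) \odot c$ with truncation after each step gives the same top-$K$ coefficients as the untruncated product, so that the truncated operation is genuinely associative and the contraction result is order-independent. This reduces to the observation that the top $K$ coefficients of a product depend only on the top $K$ coefficients of each factor, which makes the truncation an algebra homomorphism onto the semiring of top-$K$ truncated polynomials and thereby secures the commutative semiring structure needed to invoke \Cref{thm:complexreal}.
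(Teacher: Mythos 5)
Your proposal is correct and takes essentially the same route as the paper: the paper's own proof simply extends the truncated polynomial algebra of \Cref{eq:max2poly} to the largest $K$ orders and reuses the argument of \Cref{thm:complexpoly}, with the $O(K\log K)$ per-operation cost coming from the convolution theorem and the overall bound from \Cref{thm:complexreal}. Your careful verification that top-$K$ truncation commutes with the polynomial operations (so the truncated structure is genuinely a commutative semiring and the contraction is order-independent) is a detail the paper leaves implicit, but it is the same underlying argument rather than a different approach.
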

The proof is similar to the proof of \Cref{thm:complexpoly}, except only $K$ largest coefficients are involved in the addition and multiplication.

\section{Enumerating and sampling independent sets}\label{sec:enumeration}

\subsection{Set algebra for configuration enumeration}

The configuration enumeration of independent sets include, for example, the enumeration of all independent sets, the enumeration of all MISs, and the enumeration of independent sets with the largest several sizes.
Recall that in the definition of a vertex tensor in \Cref{eq:vertextensor}, variables carry labels, so that one can read out all independent sets directly from the output polynomials.
The multiplication between labelled variables is commutative while the summation of labelled variables forms a set.
Intuitively, one can use a bit string as the representation of a labelled variable and use the bit-wise or $\lor^\circ$ as the multiplication operation.
For example, in a 5-vertex graph, $x_2$ and $x_5$ can be represented as $01000$ and $00001$ respectively and their multiplication $x_2x_5$ can be represented as $01000 \lor^\circ 00001 = 01001$.
To enumerate all independent sets, we designed an algebra on sets of bit strings:
\begin{equation}
\eqname{SN}
\begin{split}
    s \oplus t &= s \cup t\\
    s \odot t &= \{\sigma \lor^\circ \tau \, \mid \, \sigma \in s, \tau \in t\}\\
    \mymathbb{0} &= \{\}\\
    \mymathbb{1} &= \{0^{\otimes |V|}\},
\end{split}\label{eq:set}
\end{equation}
where $s$ and $t$ are each a set of $|V|$-bit strings.
\begin{example}\label{eg:setalgebra}
    For elements that are bit strings of length $5$, we have the following set algebra
\begin{equation*}
\begin{split}
    &\{00001\} \oplus \{01110, 01000\} = \{01110, 01000\} \oplus \{00001\} = \{00001,01110, 01000\}\\
    &\{00001\} \oplus \{\} = \{00001\}\\
&\\
    &\{00001\} \odot \{01110, 01000\} = \{01110, 01000\} \odot \{00001\} = \{01111, 01001\}\\
    &\{00001\} \odot \{\} = \{\}\\
    &\{00001\} \odot \{00000\} = \{00001\}.
\end{split}
\end{equation*}
\end{example}

\begin{lemma}\label{thm:complexset}
    The independent sets of a graph $G = (V, E)$ can be enumerated in time $O\left(|I|*|V| \right)\cc$, where $|I|$ is the number of independent sets.
\end{lemma}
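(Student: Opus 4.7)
The plan is to mirror the proof of \Cref{thm:complexpoly} by substituting the set algebra of \Cref{eq:set} for the standard number type in the tensor elements. Concretely, I would replace the vertex and edge tensors by
\begin{equation*}
\left(W^{\rm S}\right)^{(v)} = \left(\begin{matrix}\{\mathbf{0}\}\\ \{e_v\}\end{matrix}\right),\qquad \left(B^{\rm S}\right)^{(u,v)} = \left(\begin{matrix}\{\mathbf{0}\} & \{\mathbf{0}\}\\ \{\mathbf{0}\} & \mymathbb{0}\end{matrix}\right),
\end{equation*}
where $\mathbf{0}$ is the all-zero bit string of length $|V|$ and $e_v \in \{0,1\}^{|V|}$ is the indicator string with a single $1$ in position $v$. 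Invoking the distributive and commutative laws of the semiring, each nonvanishing term in the expanded contraction corresponds to precisely one configuration $(s_1,\ldots,s_{|V|})$ consistent with the independence constraints: the $\odot$ products accumulate the bit-wise OR of $\{e_v : s_v = 1\}$, and the $\mymathbb{0}$ entry at position $(1,1)$ of every edge tensor annihilates any term that would place two adjacent vertices in the set. The contracted scalar is therefore the set of bit-string encodings of all independent sets of $G$, whose cardinality is $|I|$.

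Next, I would account for the per-operation cost and the number of operations. A set union $s \oplus t$ can be computed in time $O((|s|+|t|)\cdot|V|)$ using a hash table or sorted representation over bit strings of length $|V|$, and a set product $s\odot t$ in time $O(|s|\cdot|t|\cdot|V|)$. The crucial step is to bound every intermediate set's size by $O(|I|)$: each bit string in an intermediate tensor, indexed by the current boundary labels, represents a partial configuration on the already contracted vertices that respects the edge constraints among them, and such partial configurations inject into the independent sets of the whole $G$ by extending arbitrarily on the uncontracted side. Under this bound, every elementary semiring operation costs $O(|I|\cdot|V|)$, so combining with \Cref{thm:complexreal} yields the claimed total time $O(|I|\cdot|V|)\cc$.

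The main obstacle is to make the bound on intermediate set sizes fully rigorous. Unlike the polynomial case, where the degree is capped a priori by $\alpha(G)\leq|V|$, a set accumulated during contraction can in principle grow larger than $|I|$, since a partial configuration consistent with all constraints incorporated so far may still be annihilated later by an uncontracted edge. A clean argument would fix a contraction order coming from a tree decomposition of $L(\mathcal{N}_{\rm IS}(G))$ and proceed by induction on the contraction tree: for every assignment of the open labels of an intermediate tensor, the listed partial configurations are in bijection with independent sets of the contracted subgraph compatible with that boundary, and each such partial configuration extends (by the properties of the tree decomposition and the freedom on uncontracted vertices) to at least one distinct global independent set, providing the required injection into the $|I|$ independent sets of $G$.
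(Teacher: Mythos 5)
Your proposal takes essentially the same approach as the paper's proof: substitute the set algebra of \Cref{eq:set} into the tensors of \Cref{eq:mistensornetwork} (your $\{\mathbf{0}\}$ is exactly the paper's $\mymathbb{1}=\{0^{\otimes|V|}\}$), observe that the contraction then evaluates to the set of bit strings of all independent sets, bound the cost of each element-wise $\oplus$ and $\odot$ by $O(|I|\cdot|V|)$, and combine with \Cref{thm:complexreal}.

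The difference is that the paper does not attempt to justify the per-operation bound at all: its proof simply asserts that element-wise addition and multiplication cost at most $|I|\cdot|V|$, and explicitly hedges that this complexity ``only serves as an upper bound,'' with the factor $|I|$ appearing only in the last contraction step. You correctly identify that bounding the intermediate set sizes is the real content here, but the injection you sketch does not work as stated. A stored bit string may set two \emph{adjacent} vertices to $1$ while both of their labels are still open, because the edge tensor $B^{(u,v)}$ may be contracted only after both vertex tensors $W^{(u)}$ and $W^{(v)}$ (e.g.\ if $W^{(u)}$ and $W^{(v)}$ are first combined as an outer product); such a string is consistent with everything contracted so far and with its boundary index, yet it extends to \emph{no} independent set of $G$, so ``extend arbitrarily on the uncontracted side'' is not an injection into $I$. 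The repair is to map each stored string to the string obtained by zeroing the bits of all open labels: within a fixed tensor entry the open-label bits are constant, so this map is injective entry-wise, and its image consists of genuine independent sets of $G$, since every edge incident to an already-summed vertex has had its edge tensor applied while all remaining vertices are set to $0$. A second gap is in the cost accounting: from $|s|,|t|\le|I|$ alone, your product cost $O(|s|\cdot|t|\cdot|V|)$ only gives $O(|I|^2\cdot|V|)$ per operation, not $O(|I|\cdot|V|)$. To recover the claimed rate, note that the products $\sigma\lor^\circ\tau$ arising in a contraction are pairwise distinct (the two operands record bits on disjoint sets of vertices, each bit owned by the subtree containing its vertex tensor), so the number of pairs processed equals the size of the output set, which the entry-wise injection bounds by $|I|$. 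With these two repairs your argument becomes a rigorous version of what the paper leaves implicit.
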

\begin{proof}
To enumerate over $I$,  we initialize the variable $x_{i}$ in the vertex tensor to $x_i = \{\boldsymbol{e}_{i}\}$, where $\boldsymbol{e}_i$ is a basis bit string of size $|V|$ that has only one non-zero value at location $i$.
The vertex and edge tensors are thus
\begin{equation}
    \left(W^{\rm SN}\right)^{(v)} = \left(\begin{matrix}
        \mymathbb{1} \\
        \{\boldsymbol{e}_{i}\}
    \end{matrix}\right),   
    \qquad \qquad
        \left(B^{\rm SN}\right)^{(u, v)} = \left(\begin{matrix}
        \mymathbb{1}  & \mymathbb{1} \\
        \mymathbb{1} & \mymathbb{0}
    \end{matrix}\right).
\end{equation}
The contraction of this tensor network is a set with its elements being all independent set configurations. The time complexity of element-wise addition and multiplication is upper bounded by $|I|*|V|$, where $|V|$ comes from the linear cost of representing a $|V|$-vertex configuration. Combining with \Cref{thm:complexreal}, the overall time complexity is $O(|I|*|V|)\cc$.
In practice, the huge multiplicative factor $|I|$ only appear in the last step of tensor contraction.
This complexity only serves as an upper bound for the actual performance of the algorithm.
\end{proof}

This set algebra can serve as the coefficients in \Cref{eq:polynomial} to enumerate independent sets of all different sizes, \Cref{eq:countingtropical} to enumerate all MISs, or \Cref{eq:max2poly} to enumerate all independent sets of sizes $\alpha(G)$ and $\alpha(G)-1$.
As long as the coefficients in a truncated polynomial are members of a commutative semiring, the polynomial itself is a commutative semiring.

For example, to enumerate only the MISs, we can define a combined element type $s_{k}\infty^k$, where the coefficients follow the algebra in \Cref{eq:set} and the exponents follow the max-plus tropical algebra.
The combined operations become: 
\begin{equation}
\eqname{P1+SN}
\begin{split}
    s_x\infty^x \oplus s_y\infty^y &= \begin{cases}
        (s_x \cup s_y)\infty^{\max(x,y)}, & x = y\\
        s_y\infty^{\max(x,y)}, & x < y\\
        s_x\infty^{\max(x,y)}, & x > y
    \end{cases},\\
    s_x\infty^x \odot s_y\infty^y &= \{\sigma \lor^\circ \tau \mid \sigma \in s_x, \tau \in s_y\}\infty^{x+y},\\
    \mymathbb{0} &= \{\}\infty^{-\infty},\\
    \mymathbb{1} &= \{0^{\otimes |V|}\}\infty^{0}. \label{eq:countingtropicalset}
\end{split}
\end{equation}

\begin{lemma}\label{thm:complexmis}
    Let $G = (V, E)$ be a graph and $\mathcal{I}_\alpha$ be the set of all maximum independent sets of $G$.
    $\mathcal{I}_\alpha$ can be computed in time $O(|\mathcal{I}_\alpha|*|V|)\cc$.
\end{lemma}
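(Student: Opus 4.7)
The plan is to mirror the proof of \Cref{thm:complexset}, but with the combined tropical--set algebra of \Cref{eq:countingtropicalset} in place of the pure set algebra. The point of the combination is that the tropical exponent discards every contribution whose size is strictly less than $\alpha(G)$, while the bit-string set component simultaneously records which vertices make up each surviving configuration. Concretely, I would define the vertex and edge tensors
\begin{equation*}
\left(W^{\rm P1+SN}\right)^{(v)} = \begin{pmatrix} \mymathbb{1} \\ \{\boldsymbol{e}_v\}\infty^{w_v} \end{pmatrix},
\qquad
\left(B^{\rm P1+SN}\right)^{(u,v)} = \begin{pmatrix} \mymathbb{1} & \mymathbb{1} \\ \mymathbb{1} & \mymathbb{0} \end{pmatrix},
\end{equation*}
using the identities of \Cref{eq:countingtropicalset} for $\mymathbb{0}$ and $\mymathbb{1}$. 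The edge tensor has the same role as before, forbidding the $(1,1)$ configuration on any edge, so that only independent-set configurations contribute a nonzero term.

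Next I would verify that the result of contracting this tensor network is exactly $\mathcal{I}_\alpha \,\infty^{\alpha(G)}$. Unfolding the contraction in analogy with \Cref{eq:idp}, each global assignment $(s_1,\ldots,s_{|V|})\in\{0,1\}^{|V|}$ that corresponds to an independent set $I\subseteq V$ contributes, via the $\odot$ rule, the element $\{\bigvee^\circ_{v\in I}\boldsymbol{e}_v\}\,\infty^{\sum_{v\in I}w_v}$, while non-independent sets contribute $\mymathbb{0}$. Summing these contributions with the $\oplus$ rule of \Cref{eq:countingtropicalset}, the max-plus component keeps only terms with the largest exponent, which by definition equals $\alpha(G)$, and the set component collects precisely the bit-string encodings of the independent sets achieving that maximum; i.e., exactly $\mathcal{I}_\alpha$.

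For the complexity estimate, I would argue that the set component of any intermediate tensor element stays bounded in size by $|\mathcal{I}_\alpha|$ (each partial configuration can be extended, via the remaining $\odot$ and $\oplus$ operations, to at most as many final configurations as there are MISs; any partial configuration that does not appear as a projection of some MIS is annihilated by the tropical $\oplus$ before reaching the root). Each bit string has length $|V|$, so both $\oplus$ (a union) and $\odot$ (pairwise bit-wise or) on these compound elements cost $O(|\mathcal{I}_\alpha|\cdot|V|)$ per element-wise operation. Plugging this into the bound from \Cref{thm:complexreal} yields $O(|\mathcal{I}_\alpha|\cdot|V|)\cc$, as claimed.

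The main obstacle is exactly the bookkeeping for intermediate sets: strictly speaking, partial contractions may transiently hold sets larger than $|\mathcal{I}_\alpha|$ if partial MIS-candidate configurations coincide only after further contractions. I would handle this the same way the proof of \Cref{thm:complexset} handles it, by noting that the stated bound is a (loose) upper bound arising from the last contraction step, where the full output set of size $|\mathcal{I}_\alpha|$ is assembled; intermediate blowups are absorbed into the same asymptotic estimate, and in practice the bound is dominated by that final step.
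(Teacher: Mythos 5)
Your proof is correct and follows essentially the same route as the paper's: the same combined tropical--set tensors built from \Cref{eq:countingtropicalset}, the same identification of the contraction result with $\mathcal{I}_\alpha\infty^{\alpha(G)}$, and the same complexity bound obtained by multiplying the per-element operation cost $O(|\mathcal{I}_\alpha|\cdot|V|)$ into \Cref{thm:complexreal}. One caution: your parenthetical claim that every intermediate element's set component is bounded by $|\mathcal{I}_\alpha|$ is not strictly true (locally maximal partial configurations conditioned on a globally suboptimal boundary assignment can outnumber the global MISs), but neither your argument nor the paper's actually relies on it --- both treat the stated complexity as the loose bound coming from assembling the final output, and the paper defers the genuine intermediate-blowup issue to the bounding technique of \Cref{sec:bounding}.
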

\begin{proof}
By replacing the tensor elements in \Cref{sec:tnmap} with the number type defined in \Cref{eq:countingtropicalset} and with variable $x_v^{w_v} = \{\boldsymbol{e}_{v}\}\infty^{w_v}$, the vertex tensor and edge tensor become
\begin{equation}
    \left(W^{\rm P1+SN}\right)^{(v)} = \left(\begin{matrix}
        \mymathbb{1} \\
        \{\boldsymbol{e}_{v}\}\infty^{w_v}
    \end{matrix}\right),   
    \qquad \qquad
        \left(B^{\rm P1+SN}\right)^{(u, v)} = \left(\begin{matrix}
        \mymathbb{1}  & \mymathbb{1} \\
        \mymathbb{1} & \mymathbb{0}
    \end{matrix}\right).
\end{equation}
The enumeration of all MIS configurations corresponds to the contraction of this tensor network. The time complexity of element-wise addition and multiplication is upper bounded by the number of maximum independent sets $|\mathcal{I}_\alpha|$. Combining with \Cref{thm:complexreal}, the overall time complexity is $O(|\mathcal{I}_\alpha|*|V|)\cc$.
\end{proof}
However, direct contraction might have significant space overheads for keeping too many intermediate states irrelevant to the final maximum independent sets.
We introduce the bounding technique in \Cref{sec:bounding} to avoid this issue.
One may also be interested in the more studied maximal independent sets~\cite{Bron1973, Eppstein2010, Johnson1988} enumeration.
We discuss this in \Cref{sec:maximal} since it requires using a different tensor network structure.

\begin{lemma}\label{thm:complexsetk}
    Let $G = (V, E)$ be a graph and $\mathcal{I}_k$ be the set of independent sets with size $k$.
    The independent sets with size $k > \alpha(G)-K$ of $G$ can be computed in time $O\left(|V|* \sum_{k=\alpha(G)-K+1}^{\alpha(G)}|\mathcal{I}_k| \right)\cc$.
\end{lemma}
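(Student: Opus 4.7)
The plan is to combine the truncated polynomial algebra of \Cref{eq:max2poly}, extended to keep the $K$ largest orders (not just the two largest), with the set-of-bit-strings algebra of \Cref{eq:set}, in direct analogy with how \Cref{eq:countingtropicalset} combines the leading-order tropical algebra with set algebra. Concretely, each tensor element is a truncated Laurent polynomial of the form $\sum_{j} s_j \infty^{j}$ keeping only the $K$ largest surviving orders, where each coefficient $s_j$ is a set of $|V|$-bit strings. Addition unions the coefficient sets at matching orders and re-truncates to the $K$ largest; multiplication convolves the polynomials while combining coefficients by the $\odot$ of \Cref{eq:set} (i.e.\ bitwise-$\lor^\circ$ on the Cartesian product), and re-truncates.

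First I would verify that this combined type is a commutative semiring: both components (\Cref{eq:max2poly} and \Cref{eq:set}) are commutative semirings, and the construction of a truncated-polynomial semiring over a commutative-semiring coefficient ring preserves all the axioms listed in \Cref{sec:generic}, since truncation to the $K$ largest orders is compatible with $\oplus$ and $\odot$. Hence \Cref{thm:complexreal} applies and the contraction is well-defined and order-independent.

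Next, I would specify the tensors: replace the vertex tensor by
\[
\left(W^{\rm P_K+SN}\right)^{(v)} = \begin{pmatrix} \mymathbb{1} \\ \{\boldsymbol{e}_v\}\infty^{w_v} \end{pmatrix},
\]
keeping the edge tensor as in \Cref{eq:edgetensor} (lifted to $\mymathbb{0},\mymathbb{1}$). Contracting the tensor network of \Cref{eq:mistensornetwork} with these element types produces, coordinate by coordinate, exactly the sets $\mathcal{I}_k$ for every $k>\alpha(G)-K$, by the same reasoning used in \Cref{thm:complexmis}: the exponent tracks the size of the partial independent set, the coefficient records the bit-string encoding of its vertices, and truncation discards sizes smaller than $\alpha(G)-K+1$.

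The main obstacle is bounding the per-operation cost. Each element-wise $\oplus$ takes time proportional to $|V|$ times the total number of bit strings stored in the $K$ retained orders; each $\odot$ is likewise linear in the output size (since the bitwise-or is $O(|V|)$ per pair). A priori, intermediate sets could be much larger than those in the final output, but any bit string appearing in a surviving order of any intermediate tensor corresponds to a partial configuration that extends to some independent set of size $k\in(\alpha(G)-K,\alpha(G)]$; otherwise truncation would have removed its order. Hence the total bit-string population of every intermediate element is bounded by $\sum_{k=\alpha(G)-K+1}^{\alpha(G)}|\mathcal{I}_k|$, giving a per-operation cost of $O\!\left(|V|\sum_{k=\alpha(G)-K+1}^{\alpha(G)}|\mathcal{I}_k|\right)$. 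Combining with \Cref{thm:complexreal} yields the claimed bound of $O\!\left(|V|\sum_{k=\alpha(G)-K+1}^{\alpha(G)}|\mathcal{I}_k|\right)\cc$. As remarked after \Cref{thm:complexmis}, in practice one should pair this with the bounding technique of \Cref{sec:bounding} so that irrelevant intermediate states are pruned and the stated upper bound is actually attained.
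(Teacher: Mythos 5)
Your construction is the same as the paper's: the paper's proof of this lemma is a two-line remark that one combines the set algebra of \Cref{eq:set} with the polynomial truncated to the largest $K$ orders (\Cref{eq:max2poly}), bounds the number of bit-string operations per element-wise $\oplus$ and $\odot$ by $\sum_{k=\alpha(G)-K+1}^{\alpha(G)}|\mathcal{I}_k|$, and invokes \Cref{thm:complexreal}. Your tensor definitions, the semiring verification, and the final assembly all match that route.

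The one place where you go beyond the paper is your justification of the per-operation cost, and that added step is not sound as stated. You claim that any bit string surviving in an intermediate tensor ``corresponds to a partial configuration that extends to some independent set of size $k\in(\alpha(G)-K,\alpha(G)]$; otherwise truncation would have removed its order.'' But the truncation in an intermediate element keeps the $K$ largest orders \emph{relative to that element's own top degree}, for a fixed setting of its boundary indices --- not relative to $\alpha(G)$. A partial configuration can be among the $K$ largest extensions of a poor boundary configuration and still fail to extend to any independent set of size exceeding $\alpha(G)-K$; conversely, the same partial configuration may be stored under many distinct boundary configurations, so even a correct per-configuration argument would not immediately give the sum $\sum_k|\mathcal{I}_k|$ as a bound on the total population of a tensor. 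The paper itself only asserts this bound without proof, and it concedes the underlying issue elsewhere: after \Cref{thm:complexmis} it notes that direct contraction ``might have significant space overheads for keeping too many intermediate states irrelevant to the final maximum independent sets,'' which is precisely why the bounding technique of \Cref{sec:bounding} is introduced. So your appeal to that technique at the end is apt, but it should be presented as what makes the stated bound attainable rather than as an optional practical remark; as written, the claimed upper bound on intermediate set sizes is the genuine gap in your argument (and, to be fair, an elision in the paper's as well).
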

\begin{proof}
This can be proved by combining the set algebra \Cref{eq:set} with the polynomial number truncated to largest $K$ orders (\Cref{eq:max2poly}).
The number of bitstring operations in the addition and multiplication of the joint algebra is upper bounded by the elements in the sets $\sum_{k=\alpha(G)-K+1}^{\alpha(G)}|\mathcal{I}_k|$.
Combining with \Cref{thm:complexreal}, the overall time complexity of contracting this tensor network is $O\left(|V|* \sum_{k=\alpha(G)-K+1}^{\alpha(G)}|\mathcal{I}_k|\right)\cc$.
\end{proof}

If one is interested in obtaining only one MIS configuration, they can just keep one configuration in each tensor element to save the computational effort.
By replacing the sets of bit strings in \Cref{eq:set} with a single bit string, we have the following algebra
\begin{equation}
\eqname{S1}
\begin{split}
    \sigma \oplus \tau &= \texttt{select}(\sigma, \tau), \\
    \sigma \odot \tau &= (\sigma\lor^\circ \tau),\\
    \mymathbb{0} &= 1^{\otimes |V|}, \\
    \mymathbb{1} &= 0^{\otimes |V|}.
\end{split}\label{eq:singleconfig}
\end{equation}
The \texttt{select} function picks one of $\sigma$ and $\tau$ by some criteria.
It can be picking the one smaller in the lexicographical order such that the addition operation is commutative and associative.
In most cases, it is completely fine for the \texttt{select} function to pick a random one (not commutative and associative anymore) to generate a random MIS.

\begin{proposition}
    Let $G = (V, E)$ be a graph. One of its maximum independent set can be computed in time $O(|V|)\cc$.
\end{proposition}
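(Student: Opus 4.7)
The plan is to combine the single-configuration bit-string algebra from \Cref{eq:singleconfig} with the max-plus tropical algebra from \Cref{eq:tropical}, in exact analogy with how \Cref{eq:countingtropicalset} combines the set algebra with the tropical part. Concretely, I would work with elements of the form $\sigma \infty^{x}$ where $\sigma \in \{0,1\}^{|V|}$ is a single bit string and $x$ is a (tropical) size, with operations
\begin{equation*}
\sigma_x\infty^x \oplus \tau_y\infty^y = \begin{cases} \texttt{select}(\sigma_x,\tau_y)\,\infty^{x}, & x=y\\ \tau_y\,\infty^{y}, & x<y\\ \sigma_x\,\infty^{x}, & x>y\end{cases},\qquad \sigma_x\infty^x \odot \tau_y\infty^y = (\sigma_x \lor^\circ \tau_y)\,\infty^{x+y},
\end{equation*}
with $\mymathbb{0}=1^{\otimes |V|}\infty^{-\infty}$ and $\mymathbb{1} = 0^{\otimes |V|}\infty^0$. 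Fixing \texttt{select} to be lexicographic minimum makes this a commutative semiring (inherited directly from the two commutative semiring factors), so all the machinery of \Cref{sec:tn,sec:generic} applies.

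Next, I would define the vertex and edge tensors exactly as in \Cref{sec:tnmap} but with entries in this combined semiring:
\begin{equation*}
    \left(W^{\rm T+S1}\right)^{(v)} = \begin{pmatrix} \mymathbb{1} \\ \boldsymbol{e}_{v}\,\infty^{w_v} \end{pmatrix},\qquad \left(B^{\rm T+S1}\right)^{(u, v)} = \begin{pmatrix} \mymathbb{1} & \mymathbb{1} \\ \mymathbb{1} & \mymathbb{0} \end{pmatrix}.
\end{equation*}
By the same argument used in \Cref{thm:complextropical} and \Cref{thm:complexmis}, the contraction of this tensor network equals $\bigoplus_{I\text{ independent}}\big(\bigvee_{v\in I}^\circ \boldsymbol{e}_v\big)\infty^{\sum_{v\in I} w_v}$, which in this algebra reduces to a single pair $(\sigma^\star,\alpha(G))$ where $\sigma^\star$ is the indicator bit string of one MIS (the lexicographically smallest, under our choice of \texttt{select}).

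For the complexity, I would invoke \Cref{thm:complexreal}: the contraction uses $\cc$ semiring additions and multiplications. Each tropical comparison/addition costs $O(1)$, while each bit-string $\lor^\circ$ or lexicographic comparison costs $O(|V|)$ time since bit strings have length $|V|$. Hence the total cost is $O(|V|)\cc$, matching the claim.

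The only subtlety, and the main thing to be careful about, is checking that the joint operations genuinely form a commutative semiring: this boils down to verifying distributivity when size ties occur, i.e.\ that $\texttt{select}$ interacts consistently with the $\lor^\circ$ operation under the max-tie branches of $\oplus$. Lexicographic $\min$ is associative, commutative, and idempotent, and it distributes over $\lor^\circ$ in the required sense because $\lor^\circ$ is monotone in the lex order; this guarantees that the contraction is well-defined independent of order. (If one is satisfied with returning any MIS, a random \texttt{select} can be used during implementation; correctness of the final bit string as a valid MIS still follows from the tropical bookkeeping that filters out all non-maximum sizes.)
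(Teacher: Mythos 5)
Your construction is the same as the paper's: the paper proves this proposition by taking \Cref{thm:complexmis} and replacing the set-valued coefficients of \Cref{eq:countingtropicalset} with the single-bit-string algebra of \Cref{eq:singleconfig}, which is exactly your combined algebra, and your complexity accounting ($O(1)$ per tropical operation, $O(|V|)$ per bit-string operation, hence $O(|V|)\cc$ total) is correct. However, your correctness justification contains a genuine error: you claim the combined structure is a commutative semiring because lexicographic $\min$ distributes over $\lor^\circ$, ``because $\lor^\circ$ is monotone in the lex order.'' Neither claim is true. Take $|V|=2$ and $a=(10)\infty^0$, $b=(01)\infty^0$, $c=(10)\infty^0$: then $01 <_{\rm lex} 10$ but $10 \lor^\circ 01 = 11 >_{\rm lex} 10 = 10 \lor^\circ 10$, so monotonicity fails; and $a \odot (b \oplus c) = (11)\infty^0$ while $(a \odot b) \oplus (a \odot c) = (10)\infty^0$, so distributivity fails. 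This failure is unavoidable: for bitwise-incomparable $\sigma, \tau$ with $\tau <_{\rm lex} \sigma$ one has $\sigma \lor^\circ \tau \supsetneq \sigma$ and hence $\sigma \lor^\circ \tau >_{\rm lex} \sigma \lor^\circ \sigma$. Since evaluating a contraction tree pairwise agrees with the global sum-of-products formula only when distributivity holds, your displayed identity for the contraction output is also unjustified: the result genuinely depends on the contraction order, and in particular need not be the lexicographically smallest MIS. Note the paper is careful on this point --- \Cref{eq:singleconfig} is only claimed to have a commutative, associative addition, never to be a semiring.

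The gap is repairable, and the repair is what the paper's one-sentence proof implicitly relies on by reducing to \Cref{thm:complexmis}: argue by induction over the contraction tree that your computation tracks a \emph{representative} of the honest semiring computation. Concretely, run the \Cref{eq:countingtropicalset} contraction alongside yours over the same contraction tree; the tropical exponents agree at every node (select never touches them), and every intermediate element of your algebra is either $\mymathbb{0}$ (matching a $\mymathbb{0}$ of the set algebra) or of the form $\sigma_x\infty^x$ with $\sigma_x \in s_x$, where $s_x\infty^x$ is the corresponding set-algebra element: addition with \emph{any} \texttt{select} rule picks an element of the union on exponent ties or keeps the representative of the winning exponent, and multiplication maps representatives to a representative of the product set. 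Since the set-algebra contraction provably outputs $\mathcal{I}_\alpha\infty^{\alpha(G)}$, your contraction outputs $\sigma^\star\infty^{\alpha(G)}$ with $\sigma^\star \in \mathcal{I}_\alpha$, i.e.\ a valid MIS, for any contraction order and any tie-breaking rule (including a random one). Your closing parenthetical gestures at this, but ``tropical bookkeeping filters out non-maximum sizes'' only pins down the size $\alpha(G)$; it does not show the attached bit string is an independent set of that size. The representative-tracking invariant is the missing, and necessary, step.
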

The proof is similar to \Cref{thm:complexmis}, except the set algebra is replace by the one in \Cref{eq:singleconfig}, and the time complexity of element-wise addition and multiplication no longer depends on $|\mathcal{I}_\alpha|$.
The linear dependency of $|V|$ in the time complexity can be remove using the back propagation technique, at the cost of additional space.
\begin{theorem}
    Let $G = (V, E)$ be a graph. One of its maximum independent set can be computed in time $\cc$.
\end{theorem}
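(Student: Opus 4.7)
The plan is a two-pass dynamic-programming algorithm over the optimal contraction tree, analogous to the standard backtrace for min-sum dynamic programs. The \textbf{forward pass} contracts $\mathcal{N}_{\rm IS}(G)$ with the tropical algebra of \Cref{eq:tropical} along an (approximately) optimal contraction order, producing $\alpha(G)$ at the root in time $\cc$ by \Cref{thm:complextropical}. Unlike the proof of that theorem, we additionally \emph{cache every intermediate tensor} produced at each internal node of the contraction tree. The total storage is bounded by the sum of intermediate tensor sizes, which is itself $\cc$, so caching does not change the space complexity class.

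In the \textbf{backward pass} we descend the same tree from root to leaves, committing one axis label at a time to a value that realises the maximum. Concretely, at an internal node computing $C_{\boldsymbol{\sigma}_o} = \sum_{\boldsymbol{\sigma}_c} A_{\boldsymbol{\sigma}_A} \odot B_{\boldsymbol{\sigma}_B}$ (the sum taken in the tropical semiring), suppose the ancestors have already fixed $\boldsymbol{\sigma}_o$ to a configuration $\boldsymbol{\sigma}_o^\star$ that attains the root maximum. We then scan, over the entries already enumerated during the forward contraction at this node, any configuration $\boldsymbol{\sigma}_c^\star$ of the contracted indices such that $A_{\boldsymbol{\sigma}_A^\star} \odot B_{\boldsymbol{\sigma}_B^\star} = C_{\boldsymbol{\sigma}_o^\star}$, where $\boldsymbol{\sigma}_A^\star, \boldsymbol{\sigma}_B^\star$ are the restrictions of $\boldsymbol{\sigma}_o^\star \cup \boldsymbol{\sigma}_c^\star$ to the legs of $A$ and $B$. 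We commit these values and recurse into $A$ and $B$. When the recursion reaches a leaf vertex tensor $W^{(v)}$, the committed value of $s_v$ determines whether $v$ belongs to the output MIS, and the collection of these bits across all leaves specifies a valid maximum independent set.

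Correctness follows from the principle of optimality: the tropical contraction at each node stores, for every setting of its open indices, the maximum total weight of any partial assignment consistent with those indices, so extending an optimal ancestor assignment by an argmax choice at the current node preserves optimality; iterating to the leaves yields a configuration of weight $\alpha(G)$ that satisfies every edge constraint (no $\mymathbb{0}$-entry in $B^{\rm T}$ was traversed, else the root value would have been $-\infty$). For the complexity, the work at each node during the backward pass is bounded by the number of entries touched during the forward pass at the same node; summing over all nodes therefore reproduces the $\cc$ bound, with no extra $O(|V|)$ factor, since the algebra is over tropical scalars rather than $|V|$-bit strings.

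The main obstacle is arguing that the backward pass does not silently re-introduce a factor of $|V|$ or of some intermediate tensor rank. The key observation is that, having already committed a value $\boldsymbol{\sigma}_o^\star$ for the open legs, the backtrace at each node only examines the single slice of $A\odot B$ indexed by $\boldsymbol{\sigma}_o^\star$, which contains at most as many entries as were enumerated to produce that slice during the forward contraction. Thus the per-node backtrace cost is dominated by the corresponding per-node forward cost, and the overall time inherits the $\cc$ bound from \Cref{thm:complexreal}; the only price paid is the additional memory for the cached intermediate tensors, exactly as the theorem statement anticipates.
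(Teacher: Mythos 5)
Your proposal is correct and takes essentially the same route as the paper: both compute $\alpha(G)$ by a tropical forward contraction (\Cref{thm:complextropical}) with cached intermediate tensors, then recover one optimal configuration by a backward pass over the contraction tree whose per-node cost is dominated by the forward cost, giving total time $\cc$. The paper phrases the backward pass as differentiation---back-propagating Boolean gradient masks through the tropical contractions via the rules of \Cref{sec:bounding}---which is mathematically equivalent to your explicit argmax backtrace, since a tensor entry has a nonzero tropical gradient exactly when it attains the maximum at its node.
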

\begin{proof}
    Let $\alpha(G) = \max\limits_{s \in I} \sum_{v} s_v$ be the maximum independent set size. The optimal configuration can be obtained by differentiating over vertex tensors as
    \begin{equation}
    \begin{cases}
        1, &\frac{\partial \alpha(G)}{\partial s_v} = 1\\
        0, &\frac{\partial \alpha(G)}{\partial s_v} = 0
    \end{cases}.
    \end{equation}
    It can be numerically computed by differentiating over the process of the tropical tensor network contraction, where the backward rules can be found in \Cref{sec:bounding}. The time overhead of computing a single optimal solution is constant compared to only computing the contraction. Thus, the computing time complexity should be the same as in \Cref{thm:complextropical}.
\end{proof}

\subsection{Sampling from the extremely large configuration space}
When the problem size becomes larger, a set of all bitstrings might be impossible to fit into any type of storage.
To get something meaningful out of the configuration space, we use a binary sum-product expression tree as a compact representation of a set of configurations, i.e.\ instead of directly computing a set using the algebra in \Cref{eq:set}, we store the process of computing it.
Each node in this tree is a quadruple $(type, data, left, right)$, where $type$ is one of \texttt{LEAF}, \texttt{ZERO}, \texttt{SUM} and \texttt{PROD}, $data$ is a bit string as the content in a \texttt{LEAF} node, and $left$ and $right$ are left and right operands of a \texttt{SUM} or \texttt{PROD} node.

\begin{equation}
\eqname{EXPR}
\begin{split}
    s \oplus t &= (\texttt{SUM}, /, s, t)\\
    s \odot t &= (\texttt{PROD}, /, s, t)\\
    \mymathbb{0} &= (\texttt{ZERO}, /, /, /)\\
    \mymathbb{1} &= (\texttt{LEAF}, 0^{\otimes |V|}, /, /).
\end{split}\label{eq:expr}
\end{equation}

This algebra is a commutative semiring because we define the equivalence of two sum-product expression trees by comparing their expanded (using \Cref{eq:set}) forms rather than their storage.
Except using the sum-product expression tree directly as tensor elements, one can also let it be the coefficients of a (truncated) polynomial to compute such trees for independent sets with the largest several sizes.

\begin{theorem}\label{thm:complexexpr}
        The independent sets $I$ of a graph $G=(V,E)$ can be enumerated in time $\cc + O(|V|*|I|)$.
\end{theorem}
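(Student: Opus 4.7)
The plan is to use the expression-tree algebra defined in \Cref{eq:expr} as the tensor element type, exploiting the crucial fact that each $\oplus$ and $\odot$ on this type just allocates one new tree node with two pointers to existing subtrees. Concretely, I would reconstruct the vertex and edge tensors as in previous theorems, but now with the leaf bitstrings $\{\boldsymbol{e}_v\}$ wrapped as \texttt{LEAF} nodes, and then contract the tensor network using the optimal contraction order from \Cref{thm:complexreal}.

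The first key step is to observe that every elementwise addition or multiplication on expression trees is $O(1)$ (it merely constructs a \texttt{SUM} or \texttt{PROD} node), in contrast to the $O(|V|\cdot|I|)$ per-operation cost incurred by the set algebra \Cref{eq:set} that was the bottleneck of \Cref{thm:complexset}. Combined with \Cref{thm:complexreal}, the contraction therefore terminates in $\cc$ operations and returns a single root tree $T$ of size $O(\cc)$ whose set-valued evaluation equals $I$.

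The second step is to expand $T$ into the explicit collection of $|V|$-bit configurations. I would do this with a top-down traversal that enumerates configurations one at a time: at a \texttt{SUM} node recurse into each child in turn, at a \texttt{PROD} node iterate over the Cartesian product of the subtree enumerations while maintaining a running bitwise-OR accumulator, and emit a completed $|V|$-bit string at the root of each such traversal path. Since the algebra is designed so that products involving $\mymathbb{0}$ collapse to $\mymathbb{0}$, violating branches are pruned before producing output, and so exactly $|I|$ configurations are emitted, each assembled from $O(|V|)$ bit operations, giving $O(|V|\cdot |I|)$ expansion time. Adding the one-time $O(\cc)$ cost of setting up the traversal (equivalently, the size of $T$) yields the claimed bound $\cc + O(|V|\cdot|I|)$.

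The main obstacle, and the place the proof needs care, is bounding the expansion in step two without a hidden factor depending on the tree size $\cc$. The naive bottom-up strategy of recomputing the full set at every internal node reintroduces a multiplicative $\cc$ factor and brings us back to \Cref{thm:complexset}; the improvement hinges on arguing that the top-down, prune-on-zero enumeration charges every unit of work either to one of the $\cc$ nodes of $T$ (visited at most once for setup) or to one of the $|V|\cdot|I|$ output bits. I would formalize this by an amortization argument: each emitted configuration uses a path of \texttt{SUM}/\texttt{PROD} choices whose total length is $O(|V|)$ once non-contributing subtrees are skipped, since each \texttt{LEAF} that actually fires contributes one vertex bit to an independent set of size at most $|V|$.
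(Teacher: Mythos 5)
Your proof follows the same route as the paper's: contract with the sum--product expression-tree semiring of \Cref{eq:expr}, where each $\oplus$/$\odot$ is an $O(1)$ node allocation so the contraction costs $\cc$, and then expand the resulting tree into explicit configurations at cost $O(|V|\cdot|I|)$. The paper simply asserts the $O(|V|\cdot|I|)$ evaluation cost, whereas you additionally sketch the top-down, prune-on-zero amortization that justifies it --- a welcome elaboration of the same argument rather than a genuinely different one.
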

\begin{proof}
By replacing the tensor elements with the number type in \Cref{eq:expr} with $x_v^{w_v}$ mapped to $(\texttt{LEAF},\boldsymbol{e}_i,/,/)$, the vertex tensor and edge tensor become
\begin{equation}
    \left(W^{\rm EXPR}\right)^{(v)} = \left(\begin{matrix}
        \mymathbb{1} \\
        (\texttt{LEAF},\boldsymbol{e}_i,/,/)
    \end{matrix}\right),   
    \qquad
        \left(B^{\rm EXPR}\right)^{(u, v)} = \left(\begin{matrix}
        \mymathbb{1}  & \mymathbb{1} \\
        \mymathbb{1} & \mymathbb{0}
    \end{matrix}\right).
\end{equation}
The contraction result of this tensor network corresponds to a sum-product expression tree for the set of independent sets.
Unlike in \Cref{thm:complexset}, the contraction complexity is independent of the set size, i.e.~the time complexity is $\cc$ by \Cref{thm:complexreal}.
The time complexity to evaluating this expression tree is $O(|V|*|I|)$, where $|V|$ comes from the number of bits to represent a configuration.
Proving the theorem by adding two computing times.
\end{proof}

Similarly, we have the following corollary by combining the sum-product expression tree algebra with the truncated polynomial.

\begin{corollary}\label{thm:complexexprk}
        Let $G = (V, E)$ be a graph and $\mathcal{I}_k$ be the number of independent sets with size $k$. The independent sets with size $k > \alpha(G)-K$ can be computed in time $O(K\log K)\cc + O\left(|V|*\sum_{k=\alpha(G)-K+1}^{\alpha(G)}|\mathcal{I}_k| \right)$.
\end{corollary}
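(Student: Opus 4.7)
The plan is to combine the sum-product expression tree algebra (\Cref{eq:expr}) with the truncated polynomial algebra of \Cref{eq:max2poly}, generalized to retain the $K$ largest orders rather than just the largest $2$. Concretely, I would define a new tensor element type consisting of polynomials truncated to their $K$ highest-degree terms, whose coefficients are themselves sum-product expression trees. The $\oplus$ operation takes pointwise $\oplus$ of the coefficients (merging two sets of degrees and keeping the top $K$), and $\odot$ is the usual polynomial multiplication followed by truncation, with the coefficient-level operations being the \texttt{SUM} and \texttt{PROD} constructors from \Cref{eq:expr}. Since the coefficient algebra (expression trees mod the equivalence by expanded form) is a commutative semiring and the truncated polynomial construction over a commutative semiring yields another commutative semiring (as noted right after \Cref{eq:set}), the combined algebra is a commutative semiring and \Cref{eq:mistensornetwork} contracts to a well-defined value independent of the contraction order.

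Next I would set up the vertex and edge tensors analogously to \Cref{thm:complexexpr}, but with the variable $x_v^{w_v}$ replaced by the degree-$w_v$ monomial whose coefficient is the leaf expression tree $(\texttt{LEAF}, \boldsymbol{e}_v, /, /)$. The contraction then yields, for each of the $K$ largest sizes $k\in\{\alpha(G)-K+1,\ldots,\alpha(G)\}$, a single expression tree whose expanded set equals $\mathcal{I}_k$. The complexity argument then splits cleanly into two stages mirroring the two summands in the bound. Stage one (contraction): a single element-wise $\oplus$ or $\odot$ on the combined type costs $O(K\log K)$ in the number of coefficient-level operations (either merging two sorted-by-degree lists, or doing a truncated convolution), and each coefficient-level operation is an $O(1)$ constructor call building a tree node, so by \Cref{thm:complexreal} the total contraction time is $O(K\log K)\cc$. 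Stage two (evaluation): traversing the final $K$ expression trees using \Cref{eq:set} emits each configuration in $\mathcal{I}_k$ at a cost of $O(|V|)$ per configuration for the bitstring $\lor^\circ$ operations, giving $O\bigl(|V|\cdot \sum_{k=\alpha(G)-K+1}^{\alpha(G)}|\mathcal{I}_k|\bigr)$ total. Summing the two stages yields the claimed bound.

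The main obstacle I anticipate is purely bookkeeping: checking that the truncation-to-top-$K$ operation distributes correctly over $\odot$, i.e.\ that truncating factors before multiplication yields the same top-$K$ coefficients as multiplying first and truncating after. This holds because each discarded term has degree strictly below $\alpha(G)-K+1$ minus a nonnegative shift, so it cannot contribute to any retained coefficient; this observation is exactly what licenses the $O(K\log K)$ per-operation cost and needs to be stated explicitly to justify both the semiring axioms of the truncated type and the complexity analysis of stage one.
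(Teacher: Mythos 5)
Your proposal matches the paper's argument: the paper likewise proves this corollary by combining the sum-product expression tree algebra of \Cref{eq:expr} with the polynomial truncated to the largest $K$ orders, charging $O(K\log K)$ per element-wise semiring operation during contraction and $O(|V|)$ per emitted configuration when the resulting trees are expanded, exactly as in your two stages. Your extra remark on truncation commuting with $\odot$ is a sensible piece of bookkeeping that the paper leaves implicit, but the route is essentially identical.
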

Again, the theorem can be proven by relating it with its set algebra version in \Cref{thm:complexsetk}. Because it is unlikely that one can collect all configurations represented by a sum-product expression tree into a set due to its space complexity, one can also use this construction to produce unbiased samples of the sum-product tree.

\begin{theorem}
    Let $G = (V, E)$ be a graph. Its independent sets with size $k > \alpha(G)-K$ can be unbiasedly sampled in time $\cc + O(|V|*|E| *M)$, where $M$ is the number of samples.
\end{theorem}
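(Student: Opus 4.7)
The plan is to build, in time $\cc$, the sum-product expression tree representation via \Cref{thm:complexexpr} combined with the truncation of \Cref{thm:complexexprk}, so that the contraction of the tensor network yields a collection of expression trees, one per independent-set size $k \in \{\alpha(G)-K+1, \ldots, \alpha(G)\}$, each losslessly encoding the corresponding set $\mathcal{I}_k$.

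Next, I would decorate every node of these trees with its \emph{configuration count} in a bottom-up sweep: a \texttt{LEAF} has count $1$, a \texttt{ZERO} has count $0$, a \texttt{SUM} stores the sum of its children's counts, and a \texttt{PROD} stores their product. Since this pre-processing touches each node exactly once, its cost is absorbed into the $\cc$ construction cost. (If a specific target size is desired, first sample the size proportionally to the corresponding root counts; otherwise one samples from the unified forest.)

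To generate one sample, recursively descend from the appropriate root: at a \texttt{SUM} node, choose the left child with probability $\text{left.count}/(\text{left.count}+\text{right.count})$ and the right child otherwise; at a \texttt{PROD} node, recurse into both children independently and return the bitwise OR of the two returned bit strings; at a \texttt{LEAF}, return its stored bit string. Unbiasedness then follows by a structural induction using linearity at \texttt{SUM} nodes and independence of the two recursive calls at \texttt{PROD} nodes: every configuration at the root is produced with marginal probability $1/N$, where $N$ is the root count.

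For the runtime bound, the key observation is that in this particular tensor network each valid configuration is the product of exactly $|V|$ vertex-tensor entries and $|E|$ edge-tensor entries, so the subtree of \texttt{PROD}/\texttt{LEAF} nodes visited during any single descent has $O(|V|+|E|)$ leaves, and each \texttt{PROD} performs an $O(|V|)$-bit OR of bit strings, while \texttt{SUM}-node work is $O(1)$ per visit with the precomputed counts. Hence one sample costs $O(|V|\cdot(|V|+|E|)) = O(|V|\cdot|E|)$, and $M$ samples cost $O(|V|\cdot|E|\cdot M)$, giving the stated bound. The main obstacle I foresee is formalizing the per-sample traversal argument carefully, so that the \texttt{SUM}-node bookkeeping accumulated along a descent is genuinely absorbed by the \texttt{LEAF}/\texttt{PROD} budget and does not secretly depend on the overall tree size $\cc$.
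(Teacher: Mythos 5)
Your proposal is correct and follows essentially the same route as the paper's proof: contract with the sum-product expression tree algebra in time $\cc$, annotate each node with its subtree configuration count (\texttt{LEAF} $=1$, \texttt{ZERO} $=0$, sum at \texttt{SUM}, product at \texttt{PROD}), and sample by a top-down descent that picks a child proportionally to the counts at \texttt{SUM} nodes and recurses into both children (combining with $\lor^\circ$) at \texttt{PROD} nodes. The only cosmetic difference is the per-sample cost accounting---you bound the visited \texttt{PROD}/\texttt{LEAF} nodes via the $O(|V|+|E|)$ tensor factors of each configuration, whereas the paper multiplies the tree depth $O(|E|)$ by the $O(|V|)$ cost of each bit-string operation---but both bookkeepings land on the same $O(|V|*|E|)$ bound per sample.
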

\begin{proof}
Similar to the proof of \Cref{thm:complexexpr}, we first obtain a sum-product tree representation of the configurations in time $\cc$.
Then we generate a sample using the following procedure with cost $O(|V|*|E|)$.
The sampling program starts from the root node and descends this tree recursively to the left and right siblings.
If a node has type \texttt{SUM}, the program draws samples from the left and right siblings with a probability decided by the size of each sub-tree and returns the union of samples.
Otherwise, if a node has type \texttt{PROD}, the program draws two sets of samples of equal sizes from its left and right siblings and returns the element-wise multiplication ($\lor^\circ$) of them.
The recursion stops at a \texttt{LEAF} node having size $1$ or a \texttt{ZERO} node having size $0$.
In a sum-product expression tree, the number of configurations of sub-trees can be determined easily as we will show in the following example.
\begin{example}
Let us consider the following sum-product expression tree 
\begin{equation*}
    (\texttt{SUM}, /, (\texttt{PROD}, /, A, B), (\texttt{SUM}, /, C, D)),
\end{equation*}
where sub-trees $A, B, C$ and $D$ can be any of the four types of nodes.
This sum-product expression tree can be represented diagrammatically as the following:

\centerline{\begin{tikzpicture}[]
    \def\dx{0};
    \def\a{1.0};
    \def\b{0.5};
    \def\ya{-1.0};
    \def\yb{-0.8};
    \node[] at (\dx, 0) (P1) {$\oplus$};
    \node[] at (\dx+\a, \ya) (P2) {$\oplus$};
    \node[] at (\dx-\a, \ya) (M1) {$\odot$};
    \node[] at (\dx-\a-\b, \ya+\yb) (A) {$A$};
    \node[] at (\dx-\a+\b, \ya+\yb) (B) {$B$};
    \node[] at (\dx+\a+\b, \ya+\yb) (C) {$C$};
    \node[] at (\dx+\a-\b, \ya+\yb) (D) {$D$};
    \draw[] (P1) -- (M1);
    \draw[] (P1) -- (P2);
    \draw[] (P2) -- (C);
    \draw[] (P2) -- (D);
    \draw[] (M1) -- (A);
    \draw[] (M1) -- (B);
\end{tikzpicture}}

The left and right siblings of the root node have sizes $|A| * |B|$ and $|C|+|D|$ respectively, while the root node size can be computed as $|A|* |B| + |C|+|D|$.
The sizes of $A, B, C$, and $D$ can be computed recursively until the program meets either a \texttt{LEAF} node or a \texttt{ZERO} node, which has a known size of $1$ or $0$.
\end{example}

Since the depth of the expression tree is $O(|E|)$ and the cost of each arithmetic operation is $O(|V|)$, the overall time complexity to generate a sample is $O(|V|*|E|)$, hence proving the theorem.
\end{proof}

Similarly, if one is only interested in obtaining independent sets with largest $K$ sizes, we have the following corollary.
\begin{corollary}
    Let $G = (V, E)$ be a graph. Its independent sets with size $k > \alpha(G)-K$ can be unbiasedly sampled in time $O(K\log K)\cc + O(|V|*|E|*M)$, where $M$ is the number of samples.
\end{corollary}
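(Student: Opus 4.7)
The plan is to combine the sum-product expression tree algebra from \Cref{eq:expr} with the polynomial truncated to its largest $K$ orders (the natural extension of \Cref{eq:max2poly}), in exactly the same way that \Cref{thm:complexexprk} combines the two, and then apply the sampling procedure from the previous theorem to each of the $K$ resulting expression trees. Since the coefficients of a truncated polynomial can be taken from any commutative semiring, the resulting combined type is again a commutative semiring, so tensor network contraction is well-defined.

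The first step is to replace the tensor elements in \Cref{sec:tnmap} with this combined type, setting $x_v^{w_v}$ to the pair $((\texttt{LEAF},\boldsymbol{e}_v,/,/))\infty^{w_v}$ padded with $\mymathbb{0}$ in the remaining $K-1$ truncated slots. Contracting the resulting tensor network yields, for each size $k \in \{\alpha(G)-K+1,\ldots,\alpha(G)\}$, a sum-product expression tree whose expansion is the set of independent sets of size exactly $k$. By the same accounting as in \Cref{thm:complexexprk}, the cost of each element-wise addition and multiplication is $O(K\log K)$ (using the convolution trick for the polynomial part, and noting that every individual expression-tree operation is constant time since it only builds a new root node), so by \Cref{thm:complexreal} the overall contraction cost is $O(K\log K)\cc$.

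The second step is sampling: to draw $M$ independent samples of sizes $k > \alpha(G)-K$, one selects which of the $K$ trees to descend into with probability proportional to its configuration count, and then runs the recursive descent of the previous theorem on the chosen tree. The subtree sizes can be precomputed in a single bottom-up traversal and then reused across samples, and each individual sample costs $O(|V|*|E|)$ by the argument already given (depth $O(|E|)$ and $O(|V|)$ per bitwise operation). Summing over $M$ samples gives the stated $O(|V|*|E|*M)$ sampling term, and adding it to the contraction cost proves the corollary.

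The main obstacle, though minor, is verifying that restricting the expression-tree coefficient ring to the top-$K$ orders of a polynomial still yields a commutative semiring and still correctly tracks the true sets of size $k$ — nothing new beyond \Cref{thm:complexsetk} and \Cref{thm:complexexprk} is needed, and the sampling step itself is an immediate reuse of the previous theorem applied $M$ times.
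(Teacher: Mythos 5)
Your proposal is correct and takes essentially the same route as the paper: the paper obtains this corollary precisely by combining the sum-product expression-tree algebra with the polynomial truncated to its largest $K$ orders (exactly as in \Cref{thm:complexexprk}) and then applying the sampling procedure of the preceding theorem, which is what you do. Your extra details---padding the monomial with $\mymathbb{0}$ in the lower truncated slots, choosing among the $K$ resulting trees with probability proportional to their configuration counts, and precomputing subtree sizes in one bottom-up pass---are all consistent with the paper's construction and cost accounting.
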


\section{Weighted graphs}\label{sec:weighted}

All the solution space properties and the corresponding algebra on unweighted graphs still hold for integer-weighted graphs, while for general weighted graphs, the independence polynomial is not well defined anymore.
For general weighted graphs, it is more useful to know the $k$ maximum weighted sets and their sizes.
They can be computed by the extended tropical algebra, which is a natural generalization of the max-plus tropical algebra:
\begin{equation}
\eqname{T$k$}
\begin{split}
    s \oplus t &= \texttt{largest}(s \cup t, k)\\
    s \odot t &= \texttt{largest}(\{a+b \mid a \in s, b\in t\}, k)\\
    \mymathbb{0} &= -\infty^{\otimes k}\\
    \mymathbb{1} &= -\infty^{\otimes k-1} \otimes 0
\end{split}\label{eq:ext-tropical}
\end{equation}
where $\texttt{largest}(s, k)$ means truncating the set $s$ by only keeping its $k$ largest values.

\begin{theorem}
    Let $G = (V, E)$ be a weighted graph. Its $K$ largest independent set sizes can be computed in time $O(K\log K)\cc$.
\end{theorem}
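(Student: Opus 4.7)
My plan is to follow the same template used for \Cref{thm:complexpoly}, \Cref{thm:complextropical}, and \Cref{thm:complexset}: replace the scalar tensor elements with elements of the extended tropical semiring \Cref{eq:ext-tropical}, then bound the per-operation cost and multiply through by $\cc$ from \Cref{thm:complexreal}.

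Concretely, I would first recast the vertex and edge tensors by sending $1 \mapsto \mymathbb{1}$, $0 \mapsto \mymathbb{0}$, and $x_v^{w_v} \mapsto (-\infty, \ldots, -\infty, w_v)$, a length-$K$ sorted tuple whose only finite entry is $w_v$. The vertex and edge tensors then read
\begin{equation*}
    \left(W^{{\rm T}K}\right)^{(v)} = \begin{pmatrix} \mymathbb{1} \\ (-\infty^{\otimes K-1}\otimes w_v) \end{pmatrix}, \qquad
    \left(B^{{\rm T}K}\right)^{(u,v)} = \begin{pmatrix} \mymathbb{1} & \mymathbb{1} \\ \mymathbb{1} & \mymathbb{0} \end{pmatrix}.
\end{equation*}
Once the verification that \Cref{eq:ext-tropical} is a commutative semiring is granted (it is stated earlier as such, so I would simply cite it), the tensor-network contraction is well defined and the order-independence guarantees it produces the top-$K$ weights of any assignment satisfying the independence constraint, i.e.\ exactly the $K$ largest independent set weights.

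The remaining work is to account for the cost of a single $\oplus$ and $\odot$ on length-$K$ elements. For $\oplus$, two sorted top-$K$ lists are merged and truncated to the top $K$, which is $O(K)$. For $\odot$, one must extract the top $K$ entries of the multiset $\{a+b : a\in s, b\in t\}$ of size $K^2$; this is the standard ``top-$K$ sums of two sorted lists'' problem, solvable in $O(K\log K)$ time by a heap-based selection that repeatedly pops the current minimum candidate and pushes its two successors. Thus every semiring operation costs $O(K\log K)$.

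Combining this with \Cref{thm:complexreal}, which bounds the number of semiring operations by $\cc$, gives the total running time $O(K\log K)\cc$, proving the theorem. The only non-routine ingredient is the $O(K\log K)$ bound for $\odot$; the rest is a direct analogue of \Cref{thm:complexpoly}. If one were concerned about subtleties when many entries tie or equal $-\infty$, a short remark would suffice to note that ties and $-\infty$ padding are handled by the same heap procedure without affecting the asymptotic bound.
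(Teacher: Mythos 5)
Your proposal is correct and follows the same skeleton as the paper's proof: substitute the extended tropical elements of \Cref{eq:ext-tropical} into the vertex and edge tensors (your tensors match the paper's exactly, with $x_v^{w_v} \mapsto -\infty^{\otimes K-1}\otimes w_v$), bound the cost of a single $\oplus$ and $\odot$, and multiply by the operation count $\cc$ from \Cref{thm:complexreal}. The one place you diverge is the key subroutine for $\odot$, the top-$K$ sums of two sorted length-$K$ lists. You invoke the textbook heap-based selection (pop the current best candidate, push its successors), which gives an unconditional $O(K\log K)$ bound. The paper instead cites its own heap-free algorithm in \Cref{sec:maxsum}: sort both lists, then bisect on the threshold value, using the fact that the number of pairwise sums exceeding a given value can be counted in linear time. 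The trade-off is worth noting: your heap argument is cleaner as a worst-case statement, since the paper's bisection only terminates in $O(\log K)$ rounds when the data range is not exponentially large (a caveat the paper acknowledges in \Cref{sec:maxsum}); conversely, the paper's method manipulates only static sorted arrays, which is the reason it was adopted there --- dynamic heap structures are slow in practice and unfriendly to the array-based, GPU-compatible implementation the paper targets. Either subroutine yields the claimed $O(K\log K)\cc$ total, so the proof goes through as you wrote it.
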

\begin{proof}
The $K$ largest independent sets can be computed by contracting a tensor network with extended tropical numbers. The vertex tensor and edge tensor are
\begin{equation}
    \left(W^{{\rm T}k}\right)^{(v)} = \left(\begin{matrix}
        \mymathbb{1} \\
        -\infty^{\otimes k-1} \otimes w_v
    \end{matrix}\right),   
    \qquad
        \left(B^{{\rm T}k}\right)^{(u, v)} = \left(\begin{matrix}
        \mymathbb{1}  & \mymathbb{1} \\
        \mymathbb{1} & \mymathbb{0}
    \end{matrix}\right),
\end{equation}
where we have used $x_v^{w_v} = (-\infty^{\otimes k-1} \otimes 1)^{w_v} = -\infty^{\otimes k-1} \otimes w_v$.

The computation of $s \odot t$ is a maximum sum combination problem that can be done in time $O(k\log(k))$ using the algorithm in \Cref{sec:maxsum}. Hence, the overall time complexity of contracting the tensor network is $O(K\log K)\cc$.
\end{proof}

\begin{corollary}
    Let $G = (V, E)$ be a weighted graph. Its $K$ largest independent sets can be computed in time $O(|V|* K\log K)\cc$.
\end{corollary}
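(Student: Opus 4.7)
The plan is to mimic the construction used for \Cref{thm:complexsetk} and \Cref{eq:countingtropicalset}, but with the extended tropical algebra \Cref{eq:ext-tropical} playing the role that the tropical / truncated-polynomial algebra played there. Concretely, I would attach to each of the top-$K$ weight slots a bit string of length $|V|$ recording which vertices are in the corresponding independent set, so that each tensor element becomes a length-$K$ list of pairs $(w_i,\sigma_i)$ with $w_i\in\mathbb{R}\cup\{-\infty\}$ and $\sigma_i\in\{0,1\}^{|V|}$, sorted in descending order of $w_i$.

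The semiring operations would be the natural fusion of \Cref{eq:ext-tropical} and \Cref{eq:set}: for $s\oplus t$, merge the two lists of pairs and retain only the $K$ entries of largest weight (breaking ties deterministically, e.g.\ by lexicographic order on $\sigma$); for $s\odot t$, form all pairwise combinations $(w_i+w_j',\;\sigma_i\lor^\circ \sigma_j')$ and again keep the top $K$ by weight. The identities are $\mymathbb{0}=(-\infty,*)^{\otimes K}$ and $\mymathbb{1}=(-\infty,*)^{\otimes K-1}\otimes(0,0^{\otimes|V|})$. The vertex and edge tensors become
\begin{equation*}
    (W^{{\rm T}k+{\rm SN}})^{(v)} = \begin{pmatrix} \mymathbb{1} \\ (-\infty,*)^{\otimes K-1}\otimes(w_v,\boldsymbol{e}_v) \end{pmatrix},\qquad
    (B^{{\rm T}k+{\rm SN}})^{(u,v)} = \begin{pmatrix} \mymathbb{1} & \mymathbb{1} \\ \mymathbb{1} & \mymathbb{0} \end{pmatrix},
\end{equation*}
and the contraction of this tensor network returns the $K$ largest-weight independent sets together with their configurations.

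For the complexity count, each $\oplus$ requires merging two sorted length-$K$ lists and truncating, and each $\odot$ reduces to the maximum-sum combination problem of \Cref{sec:maxsum}, both dominated by $O(K\log K)$ weight comparisons as in the proof of the preceding theorem. The only new cost is that every arithmetic step additionally performs a bitwise OR or a copy on a $|V|$-bit configuration, contributing an extra $O(|V|)$ factor per operation. Thus each element-wise semiring operation costs $O(|V|\cdot K\log K)$, and by \Cref{thm:complexreal} the total contraction time is $O(|V|\cdot K\log K)\cc$, which is exactly the claimed bound.

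The main obstacle is verifying that this combined structure is genuinely a commutative semiring: the top-$K$ truncation can break associativity of $\oplus$ in the presence of ties among equal weights carrying different configurations. I would handle this exactly as in \Cref{eq:ext-tropical}, by imposing a fixed total order on (weight, configuration) pairs (e.g.\ weight first, then lexicographic on $\sigma$) so that the truncation is canonical, after which associativity and distributivity follow by the same case analysis used for the unadorned extended tropical algebra, with the bit-string factor propagating transparently through both operations.
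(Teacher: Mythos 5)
Your proposal is correct and follows essentially the same route as the paper, which likewise obtains the result by combining the extended tropical algebra \Cref{eq:ext-tropical} with a per-slot bit string and noting that the configurations simply ride along with the weights at an extra $O(|V|)$ cost per operation. The only cosmetic discrepancy is that you cite the set algebra \Cref{eq:set}, while what you actually describe (one bit string per weight slot) is the single-configuration algebra \Cref{eq:singleconfig}, which is exactly the combination the paper uses; your explicit tie-breaking rule addresses a detail the paper leaves implicit.
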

To find solutions corresponding to the largest $K$ sizes, one can combine the extended tropical algebra with the bit-string algebra (\Cref{eq:singleconfig}).
Since the $\oplus$ operation of the configuration sampler is not used in the combined algebra, the resulting configurations are deterministic and complete.

\section{Example applications} \label{sec:examples}

\subsection{Number of independent sets and entropy constant for hardcore lattice gases}\label{sec:entropy}
We compute the counting of all independent sets for graphs shown in \Cref{fig:lattices}, where vertices are all placed on square lattices of dimensions $L \times L$.
The types of graphs include: the square lattice graphs (\Cref{fig:lattices}(a)), the square lattice graphs with a filling factor $p=0.8$, which means $\lfloor pL^{2} \rceil$ sites are occupied with vertices  (\Cref{fig:lattices}(b)),
the King's graphs  (\Cref{fig:lattices}(c)), the King's graphs with a filling factor $p = 0.8$  (\Cref{fig:lattices}(d)), which is the ensemble of graphs used in Ref.~\cite{Ebadi2022} to benchmark quantum algorithms on a Rydberg atom array quantum computer. 

The number of independent sets for square lattice graphs of size $L \times L$ form a well-known integer sequence (\href{https://oeis.org/A006506}{OEIS A006506}), which is thought as a two-dimensional generalization of the Fibonacci numbers.
We computed the integer sequence for $L=38$ and $L=39$, which is, to the best of our knowledge, not known before.
In the computation, we used finite-field algebra for contracting integer tensor networks with arbitrarily high precision. 

A theoretically interesting number that can be computed using the number of independent sets is the entropy constant, which can describe the thermodynamic properties of hard-core lattice gases at the high-temperature limit.
For the square lattice graphs, this number is called the \textit{hard square entropy constant} (\href{https://oeis.org/A085850}{OEIS A085850}), which is defined as $\lim_{L\rightarrow \infty} F(L,L)^{1/L^2}$, where $F(L,L)$ is the number of independent sets of a given lattice dimensions $L \times L$.
This quantity arises in statistical mechanics of hard-square lattice gases~\cite{Baxter1980, Pearce1988} and is used to understand phase transitions for these systems. This entropy constant is not known to have an exact representation, but it is accurately known in many digits. Similarly, we can define entropy constants for other lattice gases. In \Cref{fig:hardsquare}, we look at how $F(L,L)^{1/\lfloor pL^2\rceil}$ scales as a function of the grid size $L$ for all types of graphs shown in \Cref{fig:lattices}. Our results match the known results for the non-disordered square lattice and King's graphs. For disordered square lattice and King's graphs with a filling factor $p=0.8$, we randomly sample 1000 graph instances. To our knowledge, the entropy constants for these disordered graphs have not been studied before. They may be used to study phase transitions for disordered lattices, which are typically much harder to understand. Interestingly, the variations due to different random instances are negligible for this quantity. 

\begin{figure}[t] 
    \centering
    \includegraphics[width=\textwidth, trim={0cm 1cm 0cm 1cm}, clip]{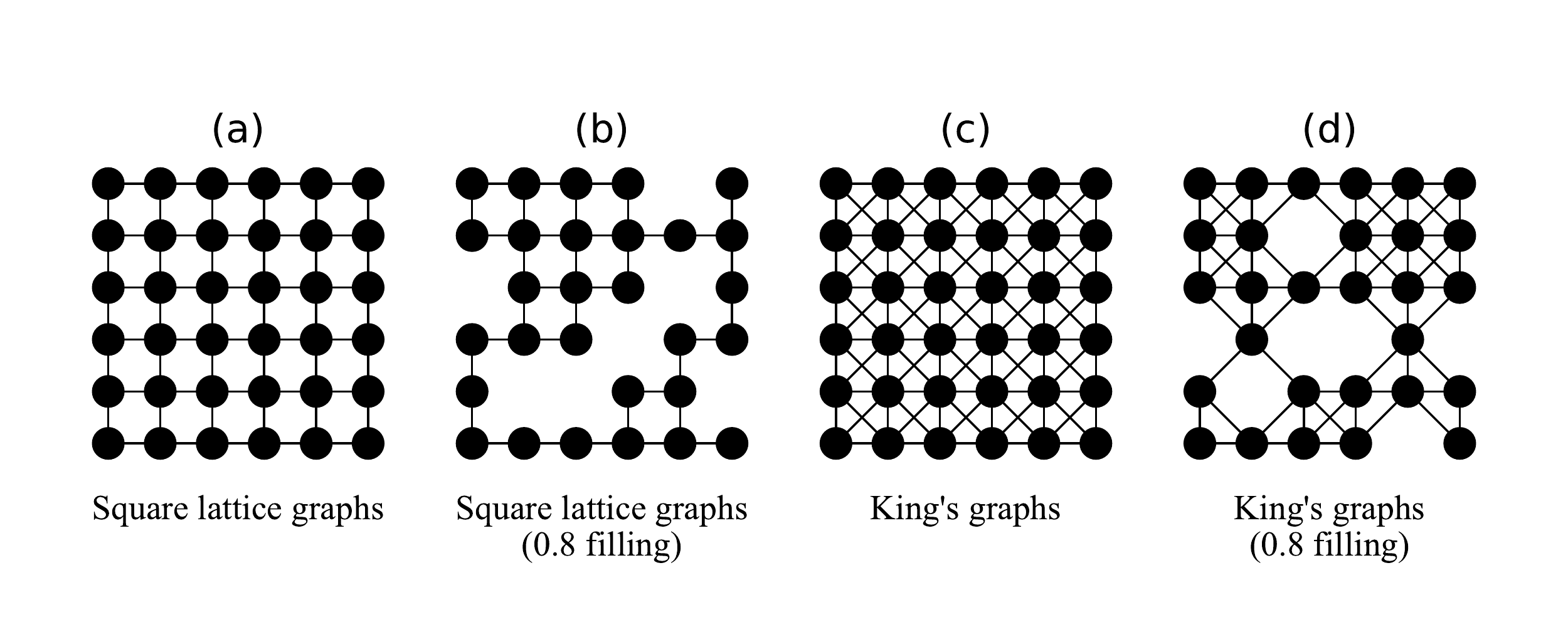}
    \caption{The types of graphs used in the case study in \Cref{sec:entropy}.
    The lattice dimensions are $L\times L$. (a) square lattice graphs. (b) square lattice graphs with a filling factor $p=0.8$.
    (c) King's graphs. (d) King's graphs with a filling factor $p=0.8$.}
    \label{fig:lattices}
\end{figure}

\begin{figure}[t] 
    \centering
    \includegraphics[width=\textwidth, trim={0cm 0cm 0cm 0cm}, clip]{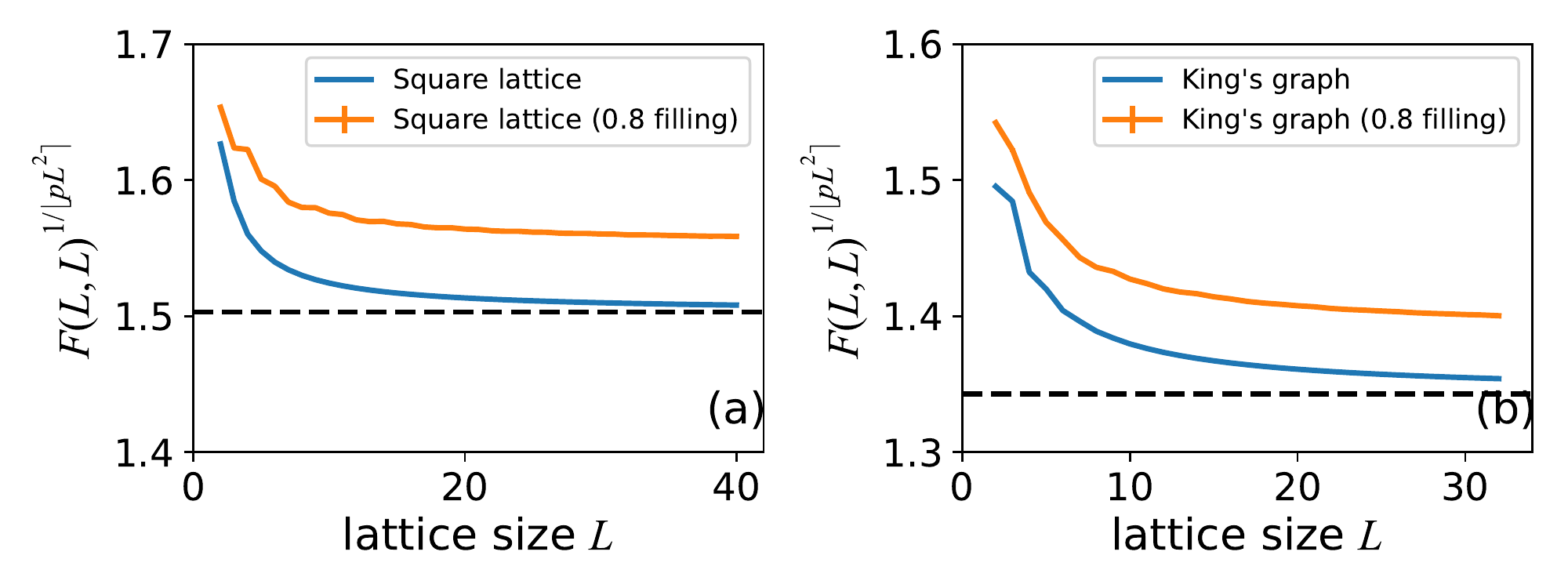}
    \caption{Mean entropy for lattice gases on graphs defined in \Cref{fig:lattices}.
    We sampled $1000$ instances for $p=0.8$ lattices and the error bar is too small to be visible.
    The horizontal black dashed lines are for $\lim_{L\rightarrow \infty} F(L,L)^{1/L^2}$ for the corresponding non-disordered square lattice and King's graphs.
    }
    \label{fig:hardsquare}
\end{figure}

\subsection{The overlap gap property\label{sec:overlap-gap}}
With this tool to enumerate or sample configurations, one can understand the structure of the independent set configuration space, such as the optimization landscape for finding the MISs.
One of the known barriers to finding the MIS is the so-called overlap gap property~\cite{Gamarnik2013, Gamarnik2019}.
If the overlap gap property is present, it means every two large independent sets either have a significant intersection or a very small intersection;
it implies that large independent sets are clustered together.
This clustering property has been used to rigorously prove upper bounds on the performance of local search algorithms~\cite{Gamarnik2013, Gamarnik2019}.
To investigate the overlap gap property, we compute pair-wise Hamming distance distributions of large independent sets as they are good indicators of the presence or absence of overlap gap properties.
We inspect two types of graphs that are particularly interesting, the King's graphs with defects and $3$-regular graphs.
It is known that the MIS problem on a general graph can be mapped to the King's graph with defects~\cite{Garey1977,Ebadi2022}. However, it is not clear whether the MIS problem defined on a randomly generated King's graph with defects can have the overlap gap property.
It is known that finding MISs of a $d$-regular graphs has the overlap gap property~\cite{Rahman2017,Gamarnik2021} when both $d$ and the graph sizes are large, but, it is not known whether, for small $d$, e.g.\ for $3$-regular graphs, this statement remains true.
We randomly generated $9$ instances for each category of King's graph at $0.8$ filling with dimensions $20\times 20$ ($320$ vertices) and $3$-regular graphs with $110$ vertices.
At this problem size, there are too many independent sets to fit into any storage, hence we combine the truncated polynomial and sum-product expression tree to directly sample from the target configuration space.
For each instance $G$, we sample $10^4$ pairs of configurations from the independent sets of sizes $\geq \lceil \gamma \times \alpha(G)\rceil$ and show the pair-wise Hamming distance distribution in \Cref{fig:hamming}.
We observe a clear single peak structure at a fixed distance normalized by the MIS size for the King's graphs, indicating the absence of the overlap gap property in a random King's graph at $0.8$ filling.
Since the MIS problem on an arbitrary graph can be mapped to a King's graph at a certain filling, this result is highly nontrivial. It likely implies 
that the King's graphs with defects mapped from hard MIS instances have a very small measure in the total defected King's graph space. In contrast, 
very different pair-wise Hamming distributions are obtained in \Cref{fig:hamming}(b), where we observed the multiple peak structure when the control parameter $\gamma$ is big enough. It indicates the existence of disconnected clusters in the configuration space of the MIS problem on $3$-regular graphs.
We expect this numerical tool can be used to understand this phenomenon better and to further investigate the graph properties and the geometry of the configuration spaces for a variety of graph instances.
\begin{figure} 
    \includegraphics[width=\textwidth, trim={0.0cm 1cm 0.0cm 0cm}, clip]{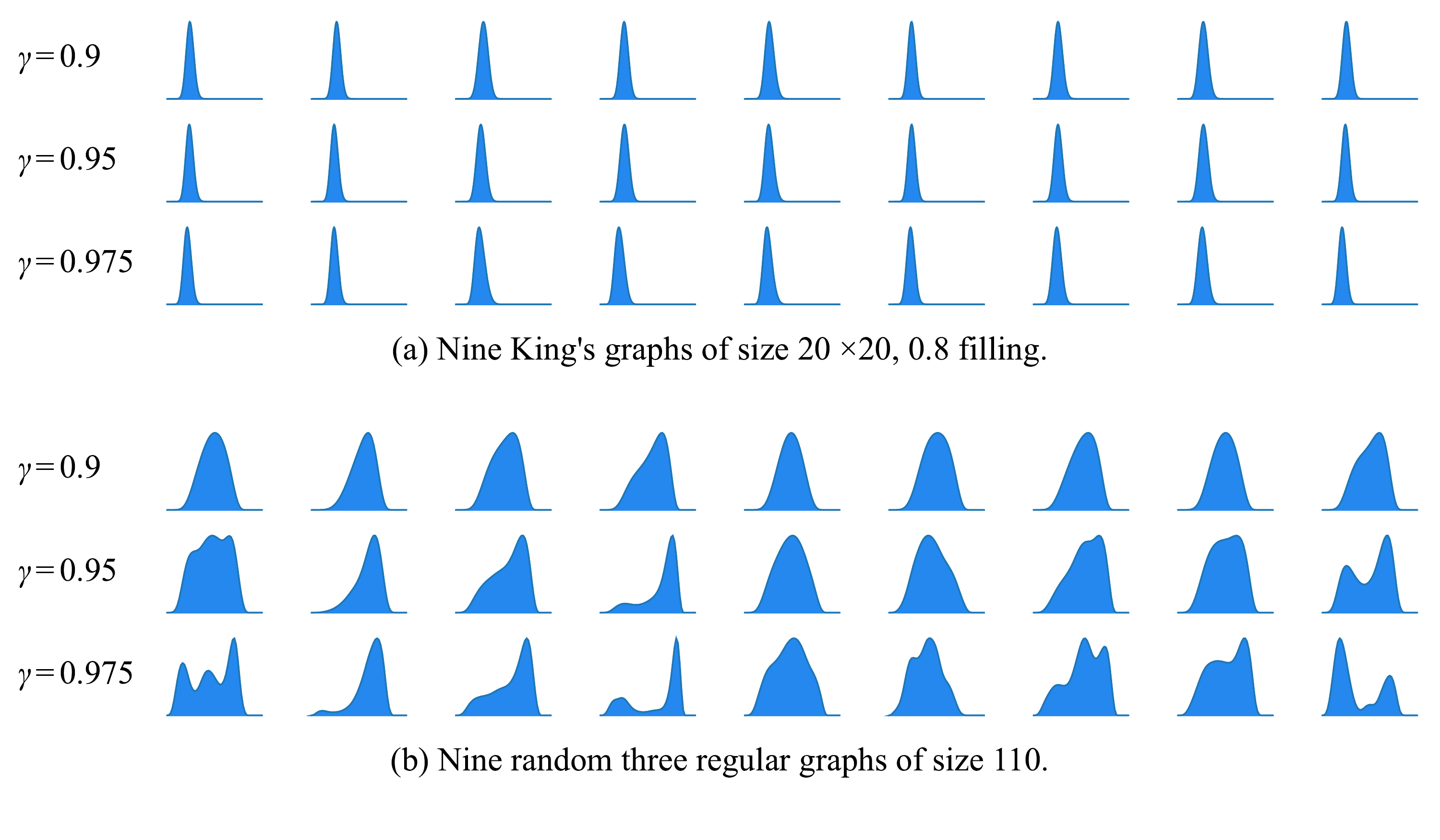}
    \caption{Pairwise Hamming distances distribution for configurations sampled from independent sets with sizes $\geq \lceil\gamma \times \alpha(G)\rceil$.
    In each plot, the $x$-axis is the Hamming distance normalized by the total number of vertices and the $y$-axis is the probability.
    }
    \label{fig:hamming}
\end{figure}

\subsection{Analyzing quantum and classical algorithms for Maximum Independent Set}
\begin{figure} 
    \centering
    \includegraphics[width=.65\textwidth, trim={0cm 0cm 0cm 0cm}, clip]{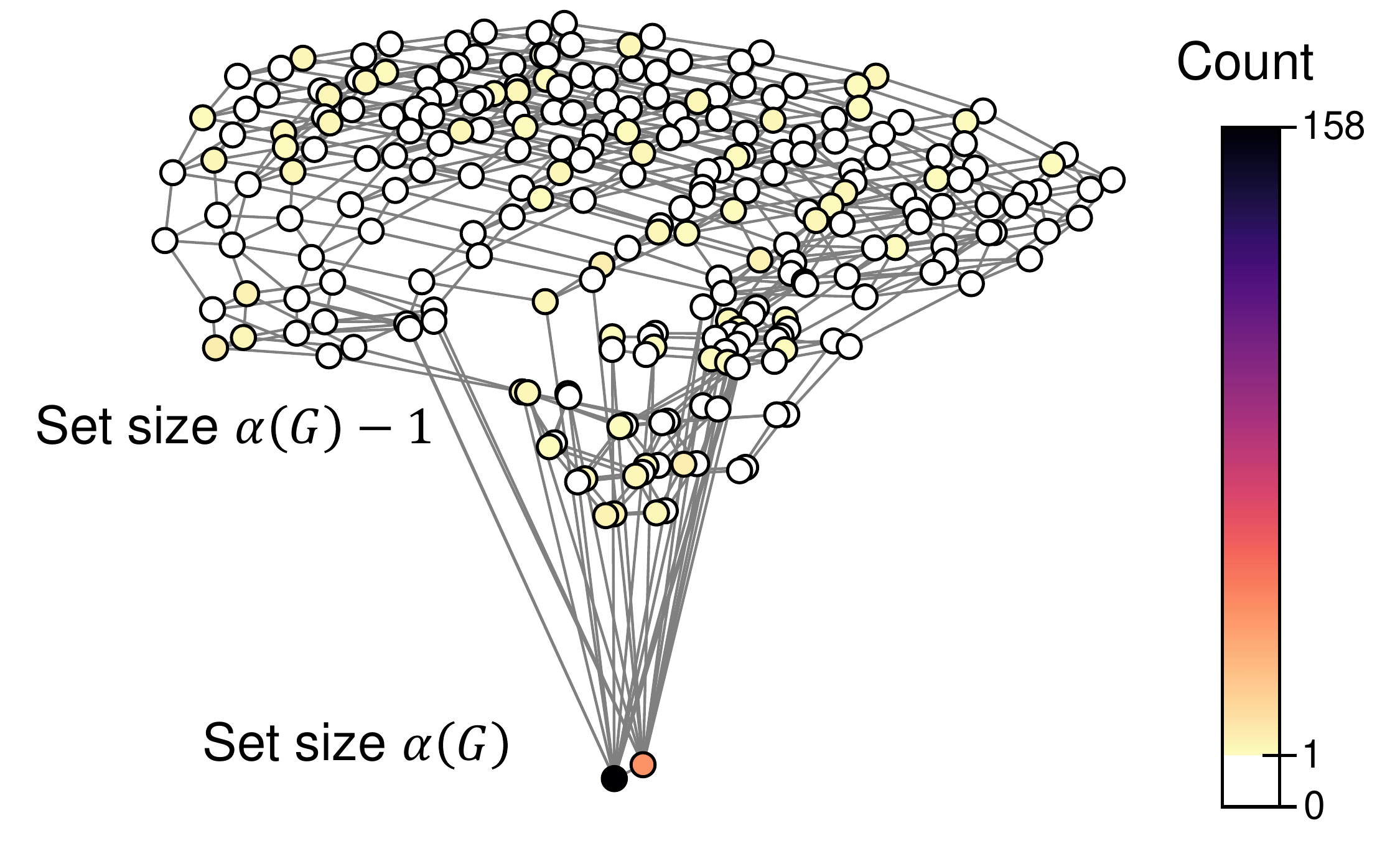}
    \caption{Visualization of experimental outputs of a quantum algorithm for solving the maximum independent set problem~\cite{Ebadi2022}.
    Each vertex represents an independent set, and each edge represents a pair of independent sets that differ by a swap operation or a vertex addition/removal.}
    \label{fig:exp_configuratoins}
\end{figure}
In a recent work, 
the ability to enumerate configurations and compute independence polynomials was critical in understanding the performance of quantum optimization algorithms for the MIS problem on a Rydberg atom quantum computer~\cite{Ebadi2022}. This work focused on 
exploring King's graphs with $0.8$ filling.   The hardest instances for classical simulated annealing could be accurately predicted from the independence polynomial, which gave information about the density of local minima at different independent set sizes.
On the hardest graph instances for simulated annealing studied in the experiment, a high density of local minima were found at independent set sizes of $\alpha(G)-1,$ which the algorithm became trapped in instead of finding the optimal solution of size $\alpha(G)$.
By enumerating the configurations using techniques described in the present work, we found that 
simulated annealing randomly explores the independent sets of size $\alpha(G)-1$ until an optimum solution is found.
Therefore, the large ratio of local to global minima  prevents simulated annealing from efficiently finding an MIS.

Although the performance of the quantum algorithm is more challenging to understand due to the inherent difficulty in studying quantum systems, the present methods allow one 
to gain significant insights by visualizing the experimental outputs of a quantum algorithm over the solution space, as shown in \Cref{fig:exp_configuratoins} for instances with $39$ nodes~\cite{Ebadi2022}.
Here, the structure of the solution space is shown on a graph where each vertex represents a large independent set. Each edge represents a pair of independent sets that differ by a swap operation or a vertex addition to the set, which are the operations naturally present in the effective dynamics at the end of the quantum algorithm. 
The solution space graph is well-connected by local changes to the spin configurations, has a small diameter, and the degree of each node appears to concentrate. 
This visualization makes it clear that 
the quantum algorithm does not appear to return solely local minima with a large Hamming distance from the MISs as suggested for adiabatic algorithms e.g.~by Ref.~\cite{altshuler2010}  which would appear as a long path on the solution space graph from the sampled local minima to the MIS. Instead, the quantum algorithm samples from local minima across the solution space graph with a wide range of Hamming distances from the MISs. In the case where a large superposition state of local minima is created during the coherent evolution, the quantum algorithm achieves a quadratic speedup over simulated annealing~\cite{Ebadi2022}.
Looking forward, we expect these tools can be applied to understanding the performance of quantum and classical algorithms on a wide class of NP-hard combinatorial optimization problems.

\section{Discussion and conclusion}
In this work, we introduced a framework that uses generic tensor networks to compute different solution space properties of a certain class of NP-hard combinatorial optimization problems.
Each solution space property is computed using the same tensor network with different tensor element algebra.
The different data types introduced in the main text to compute these properties are summarized in the diagram in \Cref{fig:venn-diagram}.
The class of problems solvable by a tensor network includes but is not limited to maximum independent sets and a variety of other combinatorial problems such as the matching problem, the k-coloring problem, the max-cut problem, the set packing problem, and the set covering problem, as detailed in \Cref{sec:otherproblems}. 

Looking ahead, it could be possible to generalize the idea of generic programming to other algorithms that have certain algebraic structures such as those using the inclusion-exclusion principle or subset convolution~\cite{Fomin2013} and explore what new properties can be computed.
To this end, dynamic programming~\cite{Courcelle1990, Fomin2013} approaches could be considered.
Dynamic programming is closely related to a tropical tensor network~\cite{Liu2021}; for example, the Viterbi algorithm for finding the most probable configuration in a hidden Markov model can be interpreted as a matrix product state featured with tropical algebra, and the tropical tensor network in the main text is potentially equivalent to dynamic programming in finding an optimum solution.
Since dynamic programming has much broader applications, it would be interesting to extend the ideas from this paper to provide an algebraic interpretation for dynamic programming so that it can be used to compute other solution space properties beyond just finding an optimum solution.
It is also possible to extend this idea to other algebras. For example, generic semiring algebra has been used in computational linguistics to compute interesting quantities of a given grammar and string~\cite{Goodman1999}.

The source code in the Julia language for this paper can be found in the Github repository~\cite{GenericTensorNetworks}. 
There is a short introduction to this repository as well a gist to show how it works in \Cref{sec:technical} as well.
We expect our tool can be used to understand and study many interesting applications of independent sets and beyond.
We also hope the toolkit we built, including tensor network contraction order optimization and efficient tropical matrix multiplication, can be helpful to the development of other scientific software.

\section*{Acknowledgments}
We would like to thank Pan Zhang for sharing his python code for optimizing contraction orders of a tensor network.
We acknowledge Sepehr Ebadi and Leo Zhou for coming up with many interesting questions about independent sets and their questions strongly motivated the development of this project.
We thank Benjamin Schiffer for providing helpful feedback on the writing of this manuscript.
We thank Chris Elord for helping us write the fastest matrix multiplication library for tropical numbers, TropicalGEMM.jl. 
We thank Jacob Miller for helpful discussions. 
We would also like to thank a number of open-source software developers, including Roger Luo, Time Besard, Edward Scheinerman, and Katharine Hyatt
for actively maintaining their packages and resolving related issues voluntarily.
We acknowledge financial support from the DARPA ONISQ program (grant no.\ W911NF2010021), the Center for Ultracold Atoms, the National Science Foundation, the Vannevar Bush Faculty Fellowship, the U.S. Department of Energy (DE-SC0021013 and DOE Quantum Systems Accelerator Center (contract no.\ 7568717), 
the Army Research Office MURI. We acknowledge the computation credits provided by Amazon Web Services for running the benchmarks and case studies. Jinguo Liu acknowledges funding support provided by QuEra Computing Inc.\ through a sponsored research program.

\bibliographystyle{siamplain}
\bibliography{refs}

\appendix

\section{An alternative way to construct the tensor network}\label{sec:energymodel}

Let us characterize the independent set problem on graph $G=(V, E)$ as an energy model with two parts
\begin{equation}\label{eq:eng}
    \mathcal{E}(G, s) = -\sum_{i\in V} w_i s_i + \infty \sum_{(i,j) \in E}s_i s_j
\end{equation}
where $s_i$ is a spin on vertex $i \in V$ and $w_i$ is an onsite energy term associated with it.
The first part corresponds to the negative independent set size and the second part describes the independence constraint, which corresponds to the Rydberg blockade~\cite{Pichler2018, Ebadi2022} in cold atom arrays or the repulsive force in hardcore lattice models~\cite{Dyre2016, Fernandes2007}.
The partition function is defined as
\begin{equation}\label{eq:partition}
    \begin{split}
    Z(G, \beta) = \sum_{s}e^{-\beta \mathcal{E}(G, s)} = \sum_{s\in \mathcal{I}(G)} e^{\beta \sum w_i s_i}\\
         = \sum_{k=0}^{\alpha(G)}a(k) e^{\beta k}  \qquad \quad (k = \sum w_i s_i)
    \end{split}
\end{equation}
where $\mathcal{I}(G)$ is the set of independent sets of graph $G$, $\alpha(G)$ is the absolute value of the minimum energy (maximum independent set size), $a(k)$ is the number of spin configurations with energy $-k$ (independent sets of size $k$).
The partition function can be expressed as a tensor network by placing a vertex tensor on each spin $i \in V$
\begin{equation}
    W^{(i)} = \left(\begin{matrix}
        1 \\
        e^{\beta w_i}
    \end{matrix}\right),
\end{equation}
and an edge tensor on each bond $(u, v) \in E$
\begin{equation}
       B^{(u, v)} = \left(\begin{matrix}
        1  & 1\\
        1 & 0
    \end{matrix}\right),
\end{equation}
where the $0$ in the edge tensor comes from $e^{-\beta\infty}$ in the second term of \Cref{eq:eng}, which is the independence constraint.
By letting $x = e^{\beta}$, we get the tensor network for computing the independence polynomial as described by \Cref{eq:vertextensor} and \Cref{eq:edgetensor}.
If we further let $w_i=1$, the second line of \Cref{eq:partition} is equivalent to the independence polynomial.

\section{Hard problems and tensor networks}\label{sec:otherproblems}
\subsection{Maximal independent sets and maximal cliques}\label{sec:maximal}
In this section, we focus the discussion on the maximal independent sets problem since finding maximal cliques of a graph is equivalent to finding the maximal independent sets of its complement.
Let $G=(V,E)$ be a graph; we denote the neighborhood of a vertex $v\in V$ as $N(v)$.
A maximal independent set $I_m$ is an independent set such that no $v \in V$ satisfies $I_m \cap (\{v\} \cup N[v])  = \emptyset$, i.e.\ an independent set that cannot become a larger one by adding a new vertex.
The corresponding tensor network $\mathcal{N}_{\rm mIS}(G)$ can be specified as
\begin{equation}\label{eq:maxistensornetwork}
\begin{split}
    \Lambda &= \{s_v \mid v \in V\},\\
    \mathcal{T} &= \{T^{(v)}_{s_{N(v,1)} s_{N(v, 2)}\ldots s_{N(v,d(v))}s_v} \mid v\in V\},\\
    \boldsymbol{\sigma}_o &= \varepsilon,\\
\end{split}
\end{equation}
where we defined a tensor for each $v \in V$ and its neighborhood $N(v)$ as
\begin{equation}\label{eq:maximal}
    T^{(v)}_{s_{N(v, 1)}s_{N(v, 2)}\ldots s_{N(v, d(v))}s_v} = \begin{cases}
        s_vx_v^{w_v} & s_{N(v, 1)}=s_{N(v, 2)}=\ldots=s_{N(v, d(v))}=0,\\
        1-s_v& \text{otherwise}.
    \end{cases}
\end{equation}
Here, $N(v, k)$ is the $k$th vertex in $N(v)$ and $d(v) = |N(v)|$ is the degree of vertex $v$.
If $s_v = 1$, then none of its neighbours can be a member of $I_{m}$ by the independence constraint, contributing a factor $x_v^{w_v}$.
If $s_v = 0$, then at least on of its neighbors must be in $I_{m}$ by the maximal constraint, contributing a unit factor.
For a degree 2 vertex $v$, the tensor has the following form
\begin{equation}
    T^{(v)}=\left(\begin{matrix}
    \left(\begin{matrix}
        ~~0 &~1 \\
        ~~1 &~1
    \end{matrix}\right)\\
    \left(\begin{matrix}
        x_v^{w_v} &0 \\
        0 &0
    \end{matrix}\right)
    \end{matrix}\right).
\end{equation}
\begin{theorem}\label{thm:miscomplex}
    The tensor network representation of a maximal independent set problem on a graph $G=(V,E)$ (\Cref{eq:maxistensornetwork}) can be contracted in $cc(\mathcal{N}_{mIS}(G)) = O(|V|)2^{O({\rm tw}(G)\Delta)}$ number of additions and multiplications, where $\Delta$ is the maximum degree of vertices in $G$.
\end{theorem}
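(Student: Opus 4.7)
The plan is to apply the general tensor network complexity bound invoked in the proof of \Cref{thm:complexreal}: any tensor network $\mathcal{N}$ can be contracted in $O(|\mathcal{N}|)\,2^{O({\rm tw}(L(\mathcal{N})))}$ elementary operations, where $L(\mathcal{N})$ denotes the line graph of the hypergraph representation. Since $|\mathcal{N}_{\rm mIS}(G)| = |V|$ by construction (one tensor per vertex of $G$), it suffices to show that ${\rm tw}(L(\mathcal{N}_{\rm mIS}(G))) = O({\rm tw}(G)\,\Delta)$.

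First I would identify $L(\mathcal{N}_{\rm mIS}(G))$ explicitly. From \Cref{eq:maxistensornetwork}, the labels are $\{s_v \mid v\in V\}$, and the label $s_v$ appears in the tensor $T^{(w)}$ exactly when $w \in N[v]$, because the index set of $T^{(w)}$ is $\{s_w\} \cup \{s_u : u \in N(w)\}$. Hence two distinct labels $s_u$ and $s_v$ are adjacent in the line graph iff there exists $w$ with $u, v \in N[w]$, which is equivalent to $u$ and $v$ being at distance at most $2$ in $G$. Consequently $L(\mathcal{N}_{\rm mIS}(G))$ is isomorphic to the square graph $G^2$, and the problem reduces to bounding ${\rm tw}(G^2)$ in terms of ${\rm tw}(G)$ and $\Delta$.

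The main technical step is to establish ${\rm tw}(G^2) \le ({\rm tw}(G)+1)(\Delta+1) - 1$. I would start from any tree decomposition $(T, \{B_t\}_{t \in T})$ of $G$ of width ${\rm tw}(G)$ and define expanded bags $B_t' := \bigcup_{v \in B_t} N[v]$. The bag size is immediate: $|B_t'| \le \sum_{v\in B_t}(d(v)+1) \le ({\rm tw}(G)+1)(\Delta+1)$. Edge coverage in $G^2$ is also immediate: every edge $\{u,v\}$ of $G^2$ admits a common neighbor or endpoint $w$ with $u, v \in N[w]$, and $w$ lies in some $B_t$, so both endpoints lie in $B_t'$. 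The subtle point is subtree connectedness: for each $v \in V$, the set $\{t : v \in B_t'\}$ equals $\bigcup_{u \in N[v]}\{t : u \in B_t\}$, a union of subtrees of $T$ indexed by $N[v]$. For every $u \in N[v]$, either $u=v$ or $\{u,v\}\in E$, and in the latter case the tree-decomposition axioms force the subtree $\{t : u \in B_t\}$ to share a node with the central subtree $\{t : v \in B_t\}$. A union of subtrees of $T$ all meeting a common subtree is itself a subtree, giving connectedness.

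Combining these pieces yields $cc(\mathcal{N}_{\rm mIS}(G)) = O(|V|)\,2^{O({\rm tw}(G)\,\Delta)}$. I expect the connectedness check for the expanded tree decomposition to be the most delicate part, but it is forced by the pairwise-intersection property of the subtrees $\{t : u \in B_t\}$ for $u \in N[v]$, which in turn follows from the edges of $G$ via the tree-decomposition axiom. The size and edge-coverage estimates are immediate from the definition of $B_t'$.
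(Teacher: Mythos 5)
Your proposal is correct, and at the top level it follows the same strategy as the paper: reduce to the Markov--Shi bound $O(|\mathcal{N}|)2^{O({\rm tw}(L(\mathcal{N})))}$ and then show that a width-${\rm tw}(G)$ tree decomposition of $G$ can be inflated by closed neighborhoods to accommodate the cliques $N[v]\cup\{v\}$ coming from the tensors $T^{(v)}$, at a cost of a factor $\Delta+1$ in bag size. Where you differ is in the execution, and your version is the more careful one. The paper's proof picks, for each $v$, a \emph{single} bag containing some edge $(u,v)$ and dumps $N(v)\cup\{v\}$ into that one bag; it never checks that the resulting family of bags still satisfies the running-intersection (subtree connectedness) axiom, and as literally stated that procedure can break it, since a vertex $w$ may get inserted into a bag far from its existing subtree. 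Your uniform expansion $B_t' := \bigcup_{v\in B_t} N[v]$ avoids this: the set of nodes containing a given $u$ becomes $\bigcup_{v\in N[u]}\{t : v\in B_t\}$, a union of subtrees each of which meets the subtree of $u$ (because $\{u,v\}\in E$ forces a common bag), hence connected. Your explicit identification of $L(\mathcal{N}_{\rm mIS}(G))$ with the square graph $G^2$, and the resulting clean bound ${\rm tw}(G^2)\le ({\rm tw}(G)+1)(\Delta+1)-1$, are also nice additions; both give the claimed $O(|V|)2^{O({\rm tw}(G)\Delta)}$. In short, you prove the same theorem by the same route, but you fill in the connectedness verification that the paper glosses over.
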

\begin{proof}
    The prefactor $|V|$ comes from the number of tensors,
    while the contraction complexity of pair-wise tensor contraction is closely related to the treewidth of the line graph of its hypergraph representation ${\rm tw}(L(\mathcal{N}_{mIS}(G)))$.
    In the following, we will show this quantity is upper bounded by $(\Delta+1)$ times the treewidth of $G$.
    In the line graph $L(\mathcal{N}_{mIS}(G))$, a tensor $T^{(v)}$ corresponds to a clique over $N(v) \cup \{v\}$.
    In the following, we will show given an optimal tree decomposition of $G$, it is always possible to include all cliques into the bags by increasing the bag size by a factor $(\Delta + 1)$.
    Let $v \in V$ be a vertex and $N(v)$ be its neighborhood; we first arbitrarily pick an edge $u \in N(v)$,
    then by the definition of tree decomposition, we can find a bag containing this edge, and lastly we include all $N(v) \cup \{v\}$ into this bag.
    Hence the maximum tensor rank during contraction is upper bounded by ${\rm tw}(G)\Delta $, proving the theorem.
\end{proof}
Let us consider the graph in \Cref{eg:tensorcontraction}. The corresponding tensor network structure for computing the maximal independent polynomial has the following hypergraph representation. \\

    \centerline{\begin{tikzpicture}[
    dot/.style = {circle, fill, minimum size=#1,
                inner sep=0pt, outer sep=0pt},
    dot/.default = 6pt  
                    ]  
        \def\dx{0};
        \def\r{0.4cm}
        \def\G{1.0}
        \foreach \x/\y/\v in {0/0/a, 1/0/b, 2/0/c, 3/0/d, 4/0/e}
            \node[color=black] at (\x*\G+\dx,\y) (\v) {$s_\v$};
        \foreach \x/\v/\t in {0/A/$T_a$, 1/B/$T_b$, 2/C/$T_c$, 3/D/$T_d$, 4/E/$T_e$}
            \node[color=white,fill=black,dot=\r] at (\x*\G+\dx,1.5) (\v) {\scriptsize \t};
        \draw [cyan,thick] (a) -- (A);
        \draw [cyan,thick] (a) -- (B);
        \draw [cyan,thick] (a) -- (C);
        \draw [blue,thick] (b) -- (B);
        \draw [blue,thick] (b) -- (A);
        \draw [blue,thick] (b) -- (C);
        \draw [blue,thick] (b) -- (D);
        \draw [red,thick] (c) -- (C);
        \draw [red,thick] (c) -- (A);
        \draw [red,thick] (c) -- (B);
        \draw [red,thick] (c) -- (D);
        \draw [green,thick] (d) -- (D);
        \draw [green,thick] (d) -- (B);
        \draw [green,thick] (d) -- (C);
        \draw [green,thick] (d) -- (E);
        \draw [orange,thick] (e) -- (E);
        \draw [orange,thick] (e) -- (D);
    \end{tikzpicture}}
 
By contracting this tensor network with generic element types,
we can compute the maximal independent set properties such as the maximal independence polynomial and the enumeration of maximal independent sets.
The maximal independence polynomial is defined as
\begin{equation}
D_{m}(G, x) = \sum_{k=0}^{\alpha(G)} b_k x^k,
\end{equation}
where $b_k$ is the number of maximal independent sets of size $k$.
Comparing with the independence polynomial in \Cref{eq:idpdef}, we have $b_{k} \leq a_{k}$ and $b_{\alpha(G)} = a_{\alpha(G)}$. $D_m(G, 1)$ counts the total number of maximal independent sets~\cite{Gaspers2012, Manne2013};
to our knowledge, the best algorithm has a time complexity $O(1.3642^{|V|})$~\cite{Gaspers2012}.

The benchmark of computing the maximal independent set properties on $3$-regular graphs is shown in \Cref{sec:benchmark}.

\subsection{Matching problem}
A $k$-matching in a graph $G=(V,E)$ is a set of $k$ edges that no two of which have a vertex in common.
We map an edge $(u, v) \in E$ to a degree of freedom $\langle u, v\rangle \in \{0, 1\}$ in a tensor network, where $1$ means an edge is in the set and $0$ otherwise.
The tensor network representation for the matching problem $\mathcal{N}_{\text{match}}$ can be specified as
\begin{equation}\label{eq:matchtensornetwork}
\begin{split}
    \Lambda &= \{\langle u,v\rangle \mid (u, v) \in E\},\\
    \mathcal{T} &= \{W^{(v)}_{\langle v, N(v, 1)\rangle \langle v, N(v,2) \rangle \ldots \langle v, N(v, d(v))\rangle} \mid v\in V\} \cup \{B^{(u, v)}_{\langle u,v\rangle} \mid (u, v) \in E\},\\
    \boldsymbol{\sigma}_o &= \varepsilon,
\end{split}
\end{equation}
where for each $v\in V$, we define a vertex tensor over its neighborhood $N(v)$ as
\begin{equation}
    W^{(v)}_{\langle v, N(v, 1)\rangle \langle v, N(v, 2) \rangle \ldots \langle v, N(v,d(v))\rangle} = \begin{cases}
        1, & \sum_{i=1}^{d(v)} \langle v, N(v, i) \rangle \leq 1,\\
        0, & \text{otherwise},
    \end{cases}
\end{equation}
and for each bond $(u, v) \in E$, we define a rank one tensor as
\begin{equation}
    B^{(u, v)}_{\langle u, v\rangle} = \begin{cases}
    1, & \langle u, v \rangle = 0 \\
    x^{w_{\langle u,v \rangle}}_{\langle u, v\rangle}, & \langle u, v \rangle = 1
\end{cases}.
\end{equation}
Here, $N(v, k)$ is the $k$th vertex in $N(v)$ and $d(v) = |N(v)|$ is the degree of vertex $v$; a label $\langle v, u \rangle$ is equivalent to $\langle u,v\rangle$.
$W$ tensor specifies the constraint that a vertex cannot be shared by two edges in the edge set, and an edge tensor carries the weights.

\begin{theorem}\label{thm:matchcomplex}
    The tensor network representation of a matching problem on graph $G=(V,E)$ (\Cref{eq:matchtensornetwork}) can be contracted in $cc(\mathcal{N}_{\text{match}}(G)) = O(|V|)2^{O({\rm tw}(L(G))}$ number of additions and multiplications, where $L(G)$ is the line graph of $G$.
\end{theorem}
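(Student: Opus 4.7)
The plan is to mirror the argument of \Cref{thm:complexreal}: identify the line graph of the hypergraph representation of $\mathcal{N}_{\text{match}}(G)$ with $L(G)$ (up to isolated vertices), and then invoke the Markov--Shi contraction-complexity bound.

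First, I would spell out the hypergraph representation of $\mathcal{N}_{\text{match}}(G)$. By construction, each label $\langle u,v\rangle$ (one per edge of $G$) becomes a hyperedge and each tensor becomes a vertex, so the hyperedge $\langle u,v\rangle$ is incident to exactly the three vertices $W^{(u)}$, $W^{(v)}$, and $B^{(u,v)}$.

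Second, I would show $L(\mathcal{N}_{\text{match}}(G)) \cong L(G)$ up to isolated vertices. The vertex set of $L(\mathcal{N}_{\text{match}}(G))$ is the label set, i.e., $E$. Two labels are adjacent iff they co-occur in some tensor; the rank-one tensor $B^{(u,v)}$ carries only the single label $\langle u,v\rangle$ and so produces no adjacency, while $W^{(x)}$ carries $\{\langle x, y\rangle : y \in N(x)\}$ and hence joins two labels $\langle u,v\rangle$ and $\langle u',v'\rangle$ exactly when $\{u,v\} \cap \{u',v'\} \ne \emptyset$. This is precisely the adjacency relation defining $L(G)$.

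Third, I would invoke the bound used in the proof of \Cref{thm:complexreal}: an optimal contraction order has complexity $O(|\mathcal{N}_{\text{match}}(G)|)\,2^{O({\rm tw}(L(\mathcal{N}_{\text{match}}(G))))}$. Substituting the isomorphism from the previous step and observing that the rank-one $B$-tensors can be absorbed into their unique adjacent $W$-tensor at no asymptotic cost (leaving $|V|$ nontrivial tensors), this simplifies to $O(|V|)\,2^{O({\rm tw}(L(G)))}$. The main care point I expect is the bookkeeping in the second step: one has to check that the cliques induced on $L(\mathcal{N}_{\text{match}}(G))$ by each $W^{(x)}$ coincide with the vertex-cliques defining $L(G)$, and that no extra adjacencies are produced by the $B$-tensors. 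Once this line-graph identification is clean, the complexity bound follows directly from the proof template of \Cref{thm:complexreal}.
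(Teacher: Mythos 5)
Your proposal is correct and follows essentially the same route as the paper's proof: the paper likewise absorbs the rank-one $B$-tensors into the $W$-tensors (leaving a network whose hypergraph representation is isomorphic to $G$, so that the relevant line graph is $L(G)$) and then applies the contraction-complexity bound from the proof of \Cref{thm:complexreal}. Your version merely spells out the line-graph identification in more detail than the paper does.
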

\begin{proof}
    To contract the tensor network, we first absorb edges tensors into the vertex tensors, which does not increase computational complexity.
    After this, the resulting tensor network is isomorphic to $G$.
    Hence the contraction complexity is $O(|V|)2^{O({\rm tw}(L(G))}$.
\end{proof}
Let $x_{\langle u,v\rangle}^{w_{\langle u,v\rangle}}=x$; the tensor network contraction corresponds to the matching polynomial
\begin{equation}
    M(G, x) = \sum\limits_{k=1}^{|V|/2} c_k x^k,
\end{equation}
where $k$ is the size of an edge set, and a coefficient $c_k$ is the number of $k$-matchings.

\subsection{Vertex coloring}
Let $G=(V,E)$ be a graph. A vertex coloring is an assignment of colors to each vertex $v\in V$ such that no edge connects two identically colored vertices. 
In a $k$-coloring problem, the number of colors is limited to less or equal to $k$.
Let us use the 3-coloring problem as an example to show how to reduce it to tensor contractions.
We first map a vertex $v \in V$ to a degree of freedom $c_v\in\{0,1,2\}$.
The tensor network for the vertex coloring problem $\mathcal{N}_{\text{3-color}}(G)$ can be specified as
\begin{equation}\label{eq:colortensornetwork}
\begin{split}
    \Lambda &= \{c_v \mid v \in V\},\\
    \mathcal{T} &= \{W^{(v)}_{c_v} \mid v\in V\} \cup \{B^{(u, v)}_{c_uc_v} \mid (u, v) \in E\},\\
    \boldsymbol{\sigma}_o &= \varepsilon,\\
\end{split}
\end{equation}
where for each vertex $v \in V$, we define a tensor labelled by $c_v$
\begin{equation}
    W^{(v)} = \left(\begin{matrix}
        1_{c_{v} = r}\\
        1_{c_{v} = g}\\
        1_{c_{v} = b}
    \end{matrix}\right),
\end{equation}
and for each edge $(u, v) \in E$, we define a tensor labelled by $(c_u, c_v)$ as
\begin{equation}
    B^{(u, v)} = \left(\begin{matrix}
        0 & x^{w_{uv}} & x^{w_{uv}}\\
        x^{w_{uv}} & 0 & x^{w_{uv}}\\
        x^{w_{uv}} & x^{w_{uv}} & 0
    \end{matrix}\right),
\end{equation}
where subscripts $c_v = r$, $c_v = g$ and $c_v = b$ are for labeling the color configurations. $B$ tensors are for specifying the coloring constraints and $W$ tensors are for labeling the solutions.

\begin{theorem}\label{thm:cutcomplex}
    The tensor network representation of a K-coloring problem on a graph $G=(V,E)$ (\Cref{eq:colortensornetwork}) can be contracted in $cc(\mathcal{N}_{\text{K-color}}(G)) = O(|E|)K^{O({\rm tw}(G))}$ number of additions and multiplications.
\end{theorem}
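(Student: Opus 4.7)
My plan is to mirror the proof of \Cref{thm:complexreal} almost verbatim, since the tensor network $\mathcal{N}_{\text{K-color}}(G)$ has essentially the same combinatorial skeleton as $\mathcal{N}_{\rm IS}(G)$: it has one label per vertex, vertex tensors touching only their own label, and edge tensors touching exactly the two labels of the incident vertices. The only differences are that each label now ranges over $K$ values instead of $2$, and the numerical entries of $W^{(v)}$ and $B^{(u,v)}$ change. Since the complexity bound stated in \Cref{thm:complexreal} only depends on the hypergraph structure of the network and on the bond dimension, carrying the argument over should be routine.

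First, I would describe the hypergraph representation of $\mathcal{N}_{\text{K-color}}(G)$: the vertex set consists of the tensors $\{W^{(v)}\}_{v\in V}\cup\{B^{(u,v)}\}_{(u,v)\in E}$ and the hyperedge set consists of the labels $\{c_v\}_{v\in V}$, where $c_v$ touches $W^{(v)}$ and every $B^{(u,v)}$ for $u \in N(v)$. Next I would compute its line graph $L(\mathcal{N}_{\text{K-color}}(G))$: two labels $c_u$ and $c_v$ share a tensor exactly when the edge tensor $B^{(u,v)}$ exists, i.e.\ when $(u,v)\in E$; the rank-one vertex tensors $W^{(v)}$ contribute no edges in the line graph because each of them is labelled by a single symbol. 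Hence $L(\mathcal{N}_{\text{K-color}}(G))$ is isomorphic to $G$ together with possibly some isolated vertices, which do not change the treewidth.

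Applying the generic pairwise contraction bound from \cite{Markov2008} recalled in the proof of \Cref{thm:complexreal}, an optimal contraction order has time cost $O(|\mathcal{N}|)\, d^{O({\rm tw}(L(\mathcal{N})))}$, where $d$ is the maximum bond dimension. Here every label has dimension $K$, so $d=K$, and we just showed ${\rm tw}(L(\mathcal{N}))={\rm tw}(G)$. The number of tensors is $|V|+|E|$, which is $O(|E|)$ once isolated vertices of $G$ are handled (their $W^{(v)}$ tensors simply contribute an $O(K)$ prefactor and do not affect the asymptotics; alternatively, I would absorb each $W^{(v)}$ into one incident edge tensor before contracting, reducing the tensor count to $|E|$ exactly). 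Combining these gives $cc(\mathcal{N}_{\text{K-color}}(G)) = O(|E|)\, K^{O({\rm tw}(G))}$ additions and multiplications, as claimed.

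I do not expect a genuine obstacle in this proof; the only mild point of care is justifying the replacement of the prefactor $|V|+|E|$ by $|E|$ in the theorem statement (trivial for graphs with no isolated vertices, and handled by absorption otherwise) and verifying that the line graph of the hypergraph really agrees with $G$ despite the presence of the rank-one $W^{(v)}$ tensors. Both are essentially bookkeeping, just as in \Cref{thm:complexreal}.
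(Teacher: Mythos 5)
Your proposal is correct and matches the paper's approach exactly: the paper's own proof is a one-line remark that the argument for \Cref{thm:complexreal} carries over with the dimension of each degree of freedom changed from $2$ to $K$, which is precisely the reduction you carry out (with the line-graph identification and the bookkeeping about vertex tensors and isolated vertices spelled out in more detail than the paper bothers to). No gaps; your extra care about the $|V|+|E|$ versus $O(|E|)$ prefactor is harmless and consistent with the paper's statement.
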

The proof is similar to that for \Cref{thm:complexreal} except the dimension of each degree of freedom is $K$.
Let $x^{w_{uv}} = x$ and $r_v = g_v = b_v = 1$; we then have a graph polynomial,
in which the $k$-th coefficient is the number of coloring that $k$ bonds satisfy constraints.
If a graph is colorable, the maximum order of this polynomial should be equal to the number of edges in this graph.
Similarly, one can define an edge coloring problem by defining the tensor network on the line graph of $G$.

\subsection{Cutting problem}
In graph theory, a cut is a partition of the vertices of a graph into two disjoint subsets,
which is also known as the spin-glass problem in statistical physics.
Let $G=(V,E)$ be a graph. We associate a weight $w_v$ to each $v\in V$. To reduce the cutting problem on $G$ to the contraction of a tensor network, we first define a Boolean degree of freedom $s_v\in\{0, 1\}$ for each vertex $v\in V$.
The tensor network representation for the cutting problem $\mathcal{N}_{\text{cut}}$ can be specified as
\begin{equation}\label{eq:cuttensornetwork}
\begin{split}
    \mathcal{T} &= \{B^{(u, v)}_{s_us_v} \mid (u, v) \in E\},\\
    \Lambda &= \{s_v \mid v \in V\},\\
    \boldsymbol{\sigma}_o &= \varepsilon,\\
\end{split}
\end{equation}
where for each edge $(u,v)\in E$, we define an edge matrix labelled by $(s_u, s_v)$ as
\begin{equation}
    B^{(u, v)} = \left(\begin{matrix}
        1 & x_{v}^{w_{uv}}\\
        x_{u}^{w_{uv}} & 1
    \end{matrix}\right).
\end{equation}
Here, variables $x_u^{w_{uv}}$ and $x_v^{w_{uv}}$ are for a cut on this edge or a domain wall in a spin glass problem.
\begin{theorem}\label{thm:colorcomplex}
    The tensor network representation of a cuting problem on a graph $G=(V,E)$ (\Cref{eq:cuttensornetwork}) can be contracted in $cc(\mathcal{N}_{\text{cut}}(G)) = O(|E|)2^{O({\rm tw}(G))}$ number of additions and multiplications.
\end{theorem}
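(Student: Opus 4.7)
The plan is to mirror the argument of Theorem \ref{thm:complexreal} essentially verbatim, with the sole change being that here the tensor network has only edge tensors (no vertex tensors), and so the line graph of its hypergraph representation is directly isomorphic to $G$ itself rather than to $G$ plus a few isolated vertices.

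First I would recall the Markov--Shi framework already invoked in \Cref{thm:complexreal}: for any tensor network $\mathcal{N}$, an optimal pairwise contraction order corresponds to a tree decomposition of the line graph $L(\mathcal{N})$ of its hypergraph representation, and the resulting contraction cost is $O(|\mathcal{N}|)\,2^{O({\rm tw}(L(\mathcal{N})))}$, where $|\mathcal{N}|$ is the number of input tensors. Applied here, $|\mathcal{N}_{\text{cut}}(G)| = |E|$, so it suffices to show ${\rm tw}(L(\mathcal{N}_{\text{cut}}(G))) = O({\rm tw}(G))$, and in fact equality up to isomorphism.

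Next I would analyze the hypergraph structure of $\mathcal{N}_{\text{cut}}(G)$ directly from \Cref{eq:cuttensornetwork}. The vertex set of the hypergraph is $\{B^{(u,v)} : (u,v)\in E\}$, one node per edge of $G$, and the hyperedges are indexed by the labels $\{s_v : v\in V\}$, where hyperedge $s_v$ is incident to precisely those tensors $B^{(u,v)}$ whose associated graph edge contains $v$. In the line graph $L(\mathcal{N}_{\text{cut}}(G))$, the vertices are the labels $s_v$, and two labels $s_u, s_v$ are adjacent iff they co-occur in some tensor. But $s_u$ and $s_v$ co-occur in a tensor iff that tensor is $B^{(u,v)}$, i.e.\ iff $(u,v)\in E$. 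Hence the map $s_v\mapsto v$ is a graph isomorphism between $L(\mathcal{N}_{\text{cut}}(G))$ and $G$, so ${\rm tw}(L(\mathcal{N}_{\text{cut}}(G))) = {\rm tw}(G)$.

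Combining these two steps gives a contraction cost of $O(|E|)\,2^{O({\rm tw}(G))}$ arithmetic operations (additions and multiplications in the underlying commutative semiring), which is exactly the claim. There is no real obstacle here; the one thing to be slightly careful about is the identification of $L(\mathcal{N}_{\text{cut}}(G))$ with $G$, since the cutting tensor network, unlike the independent-set one in \Cref{eq:mistensornetwork}, does not carry rank-one vertex tensors $W^{(v)}$ that would contribute isolated vertices to the line graph. This makes the reduction cleaner, not harder, and the proof is effectively a one-line invocation of the Markov--Shi bound once the line graph has been identified.
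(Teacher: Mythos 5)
Your proposal is correct and follows exactly the route the paper intends: the paper's own proof is just the one-line remark that the argument is ``similar to that for \Cref{thm:complexreal}'', and your write-up is precisely that argument carried out for $\mathcal{N}_{\text{cut}}(G)$, correctly identifying $L(\mathcal{N}_{\text{cut}}(G))$ with $G$ and applying the Markov--Shi bound with $|\mathcal{N}|=|E|$.
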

The proof is similar to that for \Cref{thm:complexreal}.
Let $x_u^{w_{uv}} = x_v^{w_{uv}} = x$; we have a graph polynomial similar to the previous ones,
in which the $k$th coefficient is two times the number of cut configurations that have size $k$ (i.e.\ cutting $k$ edges).

\subsection{Dominating Set}
In graph theory, a dominating set for a graph $G = (V, E)$ is a subset $D \subseteq V$ such that every vertex not in $D$ is adjacent to at least one member of $D$.
To reduce this problem to the contraction of a tensor network, we first map a vertex $v\in V$ to a Boolean degree of freedom $s_v\in\{0, 1\}$.
The tensor network for the dominating set problem $\mathcal{N}_{\text{dom}}$ can be specified as
\begin{equation}\label{eq:domtensornetwork}
\begin{split}
    \Lambda &= \{s_v \mid v \in V\},\\
    \mathcal{T} &= \{T^{(v)}_{s_{N(v,1)}s_{N(v, 2)} \ldots s_{N(v,d(v))} s_v} \mid v\in V\},\\
    \boldsymbol{\sigma}_o &= \varepsilon,\\
\end{split}
\end{equation}
where for each vertex $v$, we define a tensor on its closed neighborhood $\{v\} \cup N(v)$ as
\begin{equation}
T^{(v)}_{s_{N(v, 1)}s_{N(v, 2)}\ldots s_{N(v, d(v))}s_v} = \begin{cases}
    0 & s_{N(v, 1)}=s_{N(v, 2)}=\ldots=s_{N(v, d(v))}=s_v=0,\\
    1 & s_v=0,\\
    x^{w_v}_v & \text{otherwise}.
\end{cases}
\end{equation}
Here, $w_v$ is the weight associated with the vertex $v$, $N(v, k)$ is the $k$th vertex in $N(v)$ and $d(v) = |N(v)|$ is the degree of vertex $v$.
This tensor implies a configuration having a closed neighborhood of $v$ not in $D$ ($s_{N(v, 1)}=s_{N(v, 2)}=\ldots=s_{N(v, d(v))}=s_v=0$) cannot be a dominating set.
Otherwise, if $v$ is in $D$, this tensor contributes a multiplicative factor $x_v^{w_v}$ to the output. 
\begin{theorem}\label{thm:domcomplex}
    The tensor network representation of a dominating set problem on a graph $G=(V,E)$ (\Cref{eq:domtensornetwork}) can be contracted in $cc(\mathcal{N}_{\text{dom}}(G)) = O(|V|)2^{O({\rm tw}(G)\Delta)}$ number of additions and multiplications.
\end{theorem}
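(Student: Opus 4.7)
The plan is to mirror the argument used for \Cref{thm:miscomplex}, since the tensor network for the dominating set problem has exactly the same structural feature that makes that proof work: each tensor $T^{(v)}$ is defined over the closed neighborhood $\{v\} \cup N(v)$, i.e.\ one tensor per vertex whose label set matches the vertex and its neighbors. First I would observe that $|\mathcal{T}| = |V|$, so the prefactor $O(|V|)$ is immediate, and by the same reasoning used in \Cref{thm:complexreal} the contraction complexity is controlled by $2^{O({\rm tw}(L(\mathcal{N}_{\text{dom}}(G))))}$, where $L(\mathcal{N}_{\text{dom}}(G))$ is the line graph of the hypergraph representation of the tensor network.

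Next I would bound ${\rm tw}(L(\mathcal{N}_{\text{dom}}(G)))$ by $O({\rm tw}(G)\,\Delta)$. In $L(\mathcal{N}_{\text{dom}}(G))$, the tensor $T^{(v)}$ induces a clique on the label set $\{s_w \mid w \in \{v\}\cup N(v)\}$, so any valid tree decomposition must have a bag containing each such clique. Starting from an optimal tree decomposition of $G$ (of width ${\rm tw}(G)$), I would construct a tree decomposition of $L(\mathcal{N}_{\text{dom}}(G))$ by the following augmentation: for every $v \in V$, pick an arbitrary neighbor $u \in N(v)$ (or $v$ itself if $v$ is isolated), then locate a bag of the original decomposition containing the edge $(v,u)$ (which exists by the definition of tree decomposition), and add all of $\{v\} \cup N(v)$ into that bag. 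Since $|\{v\}\cup N(v)| \le \Delta + 1$ and every vertex of $G$ triggers at most one such augmentation at each bag it participates in, the size of each bag grows by at most a factor $\Delta + 1$, giving a tree decomposition of $L(\mathcal{N}_{\text{dom}}(G))$ of width $O({\rm tw}(G)\,\Delta)$.

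Combining these two steps yields $cc(\mathcal{N}_{\text{dom}}(G)) = O(|V|)\, 2^{O({\rm tw}(G)\,\Delta)}$, as claimed. The main obstacle, such as it is, lies in the augmentation step: one has to justify carefully that inflating the bags by adding $\{v\}\cup N(v)$ preserves the three defining properties of a tree decomposition (vertex cover, edge cover, and the running-intersection property). Vertex and edge cover are immediate because we are only enlarging existing bags; the running-intersection property holds because the added vertices now appear contiguously in a single bag, which trivially forms a connected subtree. Once this bookkeeping is done, the bound on the treewidth, and hence on the contraction complexity, follows exactly as in the proof of \Cref{thm:miscomplex}.
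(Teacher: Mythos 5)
Your proposal follows the paper's own route: the paper proves this theorem simply by deferring to the proof of \Cref{thm:miscomplex}, and that proof is exactly the argument you give --- the $O(|V|)$ prefactor from the tensor count, the observation that each $T^{(v)}$ induces a clique on the closed neighborhood $\{v\}\cup N(v)$ in the line graph of the hypergraph representation, and the absorption of each such clique into a bag of an optimal tree decomposition of $G$ at the cost of a factor $\Delta+1$ in bag size, combined with \Cref{thm:complexreal}.

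One caveat on the step you yourself flag as the main obstacle. Your justification of the running-intersection property is not right as stated: when you add $N[v]$ to a single bag $B_v$ chosen because it contains some edge $(v,u)$, a neighbor $w\in N(v)$ with $w\neq u$ already occupies a connected subtree $T_w$ of bags elsewhere in the decomposition, and $B_v$ need not be adjacent to $T_w$, so after the augmentation the set of bags containing $w$ can become disconnected; the added vertices do not ``trivially form a connected subtree'' because they were already present in other bags. (The paper's proof of \Cref{thm:miscomplex} is equally loose on this point, so this is a refinement rather than a divergence.) The standard repair preserves the bound: replace \emph{every} bag $B$ by $B\cup\bigcup_{w\in B}N(w)$. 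Each new bag has size at most $|B|(\Delta+1)\le({\rm tw}(G)+1)(\Delta+1)$; every clique $N[v]$ is contained in any new bag whose original bag contained $v$; and for each vertex $u$ the new bags containing $u$ are exactly those whose original bags meet $N[u]$, which form a union of connected subtrees each meeting $T_u$ (since every edge $(u,w)$ lies in some common bag), hence a connected subtree. With that substitution the rest of your argument, and the claimed complexity $O(|V|)2^{O({\rm tw}(G)\Delta)}$, goes through.
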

The proof is similar to that for \Cref{thm:miscomplex}.
The graph polynomial for the dominating set problem is known as the domination polynomial~\cite{Alikhani2009}
\begin{equation}
D(G, x) = \sum_{k=0}^{\gamma(G)} d_k x^k,
\end{equation}
where $d_k$ is the number of dominating sets of size $k$.

\subsection{Boolean satisfiability Problem}
The Boolean satisfiability problem is the problem of determining if there exists an assignment that satisfies a given Boolean formula.
One can specify a satisfiable problem in the conjunctive normal form (CNF), i.e.\ a conjunction of clauses (or disjunctions of Boolean literals).
Given the alphabet of Boolean variables $V$ and its negation $\neg V = \{\neg v \mid v \in V\}$, a CNF can be formally defined as
\begin{equation}
    {\rm CNF} = \bigwedge_{k=1}^{M} \bigvee_{i=1}^{|C^{(k)}|} C^{(k)}_i
\end{equation}
where $M$ is the number of clauses, $C^{(k)}$ is the $k$th clause and $C^{(k)}_i \in V \cup \neg V$ is the $i$th literal in it.
The standard tensor network can be used to study the counting version of the satisfiability problem~\cite{Biamonte2015}, while in the following, we will show a generic reduction from the problem of solving a CNF to a tensor network contraction for solving more solution space properties.
We first map each Boolean literal $v\in V$ to a Boolean degree of freedom $s_v \in \{0, 1\}$.
$s_v = 0$ stands for variable $v$ having value \texttt{false} while $s_a=1$ stands for having value \texttt{true}.
The tensor network $\mathcal{N}_{\text{CNF}}$ can be specified as
\begin{equation}\label{eq:sattensornetwork}
\begin{split}
    \Lambda &= \{s_v \mid v \in V\},\\
    \mathcal{T} &= \{W^{(v)}_v \mid v \in V\} \cup \{T^{(k)}_{s_{N(k,1)} s_{N(k, 2)}\ldots s_{N(k,d(k)))}} \mid k=1,\ldots,M\},\\
    \boldsymbol{\sigma}_o &= \varepsilon,\\
\end{split}
\end{equation}
where a tensor defined on literal $v \in V$ is
\begin{equation}
    W^{(v)} = \left(\begin{matrix}
    1\\
    1_v
    \end{matrix}\right)
\end{equation}
and a tensor defined on the clause $C^{(k)}$ is
\begin{equation}
T^{(k)}_{s_{N(k, 1)} s_{N(k, 2)} \ldots s_{N(k, d(k))}} = \begin{cases}
    x^{w_k}, &C^{(k)} \text{~is~ satisfied ~by~ } (s_{N(k, 1)}, s_{N(k, 2)}, \ldots, s_{N(k, d(k))})\\
    1, &{\rm otherwise}
\end{cases},
\end{equation}
where $N(k, i) = \begin{cases}\neg C^{(k)}_i, &C^{(k)}_i \in \neg V\\ C^{(k)}_i, & C^{(k)}_i \in V\end{cases}$ is the $i$th literal in $C^{(k)}$ with its negation sign removed and $d(k)$ is the number of boolean variables in it;
$w_k$ is the weight associated with clause $C^{(k)}$.
\begin{theorem}\label{thm:cnfcomplex}
    The tensor network representation of a CNF (\Cref{eq:sattensornetwork}) can be contracted in $cc(\mathcal{N}_{\text{CNF}}) = O(M)2^{O({\rm tw}(H))}$ number of additions and multiplications, where $M$ is the number of clauses,
    $H$ is a hypergraph constructed by mapping a variable $v\in V$ to a vertex and the $k$th clause $C^{(k)}$ to a hyperedge connecting $\{N(k, i)\mid i=1,\ldots, d(k)\}$.
\end{theorem}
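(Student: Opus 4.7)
The plan is to mirror the structure used in the proofs of \Cref{thm:complexreal}, \Cref{thm:matchcomplex}, and \Cref{thm:domcomplex}: identify the hypergraph representation of $\mathcal{N}_{\text{CNF}}$, compute (or bound) the treewidth of its line graph, and then invoke the general fact that a tensor network can be contracted in $O(|\mathcal{N}|) 2^{O({\rm tw}(L(\mathcal{N})))}$ arithmetic operations~\cite{Markov2008}.

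First I would describe the hypergraph representation of $\mathcal{N}_{\text{CNF}}$ explicitly: its vertices are the $|V|+M$ input tensors $\{W^{(v)}\}_{v\in V} \cup \{T^{(k)}\}_{k=1}^{M}$, and its hyperedges are indexed by the labels $\{s_v\}_{v\in V}$, where the hyperedge $s_v$ is incident to $W^{(v)}$ together with every clause tensor $T^{(k)}$ such that $v$ appears (possibly negated) in $C^{(k)}$. Taking the line graph of this hypergraph produces a graph on the vertex set $V$ in which two variables are adjacent whenever some clause contains both of them; in other words, $L(\mathcal{N}_{\text{CNF}})$ is precisely the primal (2-section) graph of $H$, since the singleton label $s_v$ in $W^{(v)}$ contributes no new adjacencies. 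Because the treewidth of a hypergraph is defined via its primal graph, this gives ${\rm tw}(L(\mathcal{N}_{\text{CNF}}))={\rm tw}(H)$, so an optimal contraction order incurs $2^{O({\rm tw}(H))}$ space per intermediate tensor.

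Next I would account for the prefactor. The $W^{(v)}$ tensors are rank one, so before contracting we can absorb each of them into any clause tensor $T^{(k)}$ that contains the variable $v$; this absorption is $O(1)$ per entry of $T^{(k)}$ and does not enlarge any intermediate tensor. After absorption the tensor network contains only the $M$ clause tensors, so $|\mathcal{N}|=M$ and the overall number of additions and multiplications is $O(M) \cdot 2^{O({\rm tw}(H))}$, as claimed. (Variables that occur in no clause can be discarded at the outset, so we may assume $|V| \le \sum_k d(k)$, which is why the $|V|$ contribution drops out.)

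The main obstacle I expect is the identification ${\rm tw}(L(\mathcal{N}_{\text{CNF}}))={\rm tw}(H)$: one has to check that the cliques contributed by the clauses are exactly the cliques already present in the primal graph of $H$, and that the rank-one tensors $W^{(v)}$ (whose single incident hyperedge is $s_v$) do not produce any spurious edges in $L(\mathcal{N}_{\text{CNF}})$. Once this identification is in place, the rest of the argument is a direct application of the contraction-order/treewidth correspondence used throughout the paper.
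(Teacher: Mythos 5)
Your proposal is correct and follows essentially the same route as the paper, which disposes of this theorem in one line by noting that $H$ is (the hypergraph whose primal graph is) the line graph of $\mathcal{N}_{\text{CNF}}$ and then invoking the treewidth--contraction correspondence of \Cref{thm:complexreal}. Your additional care in identifying $L(\mathcal{N}_{\text{CNF}})$ with the 2-section of $H$ and in absorbing the rank-one $W^{(v)}$ tensors to justify the $O(M)$ prefactor only makes explicit what the paper leaves implicit.
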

This can be proved by showing the hypergraph $H$ is the line graph of $\mathcal{N}_{\text{CNF}}$.
Let $x^{w_k} = x$ and $1_v = 1$; one can get a polynomial, in which the $k$-th coefficient gives the number of assignments that $k$ clauses are satisfied.

\subsection{Set packing}
Suppose one has a finite set $V$ and a list of subsets of $V$, denoted as $S$. Then, the set packing problem asks if some $k$ subsets in $S$ are pairwise disjoint.
It is the hypergraph generalization of the independent set problem, where a set corresponds to a vertex and an element corresponds to a hyperedge.
The generic tensor network for the set packing problem $\mathcal{N}_{\text{pack}}$ also has a similar form as that for the independent set problem
\begin{equation}\label{eq:packtensornetwork}
\begin{split}
    \Lambda &= \{s_\sigma \mid \sigma \in S\},\\
    \mathcal{T} &= \{B^{(v)}_{s_{N(v, 1)} \ldots s_{N(v, d(v))}} \mid v\in V\} \cup \{W^{(\sigma)}_{s_\sigma} \mid \sigma \in S\},\\
    \boldsymbol{\sigma}_o &= \varepsilon,
\end{split}
\end{equation}
where for each $v \in V$, we have the constraints over sets, $N(v) = \{\sigma\mid \sigma\in S \wedge v \in \sigma\}$, that containing it as
\begin{equation}
    B^{(v)}_{N(v, 1) N(v, 2)\ldots N(v, d(v))} = \begin{cases}
        1, & s_{N(v, 1)}+ s_{N(v, 2)} + \ldots + s_{N(v, d(v))}\leq 1,\\
        0, & \text{otherwise}.
    \end{cases}
\end{equation}
and the vertex tensor for each $\sigma \in S$
\begin{equation}
    W^{(\sigma)} = \left(\begin{matrix}
        1\\
        x^{w_\sigma}_{\sigma}
    \end{matrix}\right),
\end{equation}
where $N(v, k)$ is the $k$th element in $N(v)$ and $d(v) = |N(v)|$ is the number of elements in $N(v)$.

\begin{theorem}\label{thm:packcomplex}
    The tensor network representation of a set packing problem (\Cref{eq:packtensornetwork}) can be contracted in $cc(\mathcal{N}_{\text{pack}}) = O(|S|)2^{O({\rm tw}(H))}$ number of additions and multiplications, where $|S|$ is the number of sets,
    $H$ is a hypergraph constructed by mapping a set $\sigma\in S$ to a vertex and element $v \in V$ to a hyperedge connecting sets, $N(v)$, that containing it.
\end{theorem}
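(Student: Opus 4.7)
The plan is to mimic the proof template used for Theorems \ref{thm:complexreal} and \ref{thm:cnfcomplex}: reduce the contraction cost to a tree decomposition of the line graph of the hypergraph representation of $\mathcal{N}_{\text{pack}}$, and then identify that line graph with the $2$-section of $H$. First I would unpack the hypergraph representation of $\mathcal{N}_{\text{pack}}$: the vertices are the tensors $\{W^{(\sigma)}\}_{\sigma\in S}\cup\{B^{(v)}\}_{v\in V}$, and the hyperedges are the labels $\{s_\sigma\}_{\sigma\in S}$, where the hyperedge $s_\sigma$ contains $W^{(\sigma)}$ together with every $B^{(v)}$ such that $v\in\sigma$ (equivalently, $\sigma\in N(v)$).

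Next I would describe the line graph $L(\mathcal{N}_{\text{pack}})$ as defined in the proof of Theorem \ref{thm:complexreal}: its vertex set is $\{s_\sigma:\sigma\in S\}$, and two distinct labels $s_{\sigma_1}$ and $s_{\sigma_2}$ are adjacent iff they co-occur in some tensor. Since $W^{(\sigma)}$ is rank one, no edge can arise from a $W$ tensor; an edge arises iff there is some $B^{(v)}$ containing both $s_{\sigma_1}$ and $s_{\sigma_2}$, i.e.\ iff $\exists v\in V$ with $\sigma_1,\sigma_2\in N(v)$, i.e.\ iff $\sigma_1\cap\sigma_2\neq\emptyset$. Now I would observe that this is exactly the $2$-section of the hypergraph $H$ defined in the theorem statement, because in $H$ two sets are joined by the hyperedge $N(v)$ precisely when both contain $v$. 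Therefore $L(\mathcal{N}_{\text{pack}})$ is isomorphic to the $2$-section of $H$ (up to isolated vertices corresponding to sets that share no element with any other set, which do not affect treewidth), and hence ${\rm tw}(L(\mathcal{N}_{\text{pack}}))={\rm tw}(H)$ under the standard convention that hypergraph treewidth is defined as the treewidth of the $2$-section.

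Finally I would invoke the general contraction bound $O(|\mathcal{N}|)\,2^{O({\rm tw}(L(\mathcal{N})))}$ from the proof of Theorem \ref{thm:complexreal}: the total number of tensors in $\mathcal{N}_{\text{pack}}$ is $|S|+|V|$, but the contraction cost is dominated by the constraint tensors $B^{(v)}$, and under the same accounting convention used in Theorems \ref{thm:complexreal} and \ref{thm:cnfcomplex} (where the prefactor is taken to be the number of constraint-type tensors) one obtains the claimed $O(|S|)\,2^{O({\rm tw}(H))}$. To keep things tight I would note explicitly that the pointwise cost of each addition and multiplication in the semiring is treated as $O(1)$ here, in line with the other complexity statements in the paper.

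The main obstacle, and the step that I would write most carefully, is the identification between $L(\mathcal{N}_{\text{pack}})$ and the $2$-section of $H$, together with the justification that replacing the hypergraph $H$ by its $2$-section does not change the asymptotic treewidth bound used by the contraction algorithm. The combinatorial matching of vertices/labels is straightforward, but one must be precise that the independence constraint tensors $B^{(v)}$ of rank $d(v)=|N(v)|$ correspond naturally to the hyperedges $N(v)$ of $H$, and that the contraction-order/tree-decomposition correspondence from \cite{Markov2008} still applies when hyperedges of arbitrarily high arity are present (which is exactly the generalization emphasized in Section \ref{sec:tn}).
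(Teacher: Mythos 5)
Your proposal is correct and follows essentially the same route as the paper, which disposes of this theorem in a single sentence by noting that $H$ is (isomorphic to) the line graph of $\mathcal{N}_{\text{pack}}$ and then appealing to the general bound from \Cref{thm:complexreal}. You simply fill in the details the paper elides — the explicit identification of labels with vertices of $H$ and of the constraint tensors $B^{(v)}$ with the hyperedges $N(v)$, plus the 2-section convention for hypergraph treewidth — which is a faithful expansion rather than a different argument.
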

This can be proved by showing the hypergraph $H$ is the line graph of $\mathcal{N}_{\text{pack}}$.

\subsection{Set covering}
Suppose one has a finite set $V$ and a list of subsets of $V$, denoted as $S$.
The set covering problem aims to find the minimum number of sets in $S$ that incorporate (cover) all of elements in $V$.
To get the generic tensor network representation, we first map a set $\sigma \in S$ to a Boolean degree of freedom $s_\sigma\in\{0, 1\}$.
Then the tensor network representation for the set covering problem $\mathcal{N}_{\text{cover}}$ can be specified as
\begin{equation}\label{eq:covertensornetwork}
\begin{split}
    \Lambda &= \{s_\sigma \mid \sigma \in \mathcal{S}\},\\
    \mathcal{T} &= \{W^{(\sigma)}_{s_\sigma} \mid \sigma \in S\} \cup \{B^{(v)}_{s_{N(v, 1)}s_{N(v, 2)}\ldots s_{N(\sigma, d(v))}} \mid v \in V\},\\
    \boldsymbol{\sigma}_o &= \varepsilon,
\end{split}
\end{equation}
where for each $s_\sigma$, we define a parameterized rank-one tensor indexed by it as
\begin{equation}
W^{(\sigma)} = \left(\begin{matrix}
    1 \\
    x_\sigma^{w_\sigma}
    \end{matrix}\right)
\end{equation}
where $x_\sigma$ is a generic typed variable associated with $\sigma$ and $w_\sigma$ is a positive integer as the weight.
For each element $v \in V$, we can define a constraint over all $s \in S$ containing this element, i.e. $N(v) = \{s \mid s \in S \land v\in s\}$, as
\begin{equation}
B^{(v)}_{s_{N(v, 1)} s_{N(v, 2)} \ldots s_{N(v, d(v))}} = \begin{cases}
    0, & s_{N(v, 1)}=s_{N(v, 2)}=\ldots=s_{N(v, d(v))}=0,\\
    1, & \text{otherwise}.
\end{cases}
\end{equation}
where $N(v, k)$ is the $k$th element in $N(v)$ and $d(v) = |N(v)|$ is the number of elements in $N(v)$. 
If a subset of $S$ does not include any sets containing element $v$, then the corresponding entry is zero.

\begin{theorem}\label{thm:covercomplex}
    The tensor network representation of a set covering problem (\Cref{eq:covertensornetwork}) can be contracted in $cc(\mathcal{N}_{\text{cover}}) = O(|S|)2^{O({\rm tw}(H))}$ number of additions and multiplications, where $|S|$ is the number of sets,
    $H$ is a hypergraph constructed by mapping a set $\sigma\in S$ to a vertex and element $v \in V$ to a hyperedge connecting sets $N(v)$, that containing it.
\end{theorem}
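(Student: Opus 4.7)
The plan is to mirror the proof strategy used for \Cref{thm:packcomplex}: reduce the claim to the standard contraction complexity bound of \Cref{thm:complexreal} by identifying the relevant line graph. Recall that, for any tensor network $\mathcal{N}$, contracting along an optimal order takes $O(|\mathcal{N}|)\,2^{O({\rm tw}(L(\mathcal{N})))}$ arithmetic operations, where $L(\mathcal{N})$ is the line graph of the hypergraph representation of $\mathcal{N}$. Hence, it suffices to (i) count the tensors in $\mathcal{N}_{\text{cover}}$ and (ii) identify $L(\mathcal{N}_{\text{cover}})$ with the hypergraph $H$ described in the theorem.

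First I would count tensors. The network has one vertex tensor $W^{(\sigma)}$ per set $\sigma \in S$ and one constraint tensor $B^{(v)}$ per element $v \in V$, for a total of $|S|+|V|$ tensors. Discarding elements that belong to no set (they impose a trivial zero constraint that can be pre-applied, or simply assuming that each $v \in V$ lies in at least one set in $S$, which is the only nontrivial case), we have $|V| \le \sum_{\sigma \in S} |\sigma| = O(|S|\cdot \max_\sigma|\sigma|)$, and in any event the prefactor can be folded into the $O(|S|)$ form as in the set packing case.

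Next I would identify the line graph. In the hypergraph representation of $\mathcal{N}_{\text{cover}}$, tensors are vertices and labels $s_\sigma$ are hyperedges; the hyperedge for $s_\sigma$ connects the vertex tensor $W^{(\sigma)}$ and every constraint tensor $B^{(v)}$ with $v \in \sigma$ (equivalently $\sigma \in N(v)$). In $L(\mathcal{N}_{\text{cover}})$, vertices are the labels $\{s_\sigma\}_{\sigma \in S}$, and two labels $s_\sigma, s_{\sigma'}$ are adjacent exactly when their hyperedges share a tensor. They never share a $W$ tensor (each $W^{(\sigma)}$ carries only one label), so they are adjacent iff some $B^{(v)}$ carries both, i.e., iff there exists $v \in V$ with $\sigma,\sigma' \in N(v)$. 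This is precisely the adjacency relation of the $2$-section of $H$: the hyperedge of $H$ corresponding to element $v$ becomes a clique on $N(v)$, and these cliques are exactly the edges of $L(\mathcal{N}_{\text{cover}})$. Hence $L(\mathcal{N}_{\text{cover}})$ coincides with the primal graph of $H$, and since treewidth of a hypergraph is defined through its primal graph, ${\rm tw}(L(\mathcal{N}_{\text{cover}})) = {\rm tw}(H)$.

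Combining these two observations with \Cref{thm:complexreal} yields $cc(\mathcal{N}_{\text{cover}}) = O(|S|)\,2^{O({\rm tw}(H))}$, as claimed. The main obstacle, if any, is merely the bookkeeping of ensuring that the $W$ tensors do not introduce extra adjacencies in the line graph and that the prefactor $|S|+|V|$ collapses to $O(|S|)$; both are routine and are handled as in the proof of \Cref{thm:packcomplex}.
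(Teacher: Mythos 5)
Your proposal follows essentially the same route as the paper, whose entire proof is the single sentence that $H$ is the line graph of $\mathcal{N}_{\text{cover}}$ (deferring to \Cref{thm:complexreal}); you have simply filled in the details of that identification, including the correct observation that the $W^{(\sigma)}$ tensors contribute no extra adjacencies and that ${\rm tw}(H)$ is read off the $2$-section. The only loose point---that the tensor count is $|S|+|V|$ rather than $O(|S|)$ outright---is present in the paper's statement as well, so your handling of it is consistent with the paper's own level of rigor.
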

This can be proved by showing the hypergraph $H$ is the line graph of $\mathcal{N}_{\text{cover}}$.

\section{Bounding the MIS enumeration space}\label{sec:bounding}
\begin{figure}
    \centering
    \includegraphics[width=0.9\textwidth, trim={0cm 0cm 0cm 0cm}, clip]{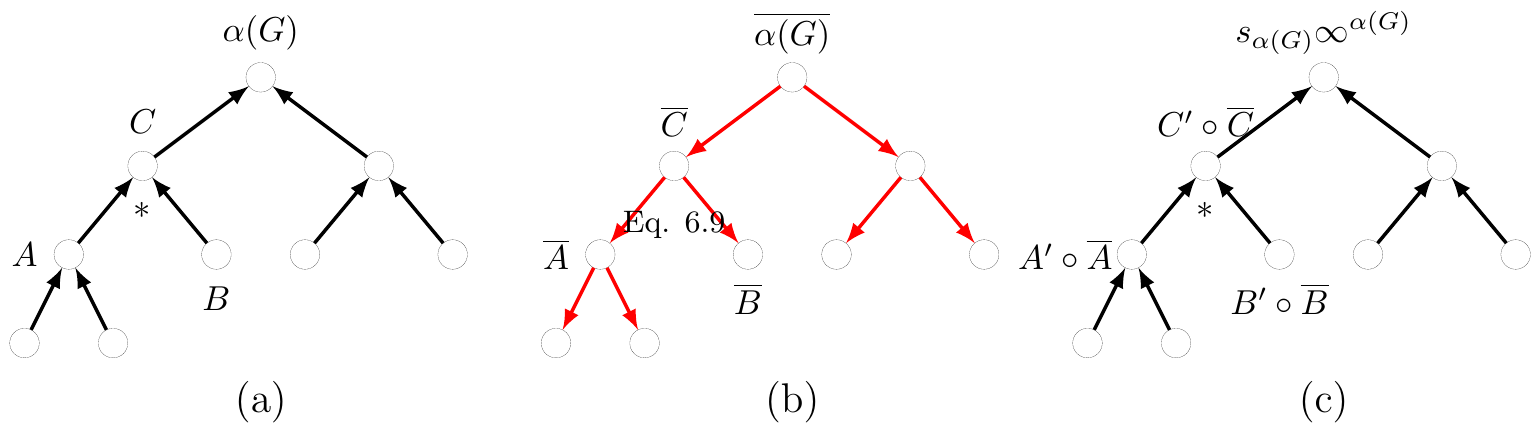}
    \caption{Bounded enumeration of maximum independent sets. Here, a circle is a tensor, an arrow specifies the execution direction of a function, $\overline A$ is the Boolean mask for $A$ and $\circ$ is the Hadamard (element-wise) multiplication. (a) is the forward pass with tropical algebra (\Cref{eq:tropical}) for computing $\alpha(G)$.
     (b) is the backward pass for computing Boolean gradient masks.
     (c) is the masked tensor network contraction with tropical algebra combined with sets (\Cref{eq:countingtropicalset}) for enumerating configurations.}
     \label{fig:bounding}
\end{figure}

When using the algebra in \Cref{eq:countingtropicalset} to enumerate all MISs, the program often stores significantly more intermediate configurations than necessary.
To reduce the space overhead, we will show how to bound the searching space using the MIS size $\alpha(G)$.
The bounded contraction consists of three stages as shown in \Cref{fig:bounding}. (a) We first compute the value of $\alpha(G)$ with tropical algebra and cache all intermediate tensors.
(b) Then, we compute a Boolean mask for each cached tensor, where we use a Boolean \texttt{true} to represent a tensor element having a contribution to the MIS and Boolean \texttt{false} otherwise.
(c) Finally, we perform masked tensor network contraction (i.e.\ discarding elements masked \texttt{false}) using the element type with the algebra in \Cref{eq:countingtropicalset} to obtain all MIS configurations.
The crucial part is computing the masks in step (b). Note that these masks correspond to tensor elements with non-zero gradients to the MIS size; we can compute these masks by back-propagating the gradients.
To derive the back-propagation rule for tropical tensor contraction,
we first reduce the problem to finding the back-propagation rule of a tropical matrix multiplication $C = A B$.
Since $ O_{ik} = \bigoplus_{j} \ A_{ij} \odot B_{jk} = \max_{j} \ A_{ij} \odot B_{jk}$ with tropical algebra, we have the following inequality
\begin{equation}
    A_{ij} \odot B_{jk} \leq C_{ik}.
\end{equation}
Here $\leq$ on tropical numbers are the same as the real-number algebra.
The equality holds for some $j'$, which means $A_{ij'}$ and $B_{j'k}$ have contributions to $C_{ik}$.
Intuitively, one can use this relation to identify elements with nonzero gradients in $A$ and $B$,
but if doing this directly, one loses the advantage of using BLAS libraries~\cite{TropicalGEMM} for high performance.
Since $A_{ij} \odot B_{jk} = A_{ij} + B_{jk}$, one can move $B_{jk}$ to the right hand side of the inequality: 
\begin{equation}
    A_{ij} \leq C_{ik} \odot B_{jk}^{\circ -1}
\end{equation}
where ${}^{\circ -1}$ is the element-wise multiplicative inverse on tropical algebra (which is the additive inverse on real numbers).
The inequality still holds if we take the minimum over $k$: 
\begin{equation}
    A_{ij} \leq \min_{k}(C_{ik} \odot B_{jk}^{\circ -1}) = \left(\max_{k} \left(C_{ik}^{\circ -1} \odot B_{jk} \right) \right)^{\circ -1} = \left(\bigoplus_{k} \left(C_{ik}^{\circ -1} \odot B_{jk} \right) \right)^{\circ -1} = \left( C^{\circ-1} B^{\mathsf{T}} \right)^{\circ -1}_{ij}.
\end{equation}
On the right-hand side, we transform the operation into a tropical matrix multiplication so that we can utilize the fast tropical BLAS routines~\cite{TropicalGEMM}.
Again, the equality holds if and only if the element $A_{ij}$ has a contribution to $C$ (i.e.\ having a non-zero gradient).
Let the gradient mask for $C$ be $\overline C$; the back-propagation rule for gradient masks reads
\begin{equation}\label{eq:adrule}
\overline{A}_{ij} = \delta \left(A_{ij}, \left( \left( C^{\circ-1} \circ \overline C \right) B^{\mathsf{T}} \right)_{ij}^{\circ -1} \right),
\end{equation}
where $\delta$ is the Dirac delta function that returns one if two arguments have the same value and zero otherwise, $\circ$ is the element-wise product, Boolean false is treated as the tropical number $\mymathbb{0}$, and Boolean true is treated as the tropical number $\mymathbb{1}$.
This rule defined on matrix multiplication can be easily generalized to tensor contraction by replacing the matrix multiplication between $C^{\circ-1} \circ \overline C$ and $B^{\mathsf{T}}$ by a tensor contraction.
With the above method, one can significantly reduce the space needed to store the intermediate configurations by setting the tensor elements masked false to zero during contraction.

\section{The fitting approach to computing the independence polynomial}\label{sec:finitefield}
In this section, we propose to find the independence polynomial by fitting $\alpha(G)+1$ random pairs of $x_{i}$ and $y_{i} = I(G,x_{i})$. One can then compute the independence polynomial coefficients $a_{i}$ by solving the linear equation: 
\begin{equation}
\left(\begin{matrix}
1 & x_0 & x_0^2 & \ldots & x_0^{\alpha(G)} \\
1 & x_1 & x_1^2 & \ldots & x_1^{\alpha(G)} \\
\vdots & \vdots & \vdots &\ddots & \vdots \\
1 & x_{\alpha(G)} & x_{\alpha(G)}^2 & \ldots & x_{\alpha(G)}^{\alpha(G)}
\end{matrix}\right)
\left(\begin{matrix}
a_0 \\ a_1 \\ \vdots \\ a_{\alpha(G)}
\end{matrix}\right)
= \left(\begin{matrix}
y_0 \\ y_1 \\ \vdots \\ y_{\alpha(G)}
\end{matrix}\right).\label{eq:lineareq}
\end{equation}
Unlike using the polynomial numbers in \Cref{eq:polynomial},  the fitting approach does not have the linear overhead in space.
However, since the independence polynomial coefficients can have a huge order-of-magnitude range, the round-off errors can be larger than the value itself when using floating-point numbers in computation.
To avoid using the arbitrary precision number that can be very slow and is incompatible with GPU devices, we introduce the following finite-field algebra $\text{GF}(p)$ approach:
\begin{equation}
\eqname{GF$(p)$}
\begin{split}
    x ~\oplus~ y &= x+y\pmod p,\\
    x ~\odot~ y &= xy\pmod p,\\
    \mymathbb{0} &= 0,\\
    \mymathbb{1} &= 1.
\end{split}\label{eq:finitefield}
\end{equation}
Regarding the finite-field algebra, we have the following observations:
\begin{enumerate}
    \item One can use Gaussian elimination~\cite{Golub2013} to solve the linear equation \Cref{eq:lineareq} since it is a generic algorithm that works for any elements with field algebra. The multiplicative inverse of a finite-field algebra can be computed with the extended Euclidean algorithm.
    \item Given the remainders of a larger unknown integer $x$ over a set of co-prime integers $\{p_1, p_2, \ldots, p_n\}$,
    $x \pmod {p_1 \times p_2 \times \ldots \times p_n}$ can be computed using the Chinese remainder theorem. With this, one can infer big integers from small integers.
\end{enumerate}
With these observations, we develop \Cref{alg:finitefield} to compute the independence polynomial exactly without introducing space overheads.
The algorithm iterates over a sequence of large prime numbers until convergence.
In each iteration, we choose a large prime number $p$, and contract the tensor networks to evaluate the polynomial for each variable $\chi = (x_{0}, x_{1}, \ldots, x_{\alpha(G)})$ on ${\rm GF}(p)$ and denote the outputs as $(y_0, y_1, \ldots, y_{\alpha(G)}) \pmod p$.
Then we solve \Cref{eq:lineareq} using Gaussian elimination on ${\rm GF}(p)$ to find the coefficient modulo $p$, $A_p \equiv (a_0, a_1, \ldots, a_{\alpha(G)})\pmod p$.
As the last step of each iteration, we apply the Chinese remainder theorem to update $A \pmod P $ to $ A \pmod {P\times p}$, where $P$ is a product of all prime numbers chosen in previous iterations.
If this number does not change compared with the previous iteration, it indicates the convergence of the result and the program terminates.
All computations are done with integers of fixed-width $W$ except the last step of applying the Chinese remainder theorem, where we use arbitrary precision integers to represent the counting.

\LinesNumberedHidden
\begin{algorithm}[!ht]
    \small
    \SetAlgoNoLine
    Let $P = 1$, $W$ be the integer width, vector $\chi = (0,1,2, \ldots, \alpha(G))$, matrix $X_{ij} = (\chi_i)^j$, where $i,j = 0, 1, \ldots, \alpha(G)$\;

    \While{true}{
        compute the largest prime $p$ that $\gcd(p, P) = 1$ and $p < 2^W$\;

        \For{$i=0\ldots\alpha(G)$}{
            $y_i \pmod p$ = ${\rm contract\_tensor\_network}(\chi_i\pmod p)$ \tcp*[l]{on $\text{GF}(p)$}
        }

        $A_p = (a_0, a_1, \ldots, a_{\alpha(G)}) \pmod p = {\rm gaussian\_elimination}(X, (y_0, y_1, \ldots, y_{\alpha(G)}) \pmod p) $\;

        $A_{P\times p} = {\rm chinese\_remainder}(A_P, A_p)$\;

        \If{$A_P = A_{P \times p}$}{
            \Return $A_P$ \tcp*[l]{converged}
        }
        $P = P \times p$\;
    }\caption{Computing the independence polynomial exactly without integer overflow}\label{alg:finitefield} 
\end{algorithm}

Alternatively, one can use a faster but less accurate Fourier transformation based method to fit this polynomial, which is detailed and benchmarked in \Cref{sec:benchmark}.

\section{Integer sequence formed by the number of independent sets}

We computed the number of independent sets on square lattices and King's graphs with our generic tensor network contraction algorithm on GPUs.
The tensor element type is the finite-field algebra so that we can reach an arbitrary precision.
We also computed the independence polynomial for these lattices up to size $30\times 30$ in our \href{https://github.com/GiggleLiu/NoteOnTropicalMIS/tree/master/data}{GitHub repository}.

\begin{table}[h]
\caption{The number of independent sets for square lattice graphs of size $L\times L$. This forms the integer sequence \href{https://oeis.org/A006506}{OEIS A006506}.
Here we only include two updated entries for $L=38,39$, which, to our knowledge, has not been computed before~\cite{Butera2014}.
}
\begin{center}
\scalebox{0.9}{
\begin{tabular}{|c| >{\centering\arraybackslash} p{0.95\linewidth}|}
 \hline $L$  & square lattice graphs \\
 \hline $38$ & 616 412 251 028 728 207 385 738 562 656 236 093 713 609 747 387 533 907 560 081 990 229 746 115 948 572 583 817 557 035 128 726 922 565 913 748 716 778 414 190 432 479 964 245 067 083 441 583 742 870 993 696 157 129 887 194 203 643 048 435 362 875 885 498 554 979 326 352 127 528 330 481 118 313 702 375 541 902 300 956 879 563 063 343 972 979\\
 \hline $39$ &  29 855 612 447 544 274 159 031 389 813 027 239 335 497 014 990 491 494 036 487 199 167 155 042 005 286 230 480 609 472 592 158 583 920 411 213 748 368 073 011 775 053 878 033 685 239 323 444 700 725 664 632 236 525 923 258 394 737 964 155 747 730 125 966 370 906 864 022 395 459 136 352 378 231 301 643 917 282 836 792 261 715 266 731 741 625 623 207 330 411 607\\
  \hline
\end{tabular}
}
\end{center}
\label{tbl:squaregrid}
\end{table}

\section{Performance benchmarks}\label{sec:benchmark}
We run a single thread benchmark on central processing units (CPU) Intel(R) Xeon(R) CPU E5-2686 v4 @ 2.30GHz, and its CUDA version on a GPU Tesla V100-SXM2 16G.
The results are summarized in \Cref{fig:benchmark}.
The graphs in all benchmarks are random three-regular graphs, which have treewidth that is asymptotically smaller than $|V|/6$~\cite{Fomin2006}.
In this benchmark, we do not include traditional algorithms for finding the MIS sizes such as branching~\cite{Tarjan1977, Robson1986} or dynamic programming~\cite{Courcelle1990, Fomin2013}.
To the best of our knowledge, these algorithms are not suitable for computing most of the solution space properties mentioned in this paper.
The main goal of this section is to show the relative computation time for calculating different solution space properties.

\begin{figure} 
    \centering
    \includegraphics[width=\textwidth, trim={0cm 0cm 0cm 0cm}, clip]{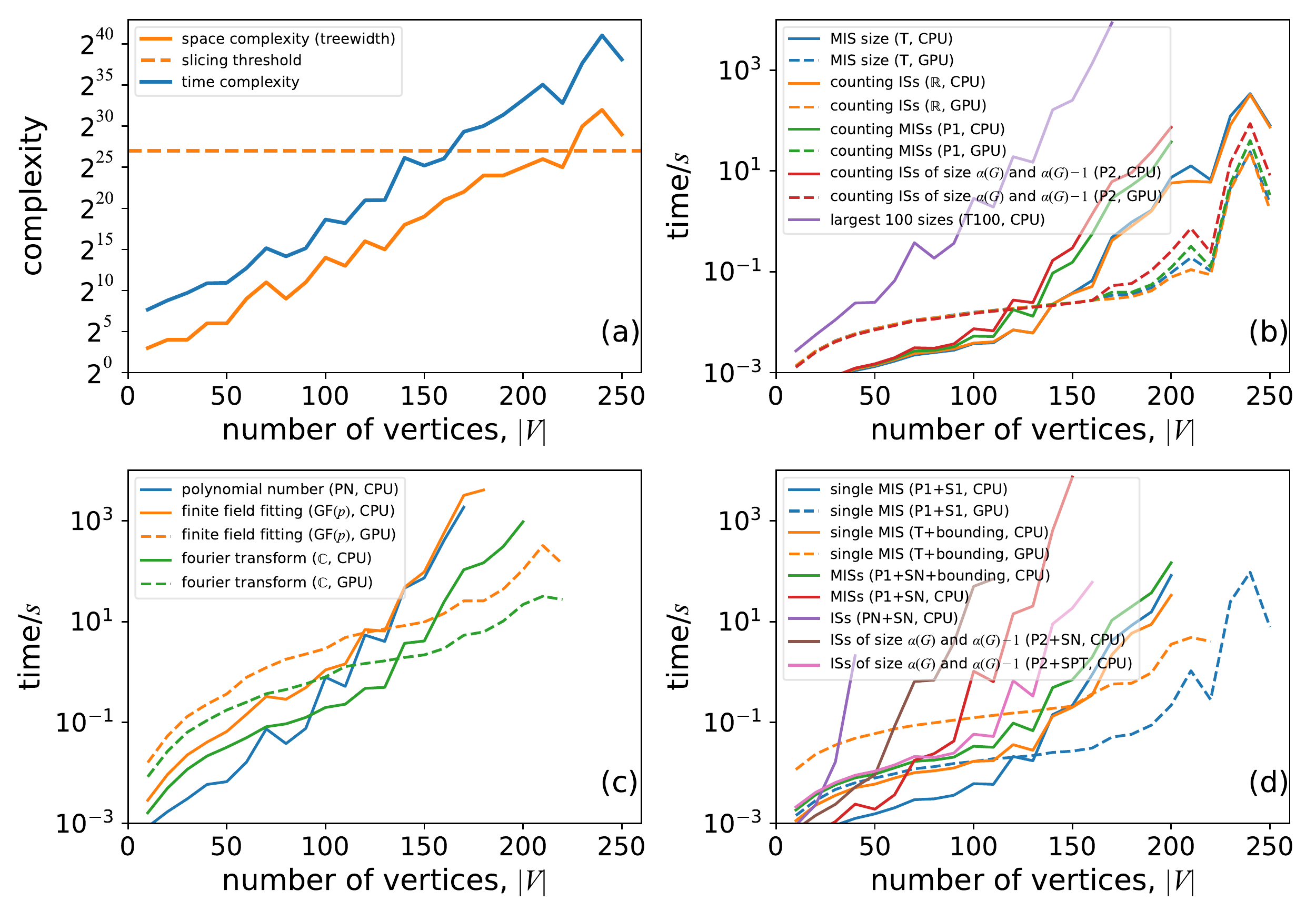}
    \caption{Benchmark results for computing different solution space properties of independent sets of random three-regular graphs with different tensor element types.
    The time in these plots only includes tensor network contraction, without taking into account the contraction order finding and just-in-time compilation time.
    Legends are properties, algebra, and devices that we used in the computation; one can find the corresponding computed solution space property in Table 1 in the main text.
    (a) time and space complexity versus the number of vertices for the benchmarked graphs.
    (b) The computation time for calculating the MIS size and for counting the number of all independent sets (ISs), the number of MISs,
        the number of independent sets having size $\alpha(G)$ and $\alpha(G)-1$, and finding $100$ largest set sizes.
    (c) The computation time for calculating the independence polynomials with different approaches.
    (d) The computation time for configuration enumeration, including single MIS configuration, the enumeration of all independent set configurations, all MIS configurations, all independent sets,
    and all independent set configurations having size $\alpha(G)$ and $\alpha(G)-1$.
    }
    \label{fig:benchmark}
\end{figure}

\Cref{fig:benchmark}(a) shows the time and space complexity of tensor network contraction for different graph sizes.
The contraction order is obtained using the local search algorithm in Ref.~\cite{Kalachev2021}.
If we assume our contraction-order finding program has found the optimal treewidth, which is very likely to be true, the space complexity is the same as the treewidth of the problem graph.
Slicing technique~\cite{Kalachev2021} has been used for graphs with space complexity greater than $2^{27}$ (above the yellow dashed line) to fit the computation into a 16GB memory.
One can see that all the computation times in \Cref{fig:benchmark} (b), (c), and (d) have a strong correlation with the predicted time and space complexity.
While in panel (d), the computation time of configuration enumeration and sum-product expression tree generation also strongly correlates with other factors such as the configuration space size.
Among these benchmarks, computational tasks with data types real numbers, complex numbers, or tropical numbers (CPU only) can utilize fast basic linear algebra subprograms (BLAS) functions.
These tasks usually compute much faster than ones with other element types in the same category.
Immutable data types with no reference to other values can be compiled to GPU devices that run much faster than CPUs in all cases when the problem scale is big enough.
These data types do not include those defined in \Cref{eq:polynomial}, \Cref{eq:set}, \Cref{eq:ext-tropical} and \Cref{eq:expr} or a data type containing them as a part.
In \Cref{fig:benchmark}(c), one can see the Fourier transformation-based method is the fastest in computing the independence polynomial,
but it may suffer from round-off errors (\Cref{sec:fft}). The finite field (GF$(p)$) approach is the only method that does not have round-off errors and can be run on a GPU.
In \Cref{fig:benchmark}(d), one can see the technique to bound the enumeration space in \Cref{sec:bounding} improves the performance for more than one order of magnitude in enumerating the MISs.
The bounding technique can also reduce the memory usage significantly, without which the largest computable graph size is only $\sim150$ on a device with 32GB main memory.

We show the benchmark of computing the maximal independent set properties on $3$-regular graphs in \Cref{fig:benchmark-maximal},
including a comparison to the Bron-Kerbosch algorithm from Julia package \href{https://github.com/JuliaGraphs/Graphs.jl}{Graphs}~\cite{Graphs}.
\Cref{fig:benchmark-maximal}(a) shows the space and time complexities of tensor contraction, which are typically larger than those for the independent set problem.
In \Cref{fig:benchmark-maximal}(b), one can see counting maximal independent sets are much more efficient than enumerating them, while our generic tensor network approach runs slightly faster than the Bron-Kerbosch approach in enumerating all maximal independent sets.

\begin{figure} 
    \centering
    \includegraphics[width=\textwidth, trim={0cm 0cm 0cm 0cm}, clip]{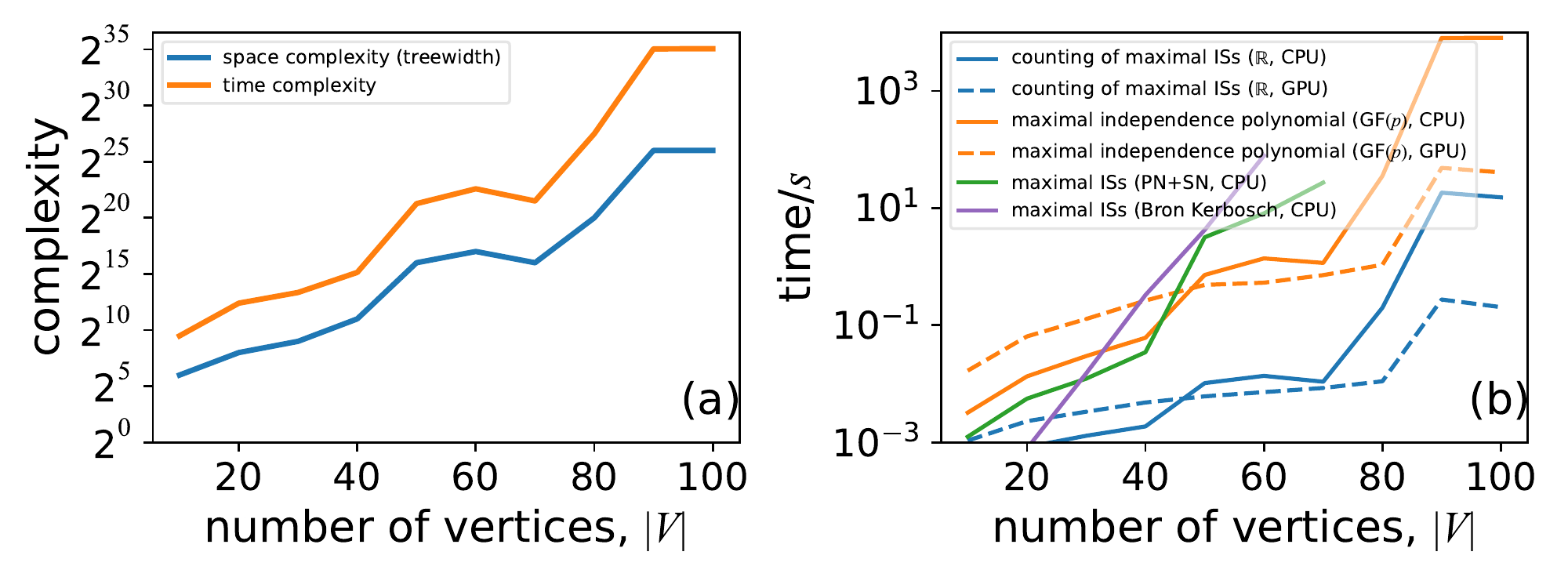}
    \caption{Benchmarks of computing different solution space properties of the maximal independent sets (ISs) problem on random three regular graphs at different sizes.
    (a) time and space complexity of tensor network contraction. 
    (b) The wall clock time for counting and enumeration of maximal ISs. 
    }
    \label{fig:benchmark-maximal}
\end{figure}

\section{An example of increased contraction complexity for the standard tensor network notation}\label{sec:tensorbad}

In the standard Einstein's notation for tensor networks in physics, each index appears precisely twice: either both are in input tensors (which will be summed over) or one is in an input tensor and another in the output tensor.
Hence a tensor network can be represented as an open simple graph, where an input tensor is mapped to a vertex, a label shared by two input tensors is mapped to an edge and a label that appears in the output tensor is mapped to an open edge.
A standard tensor network notation is equivalent to the generalized tensor network in representation power.
A generalized tensor network can be converted to a standard one by adding a $\delta$ tensors at each hyperedge, where a $\delta$ tensor of rank $d$ is defined as
\begin{equation}
    \delta_{i_1, i_2,\ldots,i_d} = \begin{cases}
        1, & i_1=i_2=\ldots =i_d,\\
        0, & \text{otherwise}.
    \end{cases}
\end{equation}

In the following example, we will show this conversion might increase the contraction complexity of a tensor network.
Let us consider the following King's graph. \\

\centerline{\begin{tikzpicture}
    \def\r{0.2}
    \foreach \x in {1,...,5}
        \foreach \y in {1,...,5}
            \filldraw[fill=black] (\x,\y) circle [radius=\r];
    \foreach \x in {1,...,5}
        \foreach \y in {1,...,4}{
            \draw [black,thick] (\x,\y) -- (\x,\y+1);
            \draw [black,thick] (\y,\x) -- (\y+1,\x);
        }
    \foreach \x in {1,...,4}
        \foreach \y in {1,...,4}{
            \draw [black,thick] (\x,\y) -- (\x+1,\y+1);
            \draw [black,thick] (\y+1,\x) -- (\y,\x+1);
        }
\end{tikzpicture}}

The generalized tensor network for solving the MIS problem on this graph has the following hypergraph representation, where we use different colors to distinguish different hyperedges.

\centerline{\begin{tikzpicture}
    \def\r{0.08}
    \def\a{0.07}
    \def\L{0.6}
    \def\l{0.1}
    \def\sql{0.24}
    \pgfmathsetseed{2}
    \foreach[evaluate={\cr=0.1+0.5*Mod(\x,2)}] \x in {1,...,5}
        \foreach[evaluate={\cg=0.1+0.3*Mod(\y,2); \cy=0.5-0.5*Mod(\y,2)}] \y in {1,...,5}{
            \edef\R{\pdfuniformdeviate 255}
            \edef\G{\pdfuniformdeviate 255}
            \edef\B{\pdfuniformdeviate 255}
            \xdefinecolor{MyColor}{RGB}{\R,\G,\B}
            \ifnum \x < 5
                \draw [thick, MyColor, opacity=1.0, line cap=round] (\x,\y) -- (\x+0.5,\y);
                \ifnum \y < 5
                \draw [thick, MyColor, opacity=1.0, line cap=round] (\x,\y) -- (\x+0.3,\y+0.3);
                \fi
                \ifnum \y > 1
                \draw [thick, MyColor, opacity=1.0, line cap=round] (\x,\y) -- (\x+0.3,\y-0.3);
                \fi
            \fi
            \ifnum \x > 1
                \draw [thick, MyColor, opacity=1.0, line cap=round] (\x,\y) -- (\x-0.5,\y);
                \ifnum \y < 5
                \draw [thick, MyColor, opacity=1.0, line cap=round] (\x,\y) -- (\x-0.7,\y+0.7);
                \fi
                \ifnum \y > 1
                \draw [thick, MyColor, opacity=1.0, line cap=round] (\x,\y) -- (\x-0.7,\y-0.7);
                \fi
            \fi
            \ifnum \y < 5
                \draw [thick, MyColor, opacity=1.0, line cap=round] (\x,\y) -- (\x,\y+0.5);
            \fi
            \ifnum \y > 1
                \draw [thick, MyColor, opacity=1.0, line cap=round] (\x,\y) -- (\x,\y-0.5);
            \fi
        }
    \foreach \x in {1,...,5}
        \foreach \y in {1,...,5}{
        }
    \foreach \x in {1,...,5}
        \foreach \y in {1,...,5}{
            \ifnum \y < 5
                \filldraw[fill=black] (\x,\y+0.5) circle [radius=\r];
                \filldraw[fill=black] (\y+0.5,\x) circle [radius=\r];
            \fi
        }
    \foreach \x in {1,...,4}
        \foreach \y in {1,...,4}{
            \filldraw[fill=black] (\x+0.3,\y+0.3) circle [radius=\r];
            \filldraw[fill=black] (\y+0.3,\x+0.7) circle [radius=\r];
        }
    \tikzset{decoration={snake,amplitude=.4mm,segment length=2mm,
                    post length=0mm,pre length=0mm}}
    \draw [decorate] (2.7, 0.5) -- (2.7, 5.5);
\end{tikzpicture}}
Vertex tensors are not shown here because they can be absorbed into an edge tensor and hence do not change the contraction complexity.
If we contract this tensor network in the column-wise order, the maximum intermediate tensor has rank $\sim L$, which can be seen by counting the number of colors at the cut.

By adding $\delta$ tensors to hyperedges, we have the standard tensor network represented as the following simple graph.

\centerline{\begin{tikzpicture}
    \def\r{0.08}
    \def\a{0.1}
    \foreach \x in {1,...,5}
        \foreach \y in {1,...,5}
            \filldraw[fill=black] (\x,\y) circle [radius=\r];
    \foreach \x in {1,...,5}
        \foreach \y in {1,...,4}{
            \filldraw[fill=black] (\x,\y+0.5) circle [radius=\r];
            \filldraw[fill=black] (\y+0.5,\x) circle [radius=\r];
            \draw [black,thick] (\x,\y) -- (\x,\y+1);
            \draw [black,thick] (\y,\x) -- (\y+1,\x);
        }
    \foreach \x in {1,...,4}
        \foreach \y in {1,...,4}{
            \filldraw[fill=black] (\x+0.3,\y+0.3) circle [radius=\r];
            \filldraw[fill=black] (\y+0.3,\x+0.7) circle [radius=\r];
            \draw [black,thick] (\x,\y) -- (\x+1,\y+1);
            \draw [black,thick] (\y+1,\x) -- (\y,\x+1);
        }
    \tikzset{decoration={snake,amplitude=.4mm,segment length=2mm,
                    post length=0mm,pre length=0mm}}
    \draw [decorate] (2.7, 0.5) -- (2.7, 5.5);
\end{tikzpicture}}

In this diagram, the additional $\delta$ tensors can have ranks up to $8$.
If we still contract this tensor network in a column-wise order, the maximum intermediate tensor has rank $\sim3L$, i.e. the space complexity is $\approx 2^{3L}$, which has a larger complexity than using the generalized tensor network notation.

\section{The discrete Fourier transform approach to computing the independence polynomial}\label{sec:fft}

In \Cref{sec:finitefield} in the main text, we show that the independence polynomial can be obtained by solving the linear equation \Cref{eq:lineareq} using the finite field algebra.
One drawback of using finite field algebra is that its matrix multiplication is less computationally efficient compared with floating-point matrix multiplication.
Here, we show an alternative method with standard number types but with controllable round-off errors.
Instead of choosing $x_{i}$ as random numbers, we can choose them such that they form a geometric sequence in the complex domain $x_j = r\omega^j$, where $r \in \mathbb{R}$ and $\omega = e^{-2\pi i/( \alpha(G)+1)}$. The linear equation thus becomes
\begin{equation}
\left(\begin{matrix}
1 & r & r^2 & \ldots & r^{\alpha(G)} \\
1 & r\omega & r^2\omega^2 & \ldots & r^{\alpha(G)} \omega^{\alpha(G)} \\
\vdots & \vdots & \vdots &\ddots & \vdots \\
1 & r\omega^{\alpha(G)} & r^2\omega^{2{\alpha(G)}} & \ldots & r^{\alpha(G)}\omega^{{\alpha(G)}^2}
\end{matrix}\right)
\left(\begin{matrix}
a_0 \\ a_1 \\ \vdots \\ a_{\alpha(G)}
\end{matrix}\right)
= \left(\begin{matrix}
y_0 \\ y_1 \\ \vdots \\ y_{\alpha(G)}
\end{matrix}\right).
\end{equation}

Let us rearrange the coefficients $r^j$ to $a_j$, the matrix on the left side becomes the discrete Fourier transform matrix. Thus, we can obtain the coefficients by inverse Fourier transform $\vec a_r = {\rm FFT^{-1}}(\omega) \cdot \vec y$, where $(\vec a_r)_j = a_j r ^j$.
By choosing different $r$, one can obtain better precision for small $j$ by choosing $r<1$ or large $j$ by choosing $r>1$.

\section{Computing maximum sum combination}\label{sec:maxsum}
Given two sets $A$ and $B$ of the same size $n$.
It is known that the maximum $n$ sum combination of $A$ and $B$ can be computed in time $O(n\log(n))$.
The standard approach to solve the sum combination problem requires storing the variables in a heap --- a highly dynamic binary tree structure that can be much slower to manipulate than arrays.
In the following, we show an algorithm with roughly the same complexity but does not need a heap.
This algorithm first sorts both $A$ and $B$ and then uses the bisection to find the $n$-th largest value in the sum combination.
The key point is we can count the number of entries greater than a specific value
in the sum combination of $A$ and $B$ in linear time.
As long as the data range is not exponentially large, the bisection can be done in $O(\log(n))$ steps, giving the time complexity $O(n\log(n))$.
We summarize the algorithm as in \Cref{alg:sumcombination}.

\LinesNumberedHidden
\begin{algorithm}[!ht]
    \SetKwProg{Fn}{function}{}{end}
    \small
    \SetAlgoNoLine
    Let $A$ and $B$ be two sets of size $n$\;

    \tcp{sort $A$ and $B$ in ascending order}
    $A \gets {\rm sort}(A)$\;

    $B \gets {\rm sort}(B)$\;

    \tcp{use bisection to find the $n$-th largest value in sum combination}

    ${\rm high} \gets A_n+B_n$\;

    ${\rm low} \gets A_1+B_n$\;

    \While{true}{
        ${\rm mid} \gets ({\rm high} + {\rm low}) / 2$\;

        $c \gets \texttt{count\_geq}(n, A, B, {\rm mid})$\;

        \uIf{$c > n$}{
            ${\rm low} \gets {\rm mid}$\;
        }\uElseIf{$c = n$}{
            \Return \texttt{collect\_geq}$(n, A, B, {\rm mid})$\;
        }\Else{
            ${\rm high} \gets {\rm mid}$\;
        }
    }

    \Fn{{\rm \texttt{count\_geq}}($n$, $A$, $B$, $v$)}{
        $k \gets 1$\; \tcp*[l]{number of entries in $A$ s.t. $a+b\geq v$}

        $a \gets A_{n}$\; \tcp*[l]{the smallest entry in $A$ s.t. $a+b\geq v$}

        $c \gets 0$\; \tcp*[l]{the counting of sum combinations s.t. $a+b\geq v$}

        \For{$q$ = $n, n-1 \ldots 1$}{
            $b \gets B_{n-q+1}$\;

            \While{$k < n$ \textbf{and} $a+b \geq v$}{
                $k \gets k+1$\;

                $a \gets A_{n-k+1}$\;
            }

            \uIf{$a+b \geq v$}{
                $c \gets c + k$\;
            }\Else{
                $c \gets c + k-1$\;
            }
        }
        \Return c\;
    }







    \caption{Fast sum combination without using heap}\label{alg:sumcombination} 
\end{algorithm}

In this algorithm, function \texttt{collect\_geq} is similar the \texttt{count\_geq} except the counting is replace by collecting the items to a set.
Inside the function \texttt{count\_geq}, variable $k$ monotoneously increase while $q$ monotoneously decrease in each iteration and
the total number of iterations is upper bounded by $2n$.
Here for simplicity, we do not handle the special element $-\infty$ in $A$ and $B$ and the potential degeneracy in the sums.
It is nevertheless important to handle them properly in a practical implementation.

\section{Technical guides}\label{sec:technical}

This appendix covers some technical guides for efficiency, including an introduction to an open-source package \href{https://github.com/QuEraComputing/GenericTensorNetworks.jl}{GenericTensorNetworks}~\cite{GenericTensorNetworks} implementing the algorithms in this paper and the gist about how this package is implemented.
One can install \texttt{GenericTensorNetworks} in a Julia REPL, by first typing \colorbox{lightgray}{\texttt{]}} to enter the \texttt{pkg>} mode and then typing
\begin{lstlisting}
pkg> add GenericTensorNetworks
\end{lstlisting}
followed by an \colorbox{lightgray}{\texttt{<ENTER>}} key.
To use it for solving solution space properties, just go back to the normal mode (type \colorbox{lightgray}{\texttt{<BACKSPACE>}}) and type

\begin{lstlisting}
julia> using GenericTensorNetworks, Graphs

julia> # using CUDA

julia> solve(
           IndependentSet(
               Graphs.random_regular_graph(20, 3);
               optimizer = TreeSA(),
               weights = NoWeight(),
               openvertices = ()
           ),
           GraphPolynomial();
           usecuda=false
       )
0-dimensional Array{Polynomial{BigInt, :x}, 0}:
Polynomial(1 + 20*x + 160*x^2 + 659*x^3 + 1500*x^4 + 1883*x^5 + 1223*x^6 + 347*x^7 + 25*x^8)
\end{lstlisting}

Here the main function \texttt{solve} takes three inputs: the problem instance of type \texttt{IndependentSet}, the property instance of type \texttt{GraphPolynomial} and an optional key word argument \texttt{usecuda} to decide to use GPU or not.
If one wants to use GPU to accelerate the computation, ``\texttt{using CUDA}'' must be uncommented.
The problem instance takes four arguments to initialize: the only positional argument is the graph instance one wants to solve, the keyword argument \texttt{optimizer} is for specifying the tensor network optimization algorithm, the keyword argument \texttt{weights} is for specifying the weights of vertices as either a vector or \texttt{NoWeight()}, and the keyword argument \texttt{openvertices} is for specifying the degrees of freedom not summed over.
Here, we use the \texttt{TreeSA} method as the tensor network optimizer, and leave \texttt{weights} and \texttt{openvertices} as  default values. The \texttt{TreeSA} algorithm, which was invented in Ref.~\cite{Kalachev2021}, performs the best in most of our applications.
The first execution of this function will be a bit slow due to Julia's just-in-time compilation.
After that, the subsequent runs will be faster.
The following diagram lists possible combinations of input arguments, where functions in the \texttt{Graph} are mainly defined in the package \texttt{Graphs}, and the rest can be found in \texttt{GenericTensorNetworks}.

\centerline{\includegraphics[width=0.8\textwidth, trim={0cm 1cm 0cm 1cm}, clip]{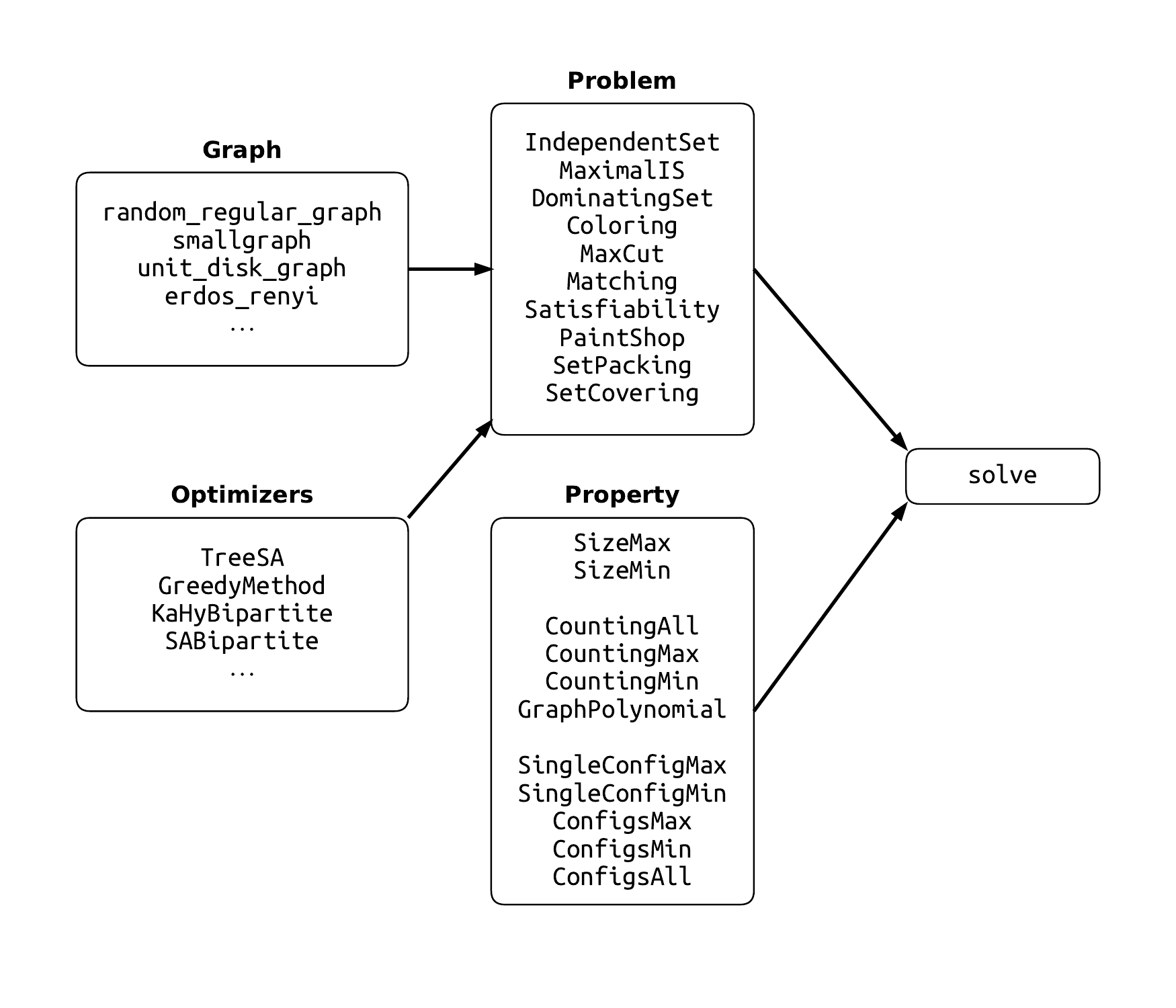}}

The code we will show below is a gist of how the above package was implemented, which is mainly for pedagogical purpose.
It covers most of the topics in the paper without caring much about performance.
It is worth mentioning that this project depends on multiple open source packages in the Julia ecosystem:

\begin{description}
	\item[\href{https://github.com/under-Peter/OMEinsum.jl}{OMEinsum} and \href{https://github.com/TensorBFS/OMEinsumContractionOrders.jl}{OMEinsumContractionOrders}] are packages providing the support for Einstein's (or tensor network) notation and state-of-the-art algorithms for contraction order optimization, which includes the one based on KaHypar+Greedy~\cite{Gray2021, Pan2021} and the one based on local search~\cite{Kalachev2021}.
	\item[\href{https://github.com/TensorBFS/TropicalNumbers.jl}{TropicalNumbers} and \href{https://github.com/TensorBFS/TropicalGEMM.jl}{TropicalGEMM}] are packages providing tropical number and efficient tropical matrix multiplication.
	\item[\href{https://github.com/JuliaGraphs/Graphs.jl}{Graphs}] is a foundational package for graph manipulation in the Julia community.
	\item[\href{https://github.com/JuliaMath/Polynomials.jl}{Polynomials}] is a package providing polynomial algebra and polynomial fitting.
	\item[\href{https://github.com/scheinerman/Mods.jl}{Mods} and the \href{https://github.com/JuliaMath/Primes.jl}{Primes}] package providing finite field algebra and prime number manipulation.
\end{description}

They can be installed in a similar way to \texttt{GenericTensorNetworks}.
After installing the required packages, one can open a Julia REPL, and copy-paste the following code snippet into it.

\lstinputlisting[breaklines]{demo.jl}

\end{document}